\documentclass[11pt]{article}
\usepackage{mathptmx}
\usepackage{amsmath,amsthm,amssymb}
\usepackage{graphicx}
\usepackage{float}
\usepackage{lmodern,microtype}
\usepackage[margin=1in]{geometry}

\usepackage[utf8]{inputenc}
\usepackage[round]{natbib}
\newcommand{\cmmnt}[1]{}
\usepackage{setspace}
\usepackage{tabularx}
\usepackage{subcaption}
\usepackage{nicefrac}
\usepackage{etoolbox}
\usepackage{xcolor}
\usepackage{tikz}
\usepackage{hyperref}
\newcommand{\R}{\ensuremath{\mathbb R}}

\usepackage{bbm}
\DeclareMathOperator{\Cov}{\mathbbm{Cov}}

\usepackage{amsmath,amssymb}

\usepackage{url}

\def\co{{\rm co}\hskip 1pt}

\def\Var{\mathbb{V}} 
\def\E{{\mathbb E}}

\def\proj{{\rm proj}}
\def\proj0{{\rm proj}_{\co M_0}}

\def\R{{\mathbb{R}}}

\newtheorem{theorem}{Theorem}
\newenvironment{theo}{\vskip.2cm\begin{theorem}}%
{\end{theorem}\vskip.2cm}
\newtheorem{lemma}{Lemma}
{\end{lemma}\vskip.2cm}
\newtheorem{corolla}{Corollary}
\newenvironment{coro}{\vskip.2cm\begin{corolla}}%
{\end{corolla}\vskip.2cm}
\newtheorem{defini}{Definition}
{\end{defini}\vskip.2cm}
\newtheorem{proposi}{Proposition}
{\end{proposi}\vskip.1cm}
\newtheorem{cla}{Claim}
{\end{cla}\vskip.2cm}
\newtheorem{assump}{Assumption}
\newenvironment{assu}{\vskip.1cm\begin{assump}}%
{\end{assump}\vskip.1cm}
\newtheorem{hypoth}{Assumption}

{\end{hypoth}\vskip.1cm}

\usepackage[english]{isodate}
\graphicspath{ {./Image/} {./Images/} }
\title{\textbf{Shock Propagation and Macroeconomic Fluctuations}}

\author{Antoine Mandel\thanks{
          Paris School of Economics, Universit\'{e} Paris 1 Pantheon-Sorbonne. Maison des Sciences \'{E}conomiques, 106-112 Boulevard de l'h\^{o}pital 75647 Paris Cedex 13, France.} \and   Vipin P. Veetil\thanks{Economics Area, Indian Institute of Management Kozhikode, Kerala 673 570, India.}}

\date{\small\printdayoff\today}
\begin{document}
\maketitle
\begin{abstract}
\setstretch{1.3}
We study how idiosyncratic firm-level shocks generate aggregate volatility and tail risk when they propagate through a production network under overlapping adjustment: new productivity draws arrive before the economy reaches the static equilibrium associated with earlier draws. Each innovation generates a `productivity wave' that mixes and dissipates over time as it travels through the production network. Macroeconomic fluctuations emerge from the interference between these waves of different vintages. The interference between these waves is governed by the dominant transient eigenvalue of the production network, and therefore so are the macroeconomic fluctuations they generate. In such a dynamic regime, the tail of the degree distribution is a markedly weaker determinant of macro fluctuations than in the fully adjusted static benchmark. And the macroeconomic significance of the degree-heterogeneity of production networks cannot be known without knowing the rate at which the economy converges to equilibrium or equivalently the spectral properties of the production network. More concretely, once we permit the time-averaging of shocks, granular shocks may account for only a small fraction of the empirically observed aggregate volatility.
\end{abstract}
\vspace{1cm}
\noindent
\textbf{JEL Codes} E30, C67, D57.
\\
\textbf{Key Words} Aggregate Volatility, Productivity Shocks, Production Network, Rate of Convergence to Equilibrium, Spectral Gap.

\newpage
\setstretch{1.4}
\section{Introduction}
What determines the contribution of granular shocks to aggregate volatility? Economists have for decades recognized that the macroeconomic significance of granular shocks depends on the rate of decay of aggregate volatility with an increase in the size of the economy.\footnote{If the shocks are independent across firms, aggregate volatility falls in inverse proportion to the square root of the number of firms, by the Central Limit Theorem. But when the firms interact with each other on a production network, the shocks are no longer independent as the shock to one firm affects its input sellers and output buyers. In a network setting, therefore, aggregate volatility falls more slowly, at a rate governed by the power-law exponent of the degree distribution of the production network, and the fatter the tail of the degree distribution the slower the decay. See \citet{acemoglu2012network} for a derivation of the rate of decay of aggregate volatility within a network setting and \citet{mandel2023voltest} for estimates of the power-law exponent for the US production network.} Few have however considered how the contribution of granular shocks to aggregate volatility could depend on the rate at which the economy converges to equilibrium. It is indeed true that one source of the dampening of the impact of productivity shocks on aggregate output is the fact that an increase in the productivity of one firm partially nullifies the decrease in the productivity of another. There is however a second, wholly distinct, source of dampening. One draw of productivity shocks can dampen the impact of the previous draw if the new shocks arrive before the economy has reached equilibrium. To understand the origins of such dampening, let us begin by noting the simple fact that each draw of firms' productivities defines a specific level of aggregate output. If each round of productivity shocks arrives after the economy reaches equilibrium, then aggregate volatility is a measure of the differences between equilibrium outputs. If however the shocks arrive \emph{before} the economy reaches equilibrium, then aggregate volatility can no longer be measured by the differences between the equilibrium outputs pertaining to different productivity draws. Rather, aggregate volatility must be measured by the changes in the realized temporal sequence of outputs. The distribution of this temporal sequence of outputs can be different, very different, from the distribution of equilibrium outputs for any given sequence of productivity draws.

\smallskip
Consider a network economy, with firms as buyers and sellers of intermediate inputs, that is periodically hit by idiosyncratic productivity shocks. After the economy is hit by a draw of productivities, it begins moving towards the equilibrium consistent with the draw. This journey is characterized by a reallocation of intermediate inputs: firms that received a positive productivity shock embark on the process of increasing output by drawing more inputs and firms that received a negative productivity shock slowly release resources by contracting output.\footnote{Within a network setting, a firm that experiences a positive productivity shock increases output for every given level of input. By posting a lower price it then delivers more physical input to its customers at unchanged nominal outlay, so the expansion travels downstream.} These granular expansions and contractions cause aggregate output to journey from its pre-shock value to the value consistent with the productivity draw, with the direction of the journey being dictated by the cross-section of the production shocks. Such journeys though often embarked upon are seldom completed. In fact, a new productivity draw that arrives before the economy reaches the equilibrium pertaining to the old draw is capable of reversing the very direction of travel. If, for instance, the new draw is such that highly connected firms become less productive, then aggregate output may descend well before having reached the peak pertaining to the old productivity draw. And there is no guarantee that this descent will be completed. This journey too may be abandoned like the one before it. As output descends towards the trough defined by the last productivity draw, another draw might prod it to begin the ascent towards yet another new equilibrium output.\footnote{In the limit, the economy is perennially going everywhere, forever getting nowhere.} What all of this means is that in a system with periodic productivity shocks, the differences among the outputs realized at each time step will tend to be smaller than the differences between the peaks and troughs defined by the equilibrium outputs pertaining to the shocks. The temporal sequence of output tends not to reach the peaks and troughs flagged by static equilibrium because its movements are open to reversals initiated by a new shock before having reached the destination pointed to by an old shock. Put differently, realized macroeconomic fluctuations can be lower, and perhaps significantly so, than the equilibrium benchmark once one permits the time-averaging of granular shocks.

\smallskip
We formalize this notion of `time averaging' of granular shocks in a production network by embedding an i.i.d. productivity process into the decentralized adjustment dynamics of \citet{gualdi2016} and \citet{veetil2026costinflation}. Those frameworks extend the static network environment of \citet{acemoglu2012network} to a setting with explicit time, and we take from the second the equilibrium and stability results on which our comparisons rest. The economy consists of a finite set of firms linked as buyers and sellers of intermediate inputs, with no household, so that money circulates in a closed circuit and decreasing returns supply the leak through which disturbances dissipate. Technologies are Cobb-Douglas with Hicks-neutral firm productivity, and at each date the cross section of log productivities is drawn i.i.d. from a mean-zero distribution with constant variance. A key distinction from the equilibrium benchmark is informational and institutional: firms do not solve for general-equilibrium prices after each draw. Instead, they update posted prices and input demands using only locally observed demand and supply conditions, through a simple non-t\^{a}tonnement procedure. When productivities are held fixed, this bottom-up adjustment converges to the static equilibrium associated with that productivity profile. When new draws arrive before convergence, the realized allocation inherits partially completed adjustments from earlier draws. We measure aggregate output with a fixed nonnegative aggregation vector and establish the population risk comparisons for every such vector. Household-derived GDP or consumption weights, once expressed on the modeled quantities, are members of this class, so a household can supply the weights without being part of the modeled economy.

\smallskip
This dynamic structure admits a reduced-form representation that is convenient for aggregation. The system's state evolves as a linear Markov process driven by the shock stream, with propagation governed by iterates of the production network operator, and decreasing returns to scale make the process stable. This representation splits aggregate output into two channels. The first is a granular (Perron) channel, through which highly connected firms move the aggregate directly. The second is a reallocation (transient) channel, through which the network mixes shocks across firms. Each period's innovation launches a new `productivity wave' through the network. Its aggregate footprint is spread over future dates and decays geometrically at two rates, the granular component at a rate pinned down by returns to scale and the reallocation component at a rate pinned down by the dominant transient eigenvalue of the production network.\footnote{The Perron mode of the production network moves the log output of every firm alike and loads on the productivity shocks in proportion to equilibrium size, which is why we call it the granular channel. Under decreasing returns it decays at the rate given by returns to scale. The next eigenvalue governs the slowest-decaying transient mode and therefore controls the rate at which relative outputs converge to their fully adjusted allocation.} In this environment, realized aggregate output at any date can be viewed as a superposition of partially propagated waves from past vintages. One can therefore define two distinct aggregate output sequences for a given realized shock history: (i) the static series obtained by mapping each contemporaneous shock into its fully adjusted equilibrium outcome, and (ii) the dynamic series produced by overlapping adjustment. Throughout the paper, we refer to volatility and tail-risk objects computed from these two series as static and dynamic aggregate volatility and tail risk, respectively.

\smallskip
 Once shocks propagate over time, realized aggregates are shaped not only by \emph{who is large} in the network, but also by \emph{how quickly the network mixes and dissipates disturbances} from one date to the next. This change in perspective delivers a significant result: for any measure of aggregate output that loads on the network's transient modes, the macroeconomic significance of fat tails in the degree distribution depends on the spectral gap of the production network. When mixing is sufficiently slow, the dynamic series aggregates many partially propagated vintages at any given date, and this intertemporal superposition can wash out cross-sectional concentration. Put differently, degree heterogeneity is most visible in aggregate risk when mixing is sufficiently fast, that is, when the spectral gap is large, so that each vintage is processed largely in isolation and the economy is repeatedly exposed to the full cross-sectional amplitude of the current draw.

\smallskip
We establish these comparisons not only in large-economy asymptotics but also at finite horizons. For a finite number of shocks, we characterize the sampling distributions of static and dynamic realized volatility and show that the static first-order stochastically dominates the dynamic. At the population level the same gap between static and dynamic volatility is exponentially magnified in tail risk. Time dynamics, that is, dampen tail risk much more sharply than they dampen volatility, and the reason is rather simple. Realized output is a superposition of partially propagated waves from many vintages. Volatility is driven by typical fluctuations of this superposition, whereas tail events require an unusually sustained alignment of waves, a rare failure of cancellation across vintages. Overlapping adjustments make such alignment harder, and therefore compress the tails disproportionately. All of this is to say that the problem of macro risks generated by firm-level innovations cannot be reduced, even to a first-order approximation, to measuring the tails of the degree distribution of the production network. Instead, it is jointly governed by cross-sectional concentration, by returns to scale, by the production network's spectral gap, and by the weights through which aggregate output is measured.

\smallskip
We measure the empirical consequences of our theoretical claims with computational experiments on a reconstructed buyer--seller network of the US economy. The network aggregates to the empirically observed input-output tables and is faithful to the observed distribution of firm sizes. We compute the static aggregate volatility by running the system to equilibrium for each set of shocks. We compute dynamic aggregate volatility by letting the system run forward in time with a single set of productivity shocks at each time step. We find that dynamic volatility in this synthetic US economy is an order of magnitude lower than its static counterpart. Which is to say that our theoretical results are of empirical significance: once granular shocks interact with each other over time, they explain very little of the observed macroeconomic fluctuations.

\subsection{Related literature}
\label{sec:literature}

Our paper belongs to the literature that runs from \citet{lucas1977} through \citet{gabaix2011} and \citet{acemoglu2012network} and asks when idiosyncratic shocks add up to sizeable aggregate fluctuations. That literature is concerned with averaging across firms, and with how production linkages and size heterogeneity slow such averaging. We are concerned with averaging across time. When new productivity draws arrive before the economy has adjusted to earlier draws, the waves launched by successive draws interfere and partly cancel, a source of attenuation that an economy jumping from one equilibrium to the next does not have.\footnote{The idea that adjustment takes time, and that treating the economy as a sequence of instantaneous equilibria can miss important dynamics, has a long history, see for instance \citet{fisher2013stability} and \citet{leijonhufvud1993towards}.}

\smallskip
A large literature since \citet{acemoglu2012network} enriches the propagation of shocks with CES technologies, nominal rigidities, international linkages, and financial frictions.\footnote{A non-exhaustive list includes \citet{carvalho2014micro}, \citet{acemoglu2015networks}, \citet{diGiovanni2014firms}, \citet{carvalho2019primer}, \citet{baqaee2019micro}, and \citet{baqaee2022micro}.} These papers evaluate the economy at the equilibrium associated with each draw of shocks, so that aggregate fluctuations are governed by firm sizes and buyer--seller linkages alone. Some, like \citet{pasten2018price}, study how a given shock interacts with nominal frictions as it propagates through the network, but there too new shocks arrive only after the old ones are fully processed. The overlap between shocks of different vintages, which is what we study, is absent from this entire program. Our results speak to all of it, because whenever propagation takes time the mapping from micro shocks to macro fluctuations depends not only on who is large but also on how quickly the network mixes and dissipates disturbances.\footnote{A variant of our `wave view' of the transmission of shocks can be found in \citet{veetil2026costinflation}, which uses spectral decomposition to study the relative price effects of inflation. Also see \citet{liu2020}, \citet{Iyetomi2011}, and \citet{McNerney2025}.}

\smallskip
The interaction between shocks of different vintages is an old question, but its formal analysis in the context of granular shocks is recent.\footnote{One finds an early discourse on the matter in \citet{Hicks1985Methods}, though Hicks was less concerned with granular shocks.} \citet{Bouchaud2014}, \citet{Mandel2015}, and \citet{Dessertaine2022} study granular shocks as they propagate through a network over time, and they conclude that propagation over time amplifies the macroeconomic impact of granular shocks. The most explicit statement is in Section 3.4 of \citet{Dessertaine2022}, where aggregate volatility rises as convergence to equilibrium slows. Our conclusions differ for two reasons. The first is that this literature is concerned with markets that do not clear and with the inventories that accumulate as a result, whereas in our economy markets clear at every date and there are no inventories. We change the classical benchmark along one dimension only, the arrival of new shocks before the old ones are fully processed, and our results can therefore be read to mean that, absent inventory-style disequilibrium channels, the production network need not by itself amplify granular shocks through time. The second reason is that this literature measures aggregate volatility as dispersion in the level of output, whereas we measure it, as the empirical literature does, as dispersion in growth rates. When shocks do not overlap the two measures are proportional to each other. When shocks overlap they can move in opposite directions, because slower convergence spreads each shock over more dates, so that levels accumulate more past shocks while consecutive levels lie closer together. Our conclusion that growth volatility falls as convergence slows therefore does not contradict theirs, and Appendix~\ref{app:levels_vs_increments} shows that in our economy level dispersion also rises as convergence slows, though more slowly than in the fully processed benchmark.

\smallskip
Much of what we have said relates to the old ``correspondence debate'' on when the predictions of a static equilibrium analysis remain informative about an economy whose adjustments take time. Samuelson's correspondence principle justifies comparative statics when the equilibrium is dynamically stable under a plausible local law of motion \citep{Samuelson1941StabilityComparativeStatics,Samuelson1942StabilityLinearNonlinear}, and a closely related line of work studies the stability of general equilibrium adjustment mechanisms directly, tatonnement included \citep{ArrowBlockHurwicz1959StabilityCE2}. Our emphasis differs from theirs, because even when the equilibrium is locally stable, the empirical content of the static approximation depends on how fast the economy converges relative to how often shocks arrive. When shocks arrive before the economy has relaxed, the appropriate comparison is no longer between two equilibria, but between the equilibrium mapping and the dynamics of deviations around it, and it is this comparison, at the time scale on which volatility is measured, that the paper isolates.

\subsection{Organization of the paper}

The rest of the paper proceeds as follows. Section~\ref{sec:model} sets up the production network, the static equilibrium that each productivity draw defines, and the adjustment process through which firms move towards that equilibrium. Section~\ref{sec:volatility_leontief} compares static and dynamic aggregate volatility and tail risk when shocks arrive without end. We first let each shock travel a fixed number of steps through the network before the next one arrives, which shows where the overlap between shocks of different vintages comes from. We then let the network itself decide how far each shock has travelled when the next one arrives, and obtain closed forms for the attenuation of volatility and of tail risk. The section closes by asking how much of this attenuation an observer sees, and shows that an observer who weights firms by their equilibrium size sees no trace of the network's mixing in aggregate volatility, though other observers may see more or less. Section~\ref{sec:eigenmodes} makes the same comparison for a finite number of shocks, and shows that static volatility exceeds dynamic volatility not only on average but in distribution. Section~\ref{sec:spectral_gap_mediates_fat_tails} asks how much of the static granular benchmark survives once shocks overlap in time, and shows that the macroeconomic visibility of fat tails in the degree distribution depends on how fast the economy converges to equilibrium, so that fatter tails need not raise aggregate volatility. Section~\ref{sec:abm} reports the computational experiments on the reconstructed US production network, and Section~\ref{sec:conclusion} concludes.

\smallskip
All proofs are in Appendix~\ref{app:proofs}. Appendix~\ref{app:reversal} studies the episodes in which dynamic output moves by more than static output from one date to the next, derives how often they occur, shows that in large economies they are of vanishing size, and extends the results to one-sided shocks. Appendix~\ref{app:levels_vs_increments} shows that nearly all of our comparisons survive when volatility is measured in levels rather than in growth rates. Appendix~\ref{app:fat_tails_degree_approx} collects the derivations behind Section~\ref{sec:spectral_gap_mediates_fat_tails}, Appendix~\ref{app:abm_eigenvalue} records the transient spectrum of the reconstructed networks, and Appendix~\ref{app:notation} collects the notation.

%==========================================================
\section{The Model}
\label{sec:model}
%==========================================================
Before turning to the model we fix some notation. For a random variable $X$, let $\E[X]$ and $\Var[X]$ denote expectation and variance, and let $\Cov[X,Y]$ denote covariance. We use $X^t$ for powers, and keep $X_t$ and $X^{[t]}$ for indexing. Scalars are in lowercase ($x$), vectors in bold lowercase ($\mathbf x$), and matrices in bold uppercase ($\mathbf X$). For any firm-level object $x$, the variable $x_{i,t}$ denotes firm $i$'s value at calendar date $t$. Let $\mathbf 1$ be the all-ones vector in $\R^n$.\footnote{Throughout, we condition on the production network and suppress this conditioning in notation.}

%====================================================
\subsection{Economic environment}
\label{subsec:environment}
%====================================================

The economy is a production network of $n$ firms indexed by $N:=\{1,\ldots,n\}$, and it contains no
household.  Our model is that of \citet{veetil2026costinflation}, to which we add Hicks-neutral productivity shocks.\footnote{The
economy consists of firms linked through intermediate-input purchases, and money circulates among
firms according to the spending rule of that model.  Aggregate output is measured with a fixed aggregation vector, and we admit a class of such vectors, household-derived weights included, without requiring a household to select one.  The companion paper indexes input shares by the transpose
of the convention used here.}

\smallskip
Firm $i$'s spending shares across its suppliers are $a_{ji}\ge 0$, where $a_{ji}$ is the share of firm
$i$'s expenditure devoted to input $j$.  Each firm spends its available balance on intermediate inputs,
a behavioural rule we maintain throughout, and conditional on that budget its Cobb--Douglas
technology assigns the share $a_{ji}$ to input $j$.  The shares therefore sum to one for each buyer,
\begin{equation}
a_{ji}\ge 0,
\qquad
\sum_{j\in N} a_{ji}=1
\qquad\text{for each }i\in N,
\label{eq:colsum_one}
\end{equation}
so the matrix $\mathbf A=(a_{ji})_{j,i\in N}$ is column-stochastic.  Money therefore circulates in a
closed circuit among firms, and the network is the only friction in the economy.

We maintain the network conditions of \citet[Assumption~1]{veetil2026costinflation} throughout, of which two are used directly. The matrix $\mathbf A$ is primitive, so the circuit is ergodic and admits a unique positive stationary
distribution of balances, and its subdominant eigenvalue is separated in modulus from the remainder of
the spectrum.

\smallskip
Firm $j$ produces with a Cobb--Douglas technology of common returns to scale $\varsigma\in(0,1)$ and
Hicks-neutral productivity $z_{j,t}>0$,
\begin{equation}
q_{j,t+1}
=
z_{j,t}\prod_{i\in N} x_{ij,t}^{\,\varsigma a_{ij}},
\qquad j\in N.
\label{eq:prod_drs}
\end{equation}
Here $x_{ij,t}$ is the quantity of input $i$ that firm $j$ purchases at date $t$.  Money moves one step ahead of goods, since an order placed at date $t$ is settled at the price its seller then posts, and the
inputs it delivers produce output at date $t+1$.

Since the shares $a_{ij}$ sum to one across suppliers,
the input elasticities sum to $\varsigma$, and the constant-returns case is the boundary $\varsigma=1$.
Decreasing returns make the economy dissipative. Disturbances to output propagate through the network by the operator $\mathbf W=\varsigma\mathbf A^\top$, whose spectral radius is $\varsigma<1$, so the effect of a fixed disturbance on output dissipates over time.  This restriction concerns real propagation only.  The nominal spending
circuit conserves the money stock, because every firm spends its whole balance, so $\varsigma$ is a
production elasticity and no fraction of expenditure is withheld from the circuit.  The restriction $\varsigma<1$ does real work. At $\varsigma=1$ the real block
$\mathbf I-\varsigma\mathbf A^\top$ is singular, because the Perron root of $\mathbf A^\top$ is one, and
the model no longer determines a unique level system.\footnote{The distinction between a determinate
level system and a balanced growth path is the one drawn by the Hawkins--Simon viability condition and
by the von Neumann growth model.  See \citet{hawkins1949}, \citet{vonneumann1945}, and
\citet{sraffa1960}, and the discussion in \citet{veetil2026costinflation}.}

%==========================================================
\subsection{Equilibrium, transient dynamics, and stability}
\label{subsec:equil_conv}
%==========================================================

We first recall from \citet{veetil2026costinflation} that the economy has a unique equilibrium when productivities are held fixed, and that it converges to that equilibrium from any initial condition. We then let productivities change from date to date and derive the law of motion of output.

\smallskip
Each firm $j$ enters date $t$ with a nominal balance $m_{j,t}$ that finances its input purchases.  Each
firm spends that balance in full, and with Cobb--Douglas technologies the expenditure shares
conditional on that budget are constant, so firm $i$ spends $a_{ji}m_{i,t}$ with supplier $j$, and the revenue this spending generates becomes that supplier's balance at the next date.  This full-spending rule
closes the monetary circuit, and balances evolve autonomously through it,
\begin{equation}
\mathbf m_{t+1}=\mathbf A\,\mathbf m_t.
\label{eq:money_circuit}
\end{equation}
Primitivity of $\mathbf A$ gives a unique positive stationary distribution $\mathbf m^\ast$, normalized
by $\mathbf 1^\top\mathbf m^\ast=1$, which attracts every initial balance vector of equal total mass.
Each seller clears its market by posting the price that equates the nominal demand it faces with its
current output, $p_{i,t}=\mathcal D_{i,t}/q_{i,t}$, and the settled orders deliver the physical inputs
that produce next-period output.

\smallskip
Holding productivities fixed, this economy has a unique positive stationary equilibrium
$(\mathbf m^\ast,\mathbf q^\ast,\mathbf p^\ast)$, with $\mathbf A\mathbf m^\ast=\mathbf m^\ast$ and
equilibrium output determined by the real block $\mathbf I-\varsigma\mathbf A^\top$.
\citet{veetil2026costinflation} proves existence and uniqueness in its Proposition~1. Its Lemma~1 proves that balances, output, and prices converge to that equilibrium from every positive initial configuration of balances, orders in transit, and output. Balance deviations decay at the subdominant rate of $\mathbf A$, and output and price deviations at the geometric factor $\max\{\lambda_2(\mathbf A),\varsigma\}$.  Adding Hicks-neutral productivity to \eqref{eq:prod_drs}
shifts the technological constant of the log-output recursion and leaves both arguments unchanged.

\smallskip
We now let productivities change from date to date. At each date $t=1,2,\ldots$ a new innovation $\boldsymbol{\epsilon}_t$ arrives and sets productivities to $\mathbf z_t=\exp(\boldsymbol{\epsilon}_t)$, while the effects of earlier innovations are still working their way through the production network. We fix the stochastic structure of the innovations once and for all.

\begin{assu}[Innovations]
\label{assu:innovations}
The innovations $\{\boldsymbol{\epsilon}_t\}_{t\ge1}$ are i.i.d.\ across dates, with independent
components $\epsilon_{i,t}\sim\mathcal N(0,\sigma^2)$ for some $\sigma^2\in(0,\infty)$, so that
$\boldsymbol{\epsilon}_t\sim\mathcal N(\mathbf 0,\sigma^2\mathbf I)$.
\end{assu}

\noindent
Assumption~\ref{assu:innovations} asks for more than most of the paper uses. Every variance comparison and every variance ratio in the paper rests on independence across dates and on the common variance $\sigma^2$ alone. We use Gaussianity only when we compare distributions rather than variances, which is to say for tail probabilities, for the finite-horizon distributions of realized volatility, and for the frequency with which dynamic output moves by more than static output. And the zero mean is a normalization rather than a restriction. A common innovation mean shifts the stationary path of log output by a constant that cancels from every increment (Lemma~\ref{lem:mean_invariance} in Appendix~\ref{app:reversal}), so that everything we say about increments holds when the innovations are drawn from a one-sided distribution.

\smallskip
We now derive the law of motion of output. Substituting the input demands $x_{ij,t}=a_{ij}m_{j,t}/p_{i,t}$ and the market-clearing prices into \eqref{eq:prod_drs}, and taking logarithms, gives the closed recursion
\begin{equation}
\log \mathbf q_{t+1}
=
\mathbf W\,\log \mathbf q_t
+
\boldsymbol{\epsilon}_t
+
\mathbf b_0(\mathbf A,\varsigma),
\qquad
\mathbf W:=\varsigma\,\mathbf A^\top,
\quad
\rho(\mathbf W)=\varsigma<1,
\label{eq:logq_recursion}
\end{equation}
once balances have settled at $\mathbf m^\ast$. Here $\mathbf b_0(\mathbf A,\varsigma)$ is a deterministic vector that depends only on the network and on returns to scale. The operator $\mathbf W$ is what propagates output disturbances through the network, since firm $i$'s output loads on its suppliers' outputs through the input shares, scaled by returns to scale. Iterating the recursion forward gives
\begin{equation}
\log \mathbf q_{t+1}
=
\mathbf W^{t+1}\log \mathbf q_0
+
\sum_{\tau=0}^{t}\mathbf W^{\tau}\boldsymbol{\epsilon}_{t-\tau}
+
\Bigg(\sum_{\tau=0}^{t}\mathbf W^{\tau}\Bigg)\mathbf b_0(\mathbf A,\varsigma).
\label{eq:quantity_history_representation}
\end{equation}
The first term carries the dependence on initial conditions and vanishes geometrically as $t\to\infty$, since $\rho(\mathbf W)=\varsigma<1$, which is the global stability established by \citet{veetil2026costinflation}. The last term is deterministic and settles to a constant, the steady-state component of output. All of the stochastic variation in output is therefore in the middle term, which sums the innovations of every past date, each propagated through the network for as many steps as have elapsed since it arrived. Which is to say that output at any date superposes partially propagated innovations of every vintage.

\smallskip
We close by recording the spectrum of $\mathbf W$. Because the columns of $\mathbf A$ sum to one, the transpose $\mathbf A^\top$ is row-stochastic, and $\mathbf W=\varsigma\mathbf A^\top$ has constant row sums $\varsigma$. The Perron eigenvalue and eigenvectors are therefore\footnote{That the Perron eigenvalue
is exactly $\varsigma$, with right eigenvector $\mathbf v_1=\mathbf 1$, is forced by the closed monetary circuit, because every firm spends its whole balance, so that each row of $\mathbf W$ sums to the same
constant.  We therefore attach no independent empirical significance to the value $\varsigma$, or to
functions of it such as $(1-\varsigma)^2/(1+\varsigma)$.  The network enters only through the transient spectrum
$\{\varsigma\lambda_r(\mathbf A)\}_{r\ge2}$.}
\begin{equation}
\lambda_1=\varsigma\in(0,1),
\qquad
\mathbf v_1=\mathbf 1,
\qquad
\mathbf u_1=\mathbf m^\ast.
\label{eq:W_spectral}
\end{equation}
Here $\mathbf v_1$ and $\mathbf u_1$ are the right and left Perron eigenvectors and $\mathbf m^\ast$ is
the stationary distribution of balances, a size weighting that concentrates on highly connected firms.
The transpose puts the size-weighted shock $\mathbf m^{\ast\top}\boldsymbol{\epsilon}_t$, and not the cross-sectional mean $\mathbf 1^\top\boldsymbol{\epsilon}_t$, in the common direction. And the real block inherits the nominal spectrum scaled by returns to scale. If $\{\lambda_r(\mathbf A)\}$ are the eigenvalues of the monetary circuit, those of $\mathbf W$ are $\{\varsigma\lambda_r(\mathbf A)\}$, so the dominant transient eigenvalue of output is $\varsigma\lambda_2(\mathbf A)$, strictly below $\lambda_1=\varsigma$.

%====================================================
\subsection{Aggregate output and its fluctuations}
\label{subsec:aggregate_output}
%====================================================

We define aggregate output by summing the log outputs of firms with weights, and we ask of the weights only that they be nonnegative and sum to one,
\begin{equation}
\boldsymbol\gamma\in\R_+^n,
\qquad
\boldsymbol\gamma^\top\mathbf 1=1.
\label{eq:admissible_weights}
\end{equation}
We call any such vector admissible. The quantity history representation \eqref{eq:quantity_history_representation} writes $\log\mathbf q_{t+1}$ as the sum of a term inherited from initial conditions, a term driven by the innovations, and a deterministic term induced by the primitives. Our volatility objects are unchanged by deterministic additive terms, so we define aggregate output on the term driven by the innovations.

\begin{defini}[Aggregate output]
\label{def:aggregate_output}
Fix an admissible weight vector $\boldsymbol\gamma$. Aggregate output at date $t+1$ is
\begin{equation}
y_{t+1}
:=
\boldsymbol\gamma^\top \sum_{\tau=0}^{t}\mathbf W^{\tau}\boldsymbol{\epsilon}_{t-\tau}.
\label{eq:aggregate_output_shock_component}
\end{equation}
\end{defini}

\noindent
This definition encompasses GDP measured with household expenditure shares as the weights, the sales-weighted indices of the granular-origins literature, and a variety of other measures. Whatever we say about aggregate volatility in this paper therefore applies to GDP, but is far more general. In fact, the only weights that \eqref{eq:admissible_weights} rules out are weights of mixed sign, which measure differences between firms rather than an aggregate, and weights orthogonal to the vector of ones, which is the right Perron eigenvector of $\mathbf W$. Weights orthogonal to that vector would remove from the aggregate the common component of output, through which the size-weighted shock $\mathbf m^{\ast\top}\boldsymbol{\epsilon}_t$ enters, and would thereby prevent granular shocks from accumulating in the aggregate at all. Every admissible weight vector instead loads on the common component with weight one, since $\boldsymbol\gamma^\top\mathbf 1=1$.

\smallskip
Definition~\ref{def:aggregate_output} subtracts from $\boldsymbol\gamma^\top\log\mathbf q_{t+1}$ the term inherited from $\log\mathbf q_0$ and the term induced by $\mathbf b_0(\mathbf A,\varsigma)$. Both are deterministic, so the population variances and tail probabilities that we define in this subsection are the same for $y_t$ and for $\boldsymbol\gamma^\top\log\mathbf q_t$. At any finite date, however, the two subtracted terms still change from one date to the next, by the deterministic amount $\boldsymbol\gamma^\top\mathbf W^{t}\bigl((\mathbf W-\mathbf I)\log\mathbf q_0+\mathbf b_0(\mathbf A,\varsigma)\bigr)$, which decays at the rate $\varsigma$. Our finite-window statistics are therefore statistics of $y_t$ itself, which is to say of the measured aggregate of an economy that starts at the deterministic equilibrium of the settled circuit, $\log\mathbf q_0=(\mathbf I-\mathbf W)^{-1}\mathbf b_0(\mathbf A,\varsigma)$, at which that amount is zero at every date. From any other start, a measured increment carries the deterministic amount in addition to $\Delta y_t$.

\smallskip
The choice of weights matters, and one choice matters more than any other. The stationary balance distribution $\mathbf m^\ast$ is admissible, and its entry $m^\ast_i$ is firm $i$'s share of equilibrium gross sales, because a firm's balance is its revenue of the preceding date. We call $\boldsymbol\gamma=\mathbf m^\ast$ the equilibrium gross-sales weights, or size weights for short. Because $\mathbf m^{\ast\top}\mathbf W=\lambda_1\mathbf m^{\ast\top}$, size-weighted aggregation collapses \eqref{eq:aggregate_output_shock_component} to a scalar autoregression at the Perron rate alone,
\[
\boldsymbol\gamma=\mathbf m^\ast
\quad\Longrightarrow\quad
y_{t+1}=\sum_{\tau=0}^{t}\lambda_1^{\,\tau}\,\mathbf m^{\ast\top}\boldsymbol{\epsilon}_{t-\tau},
\]
and biorthogonality, $\mathbf u_1^\top\mathbf v_r=0$ for $r\ge2$, removes every transient mode before it reaches the aggregate. An observer who weights firms by their equilibrium size therefore cannot see the network's transient spectrum in the aggregate at any horizon, however slowly the network mixes. We now record the condition that keeps the dominant transient mode in the aggregate, and we maintain it wherever the transient channel is at issue.

\begin{assu}[Misaligned weights]
\label{assu:misalignment}
The aggregate loading of the dominant transient mode is nonzero:
\[
b:=\boldsymbol\gamma^\top\tilde{\mathbf v}_2\neq 0.
\]
\end{assu}

\noindent
Here $\tilde{\mathbf v}_2$ is the right eigenvector of $\mathbf W$ associated with its dominant transient eigenvalue, and the condition does not depend on how that eigenvector is scaled. The assumption restricts the observer's weights and says nothing about the economy. It holds for every weight vector outside the measure-zero set of those orthogonal to $\tilde{\mathbf v}_2$, and that set contains size weights, at which every transient loading vanishes and not merely the dominant one.

\smallskip
We measure aggregate fluctuations with one-step changes in aggregate output,\footnote{Volatility in macro and micro empirical practice is typically computed from (detrended)
growth rates rather than from dispersion in levels \citep{HP97,BK99,CP06,BFJST18}.  The convention is
also conceptually natural here.  With overlapping propagation, the level $y_t$ mixes the contemporaneous
impact of the new vintage with the unfinished effects of older vintages.  One-step changes $\Delta y_t$
instead difference out what is carried from $t-1$ to $t$, isolating the net movement of the
superposition from one date to the next.}
\[
\Delta y_t:=y_t-y_{t-1}.
\]
We compare the dynamic aggregate output of Definition~\ref{def:aggregate_output} with the static aggregate output $y_t^\ast:=\boldsymbol\gamma^\top(\mathbf I-\mathbf W)^{-1}\boldsymbol{\epsilon}_t$. The static aggregate output is the value the aggregate would take if the economy adjusted fully to the innovation $\boldsymbol{\epsilon}_t$ before the next innovation arrived, and its increments are $\Delta y_t^\ast:=y_t^\ast-y_{t-1}^\ast$. We now define aggregate volatility and tail risk, each in a static and a dynamic version and each as a population object and as a realized statistic over a finite window.

\begin{defini}[Aggregate volatility: static and dynamic]
\label{def:agg_vol_static_dynamic}
The static and dynamic population volatilities are
\[
\phi^\ast := \lim_{t\to\infty}\Var(\Delta y_t^\ast),
\qquad
\phi := \lim_{t\to\infty}\Var(\Delta y_t),
\]
and the static and dynamic realized volatilities over a window $t=2,\ldots,T$ with $T\ge 2$ are
\[
\hat\phi^\ast
:=
\frac{1}{T-1}\sum_{t=2}^{T}\bigl(\Delta y_t^\ast\bigr)^2,
\qquad
\hat\phi
:=
\frac{1}{T-1}\sum_{t=2}^{T}\bigl(\Delta y_t\bigr)^2.
\]
\end{defini}

\begin{defini}[Aggregate tail risk: static and dynamic]
\label{def:agg_tail_static_dynamic}
Fix a threshold $c>0$. The static and dynamic population lower-tail probabilities are
\[
\omega_c^\ast := \lim_{t\to\infty}\Pr\bigl(\Delta y_t^\ast<-c\bigr),
\qquad
\omega_c := \lim_{t\to\infty}\Pr\bigl(\Delta y_t<-c\bigr),
\]
and the static and dynamic realized lower-tail frequencies over a window $t=2,\ldots,T$ with $T\ge 2$ are
\[
\hat\omega_{c}^\ast
:=
\frac{1}{T-1}\sum_{t=2}^{T}\mathbf 1\{\Delta y_t^\ast<-c\},
\qquad
\hat\omega_{c}
:=
\frac{1}{T-1}\sum_{t=2}^{T}\mathbf 1\{\Delta y_t<-c\}.
\]
\end{defini}

\noindent
The four limits exist because each increment series is a stable linear filter of the i.i.d.\ innovation stream of Assumption~\ref{assu:innovations}, so that its moments converge geometrically as the influence of initial conditions decays. Both increments have mean zero, so the realized volatilities are uncentered sums of squares. The window length $T$ is understood from context in every hatted object, unless we compare window lengths explicitly.

\section{Asymptotic Volatility Bounds}
\label{sec:volatility_leontief}
%==========================================================

How much smaller is aggregate volatility when shocks overlap in time than when each shock is fully processed before the next one arrives? We take up this question for an economy that has been receiving shocks for so long that the influence of its initial condition has died out and shocks of every vintage are present in output. We proceed in two steps. We first let each shock travel a fixed number of layers through the production network before the next shock arrives, and we call the resulting economy the $\mathcal L$-economy after that number of layers. In the $\mathcal L$-economy the superposition of partially propagated shocks can be seen directly, and it gives us our first comparison between static and dynamic volatility, a bound that holds for every production network and every admissible weight vector, with no restriction on the network's transient spectrum. We then drop the fixed number of layers, since nothing in the economy separates the arrival of shocks from their propagation, and let the decay of the network's own transients decide how far each shock has travelled when the next one arrives. Writing aggregate output in terms of the eigenmodes of the propagation operator and keeping the dominant transient mode, we obtain closed forms for growth-rate volatility and for tail risk, and in particular the factor by which overlap deflates the static benchmark. We also ask how much of that deflation an observer sees, which turns on the weights with which the observer aggregates, and we end the section with tail risk, which overlap deflates by far more than it deflates volatility.

\subsection{The $\mathcal L$-economy}

Consider a productivity shock $\boldsymbol\epsilon$.  The shock propagates downstream through the
production network as each firm transmits output changes to the firms that buy inputs from it.  This transmission occurs in layers. The original recipients of the shock pass the disturbance to their buyers, who in turn pass it on to their buyers.  The aggregate effect is the sum of the incremental contributions from each layer of propagation. Because returns to scale are below one, this layer-by-layer expansion converges.  We now formalize finite-depth processing and quantify how rapidly the depth-$\mathcal L$ aggregate approaches the fully processed benchmark.

\smallskip
In our network representation, $a_{ji}$ is the share of firm $i$'s spending devoted to input $j$.
Multiplication by $\mathbf W$ therefore maps disturbances at suppliers into disturbances faced by their buyers, so a shock to firm $j$ transmits along the directed links $j\to i$ from supplier to buyer.
A depth-$\ell$ term $\mathbf W^\ell\boldsymbol{\epsilon}$ collects the component arriving
exactly $\ell$ links downstream. Formally, $(\mathbf W^\ell)_{ki}$ aggregates weights over
directed length-$\ell$ paths from $i$ to $k$, so that the entry carries a shock at source $i$ to destination $k$. Here $\ell$ indexes processing depth within a
period rather than calendar time.\footnote{%
This ``rounds of requirements'' logic is the standard Leontief interpretation, in which producing a unit of activity requires direct inputs, producing those inputs
requires second-round inputs, and total requirements are obtained by summing
across depths.  See \citet{leontief1936quantitative} and
\citet[Sec.~2.4]{millerblair2009input} for textbook treatments, and
\citet[App.~A]{elliottgolubjackson2014contagion} for a Neumann-series
representation in a network setting.}

Since $\boldsymbol{\gamma}$ is the aggregation vector of Definition~\ref{def:aggregate_output}, the change in aggregate output is $\boldsymbol{\gamma}^\top$ applied to the vector of firm-level output changes. In particular, the incremental
aggregate contribution arriving at depth $\ell$ is
$\boldsymbol{\gamma}^\top\mathbf W^\ell\boldsymbol{\epsilon}$, and processing the shock through
$\mathcal L$ layers amounts to summing the first $\mathcal L$ layer contributions:
\begin{equation}
y^\ast_{\mathcal L}(\boldsymbol{\epsilon})
:=
\boldsymbol{\gamma}^\top\mathbf S_{\mathcal L}\,\boldsymbol{\epsilon},
\qquad
\mathbf S_{\mathcal L}:=\sum_{\ell=0}^{\mathcal L-1}\mathbf W^\ell .
\label{eq:y_L_single_shock}
\end{equation}
We call $\mathbf S_{\mathcal L}$ the block operator of the first $\mathcal L$ layers. The block operator satisfies
$\mathbf S_1=\mathbf I$ and $(\mathbf I-\mathbf W)\mathbf S_{\mathcal L}=\mathbf I-\mathbf W^{\mathcal L}$.

Since returns to scale satisfy $\varsigma<1$, we have $\rho(\mathbf W)=\varsigma<1$ and higher powers of $\mathbf W$ decay. The Neumann series therefore converges,
\begin{equation}
\sum_{\ell=0}^{\infty}\mathbf W^\ell=(\mathbf I-\mathbf W)^{-1},
\label{eq:neumann_series}
\end{equation}
and aggregating with
$\boldsymbol{\gamma}$ gives the fully processed mapping
\begin{equation}
y^\ast(\boldsymbol{\epsilon})
=
\boldsymbol{\gamma}^\top(\mathbf I-\mathbf W)^{-1}\boldsymbol{\epsilon}.
\label{eq:y_star_map}
\end{equation}

\medskip
\noindent
The truncated aggregate $y^\ast_{\mathcal L}$ is the $\mathcal L$-th partial sum of the Neumann series \eqref{eq:neumann_series}, aggregated with $\boldsymbol{\gamma}$.\footnote{We reserve the subscripted symbol $y^\ast_{\mathcal L}$ for this static object, the depth-$\mathcal L$ truncation of the fully processed mapping applied to a single draw. The bracketed symbol $y_t^{[\mathcal L]}$ denotes the dynamic series in which one draw arrives per date and each vintage advances $\mathcal L$ layers between dates.} We now state the exact layer weights and bound the gap between the fully processed aggregate output $y^\ast$ and its depth-$\mathcal L$ truncation $y^\ast_{\mathcal L}$.

\begin{lemma}[Neumann convergence, exact layer weights, and finite-depth error]
\label{lem:neumann_convergence_L}
Fix an admissible weight vector $\boldsymbol{\gamma}$. Then
$(\mathbf I-\mathbf W)^{-1}=\sum_{\ell=0}^\infty \mathbf W^\ell$,
and the layer weights are exact: for every $\ell\ge0$,
\begin{equation}
\bigl\|\boldsymbol{\gamma}^\top \mathbf W^\ell\bigr\|_1
=
\varsigma^{\,\ell},
\qquad\text{equivalently}\qquad
\bigl\|\mathbf W^\ell\bigr\|_\infty
=
\varsigma^{\,\ell}.
\label{eq:layer_weight_identity}
\end{equation}
Consequently, for every $\mathcal L\ge 1$,
\begin{equation}
\Big|
y^\ast(\boldsymbol{\epsilon})-y^\ast_{\mathcal L}(\boldsymbol{\epsilon})
\Big|
=
\Big|\sum_{\ell\ge\mathcal L}\boldsymbol{\gamma}^\top \mathbf W^\ell\boldsymbol{\epsilon}\Big|
\le
\frac{\varsigma^{\,\mathcal L}}{1-\varsigma}\,\|\boldsymbol{\epsilon}\|_\infty,
\label{eq:finite_depth_error_bound_simple}
\end{equation}
and $y^\ast_{\mathcal L}(\boldsymbol{\epsilon})\to y^\ast(\boldsymbol{\epsilon})$ geometrically as
$\mathcal L\to\infty$.
\end{lemma}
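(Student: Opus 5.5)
The plan is to exploit two structural facts recorded in Section~\ref{sec:model}: $\mathbf W=\varsigma\mathbf A^\top$ is entrywise nonnegative, and $\mathbf A^\top$ is row-stochastic because of \eqref{eq:colsum_one}. Together with admissibility of $\boldsymbol\gamma$ in \eqref{eq:admissible_weights}, these give every claim in Lemma~\ref{lem:neumann_convergence_L} by elementary norm arguments. No spectral information beyond $\rho(\mathbf W)=\varsigma<1$ is needed.

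\emph{Neumann series.} Since $\rho(\mathbf W)=\varsigma<1$, Gelfand's formula gives $\|\mathbf W^\ell\|^{1/\ell}\to\varsigma$ in any norm. The series $\sum_\ell \mathbf W^\ell$ therefore converges absolutely. The telescoping identity $(\mathbf I-\mathbf W)\mathbf S_{\mathcal L}=\mathbf I-\mathbf W^{\mathcal L}$ stated before the lemma, together with $\mathbf W^{\mathcal L}\to\mathbf 0$ and invertibility of $\mathbf I-\mathbf W$ (the value $1$ is not an eigenvalue), yields $\mathbf S_{\mathcal L}\to(\mathbf I-\mathbf W)^{-1}$. Alternatively, the exact bound $\|\mathbf W^\ell\|_\infty=\varsigma^\ell$ proved next makes the series dominated by a geometric series directly, so this step can follow that one.

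\emph{Exact layer weights.} $(\mathbf A^\top)^\ell$ is again row-stochastic and nonnegative, as a product of row-stochastic matrices. Hence $\mathbf W^\ell=\varsigma^\ell(\mathbf A^\top)^\ell$ is nonnegative with every row summing to $\varsigma^\ell$, i.e.\ $\mathbf W^\ell\mathbf 1=\varsigma^\ell\mathbf 1$. The induced $\infty$-norm is the maximal absolute row sum, so $\|\mathbf W^\ell\|_\infty=\varsigma^\ell$. For the weighted version, $\boldsymbol\gamma^\top\mathbf W^\ell$ is a nonnegative row vector because $\boldsymbol\gamma\ge\mathbf 0$. Its $\ell_1$ norm therefore equals the plain sum of its entries, $\boldsymbol\gamma^\top\mathbf W^\ell\mathbf 1=\varsigma^\ell\boldsymbol\gamma^\top\mathbf 1=\varsigma^\ell$. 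The word ``equivalently'' in the lemma should be read as both identities following from the same fact, $\mathbf W^\ell\mathbf 1=\varsigma^\ell\mathbf 1$ with nonnegativity. I will prove each directly rather than derive one from the other, since $\|\boldsymbol\gamma^\top\mathbf M\|_1\le\|\mathbf M\|_\infty$ only gives an inequality in general.

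\emph{Tail bound.} Since $y^\ast-y^\ast_{\mathcal L}=\boldsymbol\gamma^\top\sum_{\ell\ge\mathcal L}\mathbf W^\ell\boldsymbol\epsilon$, with the interchange of $\boldsymbol\gamma^\top$ and the convergent sum justified by absolute convergence, Hölder's inequality gives
\[
|\boldsymbol\gamma^\top\mathbf W^\ell\boldsymbol\epsilon|\le\|\boldsymbol\gamma^\top\mathbf W^\ell\|_1\|\boldsymbol\epsilon\|_\infty=\varsigma^\ell\|\boldsymbol\epsilon\|_\infty .
\]
Summing the geometric tail from $\ell=\mathcal L$ gives $\varsigma^{\mathcal L}\|\boldsymbol\epsilon\|_\infty/(1-\varsigma)$, which is \eqref{eq:finite_depth_error_bound_simple}. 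Geometric convergence as $\mathcal L\to\infty$ is then immediate.

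I expect no real obstacle. The only point requiring care is the exact equality in \eqref{eq:layer_weight_identity} rather than an upper bound, and that is where nonnegativity of both $\boldsymbol\gamma$ and $\mathbf W$ must be used explicitly so that the $\ell_1$ norm reduces to a sum.
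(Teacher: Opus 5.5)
Your proposal is correct and follows the paper's proof essentially step for step. The paper also derives both identities in \eqref{eq:layer_weight_identity} directly from $\mathbf W^\ell\mathbf 1=\varsigma^\ell\mathbf 1$ together with entrywise nonnegativity of $\boldsymbol\gamma$ and $\mathbf W$, then applies H\"older's inequality and sums the geometric tail; it treats the Neumann expansion as the standard consequence of $\rho(\mathbf W)<1$, so your Gelfand/telescoping detail and your remark on the loose ``equivalently'' only make explicit what the paper leaves implicit.
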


\noindent\emph{Proof in Appendix~\ref{app:proof:lem:neumann_convergence_L}.}

\subsection{Superposition of shocks of different vintages}
\label{subsec:superposition_vintages}

Consider an economy that experiences primitive productivity shocks at discrete dates
$t=0,1,\ldots,T$.  When $\mathcal L$ is finite, each new shock arrives before earlier shocks have
been fully processed, that is, before the economy has effectively reached the fully processed
Leontief outcome associated with the earlier vintage.  It helps to think of each shock as generating a ``wave'' that propagates downstream, layer by layer, through the production network.  With only $\mathcal L$ layers processed per calendar date, older waves remain
present when new ones arrive, so the realized aggregate at date $t$ is a superposition of multiple
vintages, each at a different stage of processing.

\smallskip
To formalize this superposition, we index exposures by the block of layers that a vintage occupies.
During the date in which it arrives, a primitive shock $\boldsymbol{\epsilon}$ traverses the first
$\mathcal L$ layers.  Its contribution to aggregate output in that date is
$\boldsymbol{\gamma}^\top\mathbf S_{\mathcal L}\boldsymbol{\epsilon}=y^\ast_{\mathcal L}(\boldsymbol{\epsilon})$,
the depth-$\mathcal L$ truncation of Lemma~\ref{lem:neumann_convergence_L}.  During each subsequent
date it advances a further $\mathcal L$ layers.  After $\tau$ dates it therefore occupies the block of
layers $\tau\mathcal L,\ldots,(\tau+1)\mathcal L-1$ and contributes
$\boldsymbol{\gamma}^\top\mathbf W^{\tau\mathcal L}\mathbf S_{\mathcal L}\boldsymbol{\epsilon}$.

\smallskip
The overlap mechanism is easiest to see with two shocks, a vintage
$\boldsymbol{\epsilon}_0$ at date $0$ and a second vintage
$\boldsymbol{\epsilon}_1$ at date $1$, with
$\boldsymbol{\epsilon}_t\equiv \mathbf 0$ for $t\ge 2$.
In the $\mathcal L$--economy each vintage advances by $\mathcal L$ layers per date.
Hence, at date $t\ge 1$, the first vintage occupies the block beginning at depth $t\mathcal L$,
while the second occupies the block beginning at depth $(t-1)\mathcal L$, so
\begin{equation}
y_t^{[\mathcal L]}
=
\boldsymbol{\gamma}^\top\!\Big(
\mathbf W^{t\mathcal L}\mathbf S_{\mathcal L}\boldsymbol{\epsilon}_0
+
\mathbf W^{(t-1)\mathcal L}\mathbf S_{\mathcal L}\boldsymbol{\epsilon}_1
\Big).
\label{eq:two_vintages_depth_multiple}
\end{equation}

\smallskip
With one shock arriving each date, the same accounting applies.
At date $t$, the vintage from date $t-\tau$ occupies the block beginning at depth $\tau\mathcal L$.
The aggregate level can therefore be written as
\begin{equation}
y_t^{[\mathcal L]}
=
\boldsymbol{\gamma}^\top\!\Bigg(
\sum_{\tau=0}^{t}
\mathbf W^{\tau\mathcal L}\mathbf S_{\mathcal L}\boldsymbol{\epsilon}_{t-\tau}
\Bigg),
\qquad t\ge 0.
\label{eq:superposition_finite_L_w_multiples}
\end{equation}
In \eqref{eq:superposition_finite_L_w_multiples} the calendar date $t$ fixes the present, the date at which output is evaluated. The lag $\tau$ is
the age of a vintage, the number of dates since it arrived, so $\boldsymbol{\epsilon}_{t-\tau}$
is the shock launched $\tau$ dates ago. Both $t$ and $\tau$ are measured in dates, but $t$ is a fixed
point while $\tau$ runs over the past vintages still alive in the superposition. The depth $\mathcal L$
is the number of network layers a vintage advances per date, and it converts a vintage's age into its propagation depth, since a vintage of age $\tau$ occupies the block of $\mathcal L$ layers beginning at depth
$\tau\mathcal L$, the power on the operator $\mathbf W$.

\smallskip
As $t$ grows, more vintages are simultaneously present and the superposition lengthens.
The layer-weight identity \eqref{eq:layer_weight_identity} gives
$\|\boldsymbol{\gamma}^\top \mathbf W^{\tau\mathcal L}\mathbf S_{\mathcal L}\|_1
=\varsigma^{\tau\mathcal L}(1-\varsigma^{\mathcal L})/(1-\varsigma)$, so sufficiently old
vintages become negligible. Extending the shock sequence to infinity, we can write
\begin{equation}
y_t^{[\mathcal L]}
=
\boldsymbol{\gamma}^\top\!\Bigg(
\sum_{\tau=0}^{\infty}
\mathbf W^{\tau\mathcal L}\mathbf S_{\mathcal L}\boldsymbol{\epsilon}_{t-\tau}
\Bigg),
\qquad t\in\mathbb Z.
\label{eq:superposition_infinite_vintage_L}
\end{equation}
The series converges absolutely. At $\mathcal L=1$ the block is a single layer, $\mathbf S_1=\mathbf I$, and \eqref{eq:superposition_infinite_vintage_L} is the shock-driven component of Definition~\ref{def:aggregate_output} with the date index shifted by the one-period production lag. And because the blocks of successive vintages tile the layers, the exposures of the vintages present at any date sum to the fully processed exposure,
\begin{equation}
\sum_{\tau=0}^{\infty}\mathbf W^{\tau\mathcal L}\mathbf S_{\mathcal L}
=
(\mathbf I-\mathbf W^{\mathcal L})^{-1}(\mathbf I-\mathbf W^{\mathcal L})(\mathbf I-\mathbf W)^{-1}
=
(\mathbf I-\mathbf W)^{-1},
\label{eq:block_tiling}
\end{equation}
for every $\mathcal L\ge1$.  The first equality uses the Neumann expansion of
$(\mathbf I-\mathbf W^{\mathcal L})^{-1}$, which converges because
$\rho(\mathbf W^{\mathcal L})=\varsigma^{\mathcal L}<1$, and the identity
$(\mathbf I-\mathbf W)\mathbf S_{\mathcal L}=\mathbf I-\mathbf W^{\mathcal L}$.

\smallskip
We refer to $\{y_t^{[\mathcal L]}\}$ as dynamic output, because shocks arrive before earlier vintages are fully processed and realized aggregates
reflect the superposition of partially propagated waves.
By contrast, the static benchmark processes each period's draw to infinite depth before the next draw arrives. The corresponding per-date aggregate is the static aggregate output of Section~\ref{subsec:aggregate_output},
\[
y_t^\ast
=
\boldsymbol{\gamma}^\top(\mathbf I-\mathbf W)^{-1}\boldsymbol{\epsilon}_t,
\]
and $\{y_t^\ast\}$ is i.i.d.\ whenever $\{\boldsymbol{\epsilon}_t\}$ is i.i.d.

\smallskip
For a given shock realization, dynamic levels mix multiple vintages, while static levels reflect
only the fully processed current vintage.  A large upstream innovation can therefore influence
dynamic output for many dates as its wave gradually moves downstream.  If, before this wave has
fully propagated, an offsetting innovation arrives elsewhere in the network, the two waves coexist
in levels and partially cancel.  Dynamic output is thus a superposition of many partially processed shocks, in which the effect of no single shock is fully realized.

\subsection{Growth rates and shock-interference}
\label{subsec:growth_interference}

We now take one-step differences of log levels to study growth-rate fluctuations.  A
one-step change subtracts what is mechanically inherited from $t-1$ and therefore isolates the
net new movement of the wave superposition between two adjacent dates\footnote{%
Differencing acts as a high-pass filter on the superposition of partially processed vintages.  A
large innovation to an upstream firm typically generates a downstream ``wave'' whose impact on
aggregate levels unfolds gradually as it reaches deeper layers. In that case the level
sequence $\{y_t^{[\mathcal L]}\}$ can remain persistently displaced even after the arrival date of
the shock.  But if this propagated component changes only slowly from $t-1$ to $t$, as is natural when
propagation is incremental across layers, then most of it is mechanically inherited and cancels in
$\Delta y_t^{[\mathcal L]}:=y_t^{[\mathcal L]}-y_{t-1}^{[\mathcal L]}$.  Consequently, growth rates
put relatively little weight on slow-moving propagated effects and instead emphasize movements that
come from new innovations or sharp changes in the cross-sectional shock pattern across
adjacent vintages.}:
\[
\Delta y_t^{[\mathcal L]}:=y_t^{[\mathcal L]}-y_{t-1}^{[\mathcal L]}
\]

Differencing \eqref{eq:superposition_infinite_vintage_L} yields the moving-average representation
\begin{equation}
\Delta y_t^{[\mathcal L]}
=
\boldsymbol{\gamma}^\top\!\Bigg(
\sum_{\tau=0}^{\infty}
\mathbf W^{\tau\mathcal L}\mathbf S_{\mathcal L}\,
\Delta\boldsymbol{\epsilon}_{t-\tau}
\Bigg),
\qquad
\Delta\boldsymbol{\epsilon}_{t}
:=
\boldsymbol{\epsilon}_t-\boldsymbol{\epsilon}_{t-1}.
\label{eq:Dy_t_L_superposition}
\end{equation}
The static benchmark has no overlap, so static growth depends only on the current shock difference:
\[
\Delta y_t^\ast
:=
y_t^\ast-y_{t-1}^\ast
=
\boldsymbol{\gamma}^\top(\mathbf I-\mathbf W)^{-1}\Delta\boldsymbol{\epsilon}_t
=
\boldsymbol{\gamma}^\top\!\Bigg(
\sum_{\tau=0}^{\infty}
\mathbf W^{\tau\mathcal L}\mathbf S_{\mathcal L}
\Bigg)\Delta\boldsymbol{\epsilon}_t .
\]
The second equality is the tiling identity \eqref{eq:block_tiling}.
Comparing \eqref{eq:Dy_t_L_superposition} to the static expression shows that dynamic growth attaches
the block weight $\boldsymbol{\gamma}^\top \mathbf W^{\tau\mathcal L}\mathbf S_{\mathcal L}$ to the
lagged shock difference $\Delta\boldsymbol{\epsilon}_{t-\tau}$.  Static growth attaches the
same block weights, summed into the fully processed
resolvent weight $\boldsymbol{\gamma}^\top(\mathbf I-\mathbf W)^{-1}$, to the current shock
difference $\Delta\boldsymbol{\epsilon}_{t}$.  Equivalently, dynamic growth can be decomposed into
static growth plus an overlap correction:
\begin{equation}
\Delta y_t^{[\mathcal L]}
=
\Delta y_t^\ast
\;+\;
\underbrace{\boldsymbol{\gamma}^\top\!
\sum_{\tau=1}^{\infty}\mathbf W^{\tau\mathcal L}\mathbf S_{\mathcal L}
\bigl(\Delta\boldsymbol{\epsilon}_{t-\tau}-\Delta\boldsymbol{\epsilon}_{t}\bigr)}_{\text{shock-interference correction}}.
\label{eq:Dy_dynamic_static_plus_correction}
\end{equation}

Block by block, the correction compares the vintage that actually sits in the block with the current
vintage, which under full processing would sit there instead.  At $\mathcal L=1$ the blocks are single layers and the correction is
$\boldsymbol{\gamma}^\top\sum_{\tau\ge1}\mathbf W^{\tau}(\Delta\boldsymbol{\epsilon}_{t-\tau}-\Delta\boldsymbol{\epsilon}_{t})$.

\smallskip
Suppose firm $i$ receives a large positive innovation at date $t-1$.  In the static benchmark, this shock is
propagated to all depths before any new innovation arrives, so its downstream wave is realized
without interference.  In the dynamic economy, the positive wave reaches different downstream
neighborhoods only gradually.  The wave loads first at the shallow block emphasized by
$\boldsymbol{\gamma}^\top\mathbf S_{\mathcal L}$, and only later at the deeper blocks emphasized
by $\boldsymbol{\gamma}^\top\mathbf W^{\mathcal L}\mathbf S_{\mathcal L}$,
$\boldsymbol{\gamma}^\top\mathbf W^{2\mathcal L}\mathbf S_{\mathcal L}$,
and so on.  Now suppose that at date $t$ an offsetting negative innovation hits firms in precisely
the downstream neighborhoods that the older positive wave is just beginning to reach.  Then the
contemporaneous growth rate loads on a difference of adjacent vintages along the relevant exposure direction, as the negative wave arriving at $t$ subtracts from, and can even overturn, the marginal contribution of the positive wave from $t-1$ before that older wave has finished
propagating through the network.  The correction term in \eqref{eq:Dy_dynamic_static_plus_correction} captures this interference mechanism.

The correction is large when the exposure-weighted shock differences fluctuate sharply across
dates, or switch sign.  In that case
$\boldsymbol{\gamma}^\top\mathbf W^{\tau\mathcal L}\mathbf S_{\mathcal L}\Delta\boldsymbol{\epsilon}_{t-\tau}$ differs
systematically from $\boldsymbol{\gamma}^\top\mathbf W^{\tau\mathcal L}\mathbf S_{\mathcal L}\Delta\boldsymbol{\epsilon}_{t}$
at many depth blocks.  When the relevant components evolve slowly, the comparisons in the correction term largely cancel and dynamic growth stays close to static growth.

\smallskip
Under the two-sided innovation stream $\{\boldsymbol{\epsilon}_t\}_{t\in\mathbb Z}$ of Assumption~\ref{assu:innovations}, the superposition \eqref{eq:superposition_infinite_vintage_L} defines a stationary process. We write $\phi^{[\mathcal L]}$ and $\omega_c^{[\mathcal L]}$ for the population volatility and the population lower-tail probability of its increments, defined as in Definitions~\ref{def:agg_vol_static_dynamic} and~\ref{def:agg_tail_static_dynamic} with $\Delta y_t^{[\mathcal L]}$ in place of $\Delta y_t$. We now state our first comparison between the static and the dynamic series.

\begin{lemma}[Static bounds on dynamic volatility and tail risk in the $\mathcal L$-economy]
\label{lem:neumann_bounds_L}
Fix $\mathcal L\in\mathbb N$ and an admissible weight vector $\boldsymbol{\gamma}$. Then
\[
\phi^{[\mathcal L]} \;\le\; \phi^\ast,
\qquad\text{and}\qquad
\omega_c^{[\mathcal L]} \;\le\;\omega_c^\ast
\quad\text{for all } c>0.
\]
Moreover, as $\mathcal L\to\infty$,
\[
\Delta y_t^{[\mathcal L]}
\xrightarrow{L^2}
\boldsymbol{\gamma}^\top(\mathbf I-\mathbf W)^{-1}(\boldsymbol{\epsilon}_t-\boldsymbol{\epsilon}_{t-1})
=\Delta y_t^\ast,
\qquad\text{and hence}\qquad
\phi^{[\mathcal L]}\;\longrightarrow\;\phi^\ast.
\]
\end{lemma}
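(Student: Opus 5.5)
The plan is to reduce both claims to an explicit comparison of Gaussian variances, using that every object is a fixed linear filter of the i.i.d.\ stream of Assumption~\ref{assu:innovations}. Write the block weights as $\mathbf g_\tau^\top:=\boldsymbol\gamma^\top\mathbf W^{\tau\mathcal L}\mathbf S_{\mathcal L}$ for $\tau\ge0$, and set $\mathbf g_{-1}:=\mathbf 0$. By the tiling identity \eqref{eq:block_tiling}, $\mathbf G:=\sum_{\tau\ge0}\mathbf g_\tau=(\mathbf I-\mathbf W)^{-\top}\boldsymbol\gamma$.

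\emph{Step 1: regroup by vintage.} Starting from \eqref{eq:Dy_t_L_superposition}, or directly from differencing \eqref{eq:superposition_infinite_vintage_L}, I regroup so that each innovation appears once:
\[
\Delta y_t^{[\mathcal L]}=\sum_{k\ge0}(\mathbf g_k-\mathbf g_{k-1})^\top\boldsymbol\epsilon_{t-k}.
\]
This series converges absolutely, since $\|\mathbf g_\tau\|_1=\varsigma^{\tau\mathcal L}(1-\varsigma^{\mathcal L})/(1-\varsigma)$ by Lemma~\ref{lem:neumann_convergence_L}. Independence across dates and components then gives
\[
\phi^{[\mathcal L]}=\sigma^2\sum_{k\ge0}\|\mathbf g_k-\mathbf g_{k-1}\|_2^2,
\qquad
\phi^\ast=2\sigma^2\|\mathbf G\|_2^2 .
\]

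\emph{Step 2: the inequality.} This is the key step, and it is where admissibility of $\boldsymbol\gamma$ is used. Because $\boldsymbol\gamma\ge0$ and $\mathbf W=\varsigma\mathbf A^\top\ge0$ entrywise, every $\mathbf g_\tau$ is a nonnegative vector, so $\mathbf g_k^\top\mathbf g_j\ge0$ for all $j,k$. Two consequences follow. First, $\|\mathbf g_k-\mathbf g_{k-1}\|_2^2\le\|\mathbf g_k\|_2^2+\|\mathbf g_{k-1}\|_2^2$, so $\sum_k\|\mathbf g_k-\mathbf g_{k-1}\|_2^2\le 2\sum_k\|\mathbf g_k\|_2^2$. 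Second, expanding $\|\sum_k\mathbf g_k\|_2^2$ leaves only nonnegative cross terms, so $\sum_k\|\mathbf g_k\|_2^2\le\|\mathbf G\|_2^2$. Combining the two gives $\phi^{[\mathcal L]}\le\phi^\ast$.

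\emph{Step 3: tail risk.} Under Assumption~\ref{assu:innovations}, both $\Delta y_t^{[\mathcal L]}$ and $\Delta y^\ast_t$ are exact mean-zero Gaussians, the former as an $L^2$ limit of Gaussian partial sums. Hence $\Pr(X<-c)=\Phi(-c/\sqrt{\Var X})$, which is increasing in the variance. Step 2 then gives $\omega_c^{[\mathcal L]}\le\omega_c^\ast$ for every $c>0$.

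\emph{Step 4: the limit $\mathcal L\to\infty$.} The difference $\Delta y_t^{[\mathcal L]}-\Delta y_t^\ast$ has the following coefficients:
\begin{itemize}
\item on $\boldsymbol\epsilon_t$: $\mathbf g_0-\mathbf G=-\sum_{\tau\ge1}\mathbf g_\tau$;
\item on $\boldsymbol\epsilon_{t-1}$: $\mathbf G-\mathbf g_0+\mathbf g_1$;
\item on $\boldsymbol\epsilon_{t-k}$ for $k\ge2$: $\mathbf g_k-\mathbf g_{k-1}$.
\end{itemize}
Bounding $\ell_2$ norms by $\ell_1$ norms and using the layer weights above, each of the first two coefficients is $O(\varsigma^{\mathcal L}/(1-\varsigma))$. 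The sum of squared norms of the remaining coefficients is at most $2\sum_{k\ge1}\|\mathbf g_k\|_1^2=O(\varsigma^{2\mathcal L})$. So the $L^2$ distance is at most $\sigma\,C\varsigma^{\mathcal L}\to0$. Convergence of the variances follows because $L^2$ convergence implies convergence of second moments.

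The main obstacle is Step 2. Without the sign structure it would fail, since mixed-sign weights could make the differences $\mathbf g_k-\mathbf g_{k-1}$ large. So I would make the nonnegativity of the $\mathbf g_\tau$ explicit, and note that it relies only on $\boldsymbol\gamma\in\R^n_+$ and $\mathbf A\ge0$, not on the spectrum of $\mathbf W$.
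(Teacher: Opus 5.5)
Your proposal is correct and follows the paper's proof essentially step for step: the same regrouped moving average $\Delta y_t^{[\mathcal L]}=\sum_{k\ge0}(\mathbf g_k-\mathbf g_{k-1})^\top\boldsymbol\epsilon_{t-k}$, and the same two uses of entrywise nonnegativity of the block exposures. These are the adjacent-difference bound giving $\phi^{[\mathcal L]}\le 2\sigma^2\sum_k\|\mathbf g_k\|_2^2$ and the nonnegative cross terms giving $\sum_k\|\mathbf g_k\|_2^2\le\|\mathbf G\|_2^2$, closed by the tiling identity, followed by the same Gaussian monotonicity for tails and the same $\ell_1$ layer-weight control for $\mathcal L\to\infty$; your coefficient-by-coefficient treatment of the limit is just a slightly more explicit version of the paper's, giving the rate $O(\sigma\varsigma^{\mathcal L})$ for the $L^2$ distance.
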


\noindent\emph{Proof in Appendix~\ref{app:proof:lem:neumann_bounds_L}.}

Lemma~\ref{lem:neumann_bounds_L} says that, when the shock stream runs forever, the static benchmark
bounds both dynamic growth volatility and dynamic lower-tail risk from above, whatever the block depth
$\mathcal L$.  The reason is the one we gave in the introduction.  With overlap, a one-step change in output picks
up the difference between what each vintage contributes today and what it contributed yesterday, so a
wave that has just arrived can be offset, layer by layer, by an older wave of the opposite sign that
is still travelling through the network.  The static benchmark never lets two waves meet, because each
draw is processed to completion before the next arrives, and so it has nothing to cancel against.

The block depth connects the two timing regimes.  At $\mathcal L=1$ each date
advances every vintage by a single layer, which is the discrete-time model of Section~\ref{sec:model}.
As $\mathcal L$ grows the current vintage traverses more layers before the next draw arrives, so its
exposure $\boldsymbol{\gamma}^\top\mathbf S_{\mathcal L}$ approaches the fully processed exposure
$\boldsymbol{\gamma}^\top(\mathbf I-\mathbf W)^{-1}$.  The older vintages, pushed to depth
$\tau\mathcal L$ and beyond, drop out of the superposition.  In the limit the increment loads on the
contemporaneous shock difference alone, through the resolvent, which is the static benchmark.  The
family $\{\phi^{[\mathcal L]}\}_{\mathcal L\ge1}$ therefore connects the dynamic economy at
$\mathcal L=1$ to the static benchmark as $\mathcal L\to\infty$, and the lemma bounds every member of
the family by the static benchmark.\footnote{The lemma orders population objects under the stationary law induced by the two-sided shock sequence, and does not assert a pointwise ordering along finite realized histories, so the sample statistics $\hat\phi$ and $\hat\omega_c$ need not satisfy the inequalities path by path. The lemma holds for every admissible weight vector and orders the two series separately for each one. The weights are fixed once and for all, because a weight vector that moved with the shocks would add the change in weights, applied to the previous date's outputs, to every measured increment, and no proof in the paper carries that term. The class includes GDP measured with fixed household expenditure shares, provided that the shares attach to the same log quantities and are normalized to sum to one, as well as equal weights and the equilibrium gross-sales weights. And an aggregation vector is not an exposure vector. A weight vector $\boldsymbol\gamma$ becomes the static exposure vector $\boldsymbol\gamma^\top(\mathbf I-\mathbf W)^{-1}$ through the resolvent. A vector taken from a formula for the response of GDP to productivity shocks therefore belongs at the exposure stage, and it would be propagated a second time if it were used as $\boldsymbol\gamma$. For the same reason we call $\mathbf m^\ast$ the equilibrium gross-sales weights rather than Domar weights.}

\subsection{From the $\mathcal L$-economy to eigenmodes}
\label{sec:ell_to_eigenmodes}

Lemma~\ref{lem:neumann_bounds_L} orders static against dynamic risk for any fixed block depth
$\mathcal L$, but it takes that depth as exogenous.  We now let the network itself decide how far a vintage is processed before the next one arrives.  The
$\mathcal L$--economy is only a pedagogical device for visualizing overlap across shock
vintages.  It treats the arrival of shocks and their propagation as if they took place in two separate
times.  In the actual discrete-time model there is no such split, since each tick of calendar time simultaneously brings a new vintage
$\boldsymbol{\epsilon}_t$ and advances every previously realized vintage one further step
through the same propagation operator $\mathbf W$.  A shock realized at date $t$ affects aggregates
$\tau$ dates later through $\mathbf W^{\tau}$, so what we called ``depth'' is nothing but elapsed calendar
time viewed through the network. Accordingly, we set $\mathcal L=1$ to eliminate parallel tracks of time and let network structure determine the effective extent of
overlap.  Networks for which the powers $\mathbf W^{\tau}$ decay rapidly process vintages quickly, so
only a few recent shocks remain economically relevant. Networks for which $\mathbf W^{\tau}$ decays
slowly keep old vintages ``alive'' for many dates, so overlap is strong.  The spectrum of $\mathbf W$ thus decides how far a vintage is processed before the next one
arrives.

\smallskip
In
Section~\ref{sec:model} we assumed that $\mathbf W$ is primitive. We now further assume that it is
diagonalizable.\footnote{%
Diagonalizability gives the scalar-mode representation that we now derive, in which each mode is an autonomous scalar autoregression.  For a defective operator the
Jordan--Chevalley decomposition writes
$\mathbf W^{\tau}=\sum_{r} \lambda_r^{\,\tau}\mathbf P_r + \mathbf N_\tau$, with $\mathbf P_r$ the spectral
projectors and $\mathbf N_\tau$ a sum of terms of order $\tau^m|\lambda_r|^{\,\tau}$ for finite $m$.
These terms decay geometrically after a single impulse, but under continually renewed innovations
they enter the stationary increment filter and change its variance, so the scalar-mode formulas do
not apply verbatim to a defective operator.  The variance bound of Lemma~\ref{lem:neumann_bounds_L} does not use the scalar-mode representation and needs no diagonalizability.}
Let $\{(\lambda_r,\mathbf v_r,\mathbf u_r)\}_{r=1}^n$ denote right and left eigenpairs of $\mathbf W$,
$\mathbf W\mathbf v_r=\lambda_r\mathbf v_r$ and
$\mathbf u_r^\top\mathbf W=\lambda_r\mathbf u_r^\top$, normalized so that
$\mathbf u_r^\top\mathbf v_s=\mathbf 1\{r=s\}$.  Then each power admits the modal decomposition
\begin{equation}
\mathbf W^{\tau}
=
\sum_{r=1}^n \lambda_r^{\,\tau}\,\mathbf v_r\mathbf u_r^\top,
\qquad \tau\ge 0.
\label{eq:A_power_modal_subsec}
\end{equation}

By \eqref{eq:W_spectral}, the Perron eigenvalue of $\mathbf W=\varsigma\mathbf A^\top$ is
$\lambda_1=\varsigma\in(0,1)$, with right eigenvector $\mathbf v_1=\mathbf 1$ and left eigenvector
$\mathbf u_1=\mathbf m^\ast$, the stationary distribution of balances.
Define the dominant transient modulus as
$\lambda_2:=\max\{|\lambda_r|:r\ge 2\}$,
which governs the slowest-decaying deviation from the Perron direction and therefore the
persistence of partially processed vintages.

\medskip
We now reduce the dynamics to two modes, the Perron mode $r=1$ and the dominant transient mode $r=2$. For the second of these to be a single real mode with a gap below it, we need the following assumption on the transient spectrum of $\mathbf W$, which we maintain throughout the two-mode analysis.

\begin{assu}[Dominant transient mode]
\label{assu:dominant_transient}
Order the eigenvalues of $\mathbf W$ other than the Perron eigenvalue $\lambda_1=\varsigma$ by modulus. (i) The dominant transient modulus $\lambda_2:=\max\{|\lambda_r|:r\ge 2\}$ is attained by a real, positive, and simple eigenvalue, which we identify with $\lambda_2$. (ii) The remaining modes are separated from it, $\lambda_3:=\max\{|\lambda_r|:r\ge 3\}<\lambda_2$.
\end{assu}

\noindent
Part~(i) says that the slowest transient of the network is a single real mode. Relative outputs then relax towards their equilibrium monotonically rather than oscillating, at the rate $\lambda_2=\varsigma\lambda_2(\mathbf A)$, with $\lambda_2(\mathbf A)<1$ the subdominant eigenvalue of the circuit, so that $0<\lambda_2<\lambda_1<1$. Were the slowest transient a complex pair, the state of the reallocation channel would be two-dimensional and oscillatory. Were it negative, overlap would amplify rather than attenuate increment volatility, because the inherited state then enters the increment with a coefficient larger than one in absolute value. Part~(ii) says that every other transient mode decays strictly faster than the dominant one after a single impulse, so that there is a single dominant transient mode.

\medskip
We now write aggregate output mode by mode. Substituting the modal decomposition \eqref{eq:A_power_modal_subsec} into the depth expansion \eqref{eq:aggregate_output_shock_component}, and writing $c_r:=\boldsymbol\gamma^\top\mathbf v_r$ for the aggregate loading of mode $r$ and $\xi_{r,t}:=\mathbf u_r^\top\boldsymbol{\epsilon}_t$ for its modal innovation, gives
\begin{equation}
y_{t+1}
=
\sum_{r=1}^n c_r\sum_{\tau=0}^{t}\lambda_r^{\,\tau}\,\xi_{r,t-\tau},
\label{eq:y_modal_exact}
\end{equation}
with no approximation. The two channels we study, the Perron mode and the dominant transient mode, are two of the terms of this sum. We now state the decomposition in the autoregressive form used throughout, dating each modal state by the newest innovation it contains, which relabels the dates of \eqref{eq:y_modal_exact} by the one-period production lag of Definition~\ref{def:aggregate_output}.

\begin{lemma}[Exact modal decomposition]
\label{lem:modal_decomposition}
Let $\mathbf W$ be diagonalizable and fix an admissible weight vector $\boldsymbol\gamma$. Aggregate output is the sum of $n$ autonomous scalar autoregressions,
\[
y_t=\sum_{r=1}^n c_r\,s_{r,t},
\qquad
s_{r,t}=\lambda_r s_{r,t-1}+\xi_{r,t},
\qquad s_{r,0}=0,
\]
and the granular loading is $c_1=1$ for every admissible weight vector.  Complex eigenvalues enter in
conjugate pairs whose combined contribution to $y_t$ is real.
\end{lemma}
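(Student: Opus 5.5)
The plan is to substitute the modal decomposition \eqref{eq:A_power_modal_subsec} into Definition~\ref{def:aggregate_output} and then recognise each modal sum as the solution of a scalar first-order recursion. By diagonalizability, $\mathbf W^\tau=\sum_r\lambda_r^\tau\mathbf v_r\mathbf u_r^\top$ with biorthogonal normalization $\mathbf u_r^\top\mathbf v_s=\mathbf 1\{r=s\}$. Inserting this into \eqref{eq:aggregate_output_shock_component} and exchanging the two finite sums gives \eqref{eq:y_modal_exact} directly, with $c_r=\boldsymbol\gamma^\top\mathbf v_r$ and $\xi_{r,t}=\mathbf u_r^\top\boldsymbol\epsilon_t$. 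No approximation is involved, since both sums are finite.

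Next, I will define $s_{r,t}:=\sum_{\tau\ge0}\lambda_r^\tau\xi_{r,t-\tau}$, with the sum running over those $\tau$ for which $t-\tau$ is a date at which an innovation has arrived. Under this relabeling $y_t=\sum_r c_r s_{r,t}$, where $s_{r,t}$ is dated by its newest innovation, which shifts by the one-period production lag as the statement indicates. Splitting off the $\tau=0$ term and reindexing the remainder gives $s_{r,t}=\xi_{r,t}+\lambda_r s_{r,t-1}$. The empty sum before the first innovation gives $s_{r,0}=0$; this matches starting at the settled equilibrium, since Definition~\ref{def:aggregate_output} has already removed the initial-condition term. Each recursion involves only $\lambda_r$ and the scalar sequence $\xi_{r,\cdot}$, so the $n$ autoregressions are autonomous.

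For the granular loading, \eqref{eq:W_spectral} gives $\mathbf v_1=\mathbf 1$ and $\mathbf u_1=\mathbf m^\ast$. The normalization $\mathbf u_1^\top\mathbf v_1=\mathbf m^{\ast\top}\mathbf 1=1$ holds because $\mathbf m^\ast$ is normalized to unit mass, so this pair satisfies the biorthogonal scaling. Then $c_1=\boldsymbol\gamma^\top\mathbf 1=1$ by admissibility \eqref{eq:admissible_weights}. I should check that the Perron pair is really fixed with this scaling: the Perron eigenvalue is simple by primitivity, so the only remaining freedom is a scalar $\alpha$ with $\mathbf v_1\mapsto\alpha\mathbf v_1$ and $\mathbf u_1\mapsto\mathbf u_1/\alpha$. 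The choice $\mathbf v_1=\mathbf 1$ pins $\alpha=1$. The statement implicitly adopts this choice.

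The main obstacle is the claim about complex modes, though it is short. Since $\mathbf W$ is real, if $(\lambda_r,\mathbf v_r,\mathbf u_r)$ is an eigentriple then so is $(\bar\lambda_r,\bar{\mathbf v}_r,\bar{\mathbf u}_r)$, and the biorthogonal normalization is preserved under conjugation. I will choose the eigenbasis so that conjugate eigenvalues carry conjugate eigenvectors. Because $\boldsymbol\gamma$ and $\boldsymbol\epsilon_t$ are real, the conjugate mode then has loading $\bar c_r$ and innovation $\bar\xi_{r,t}$. By induction on the recursion its state is $\bar s_{r,t}$, so the pair contributes $c_rs_{r,t}+\bar c_r\bar s_{r,t}=2\,\mathrm{Re}(c_rs_{r,t})\in\R$. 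The only point needing care is a repeated complex eigenvalue, whose eigenspace admits a non-conjugate basis. There I will take a basis of the $\lambda$-eigenspace and its entrywise conjugate as the basis of the $\bar\lambda$-eigenspace, and complete the dual basis accordingly. This is always possible, and the modal sum $\mathbf W^\tau=\sum_r\lambda_r^\tau\mathbf v_r\mathbf u_r^\top$ does not depend on the basis chosen within each eigenspace.
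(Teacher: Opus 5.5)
Your proposal is correct and follows the paper's proof essentially step for step: substitute the modal expansion into the definition of aggregate output, relabel innovation dates by the one-period production lag so that each modal sum satisfies $s_{r,t}=\lambda_r s_{r,t-1}+\xi_{r,t}$ with $s_{r,0}=0$, read off $c_1=\boldsymbol\gamma^\top\mathbf 1=1$, and choose conjugate eigenvectors for conjugate eigenvalues. Your two additions are correct refinements that the paper leaves implicit. The first is the check that $\mathbf m^{\ast\top}\mathbf 1=1$ makes the Perron pair biorthogonally normalized. The second is the conjugate-closed basis for a repeated complex eigenvalue, whose dual basis is then conjugate-closed as well.
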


\noindent\emph{Proof in Appendix~\ref{app:proof:lem:modal_decomposition}.}

Each term of the decomposition \eqref{eq:y_modal_exact} combines a filter, an eigenmode, and the aggregation vector. The filter $\sum_{\tau=0}^{t}\lambda_r^{\,\tau}\,\xi_{r,t-\tau}$ aggregates current and past vintages of the mode-$r$ innovation with geometric weights governed by $\lambda_r$. A large $|\lambda_r|$ keeps old vintages alive and increases overlap. The eigenmode $(\mathbf v_r,\mathbf u_r)$ maps the cross-section of shocks into the scalar modal innovation through the projection $\mathbf u_r^\top(\cdot)$ and re-embeds it along the direction $\mathbf v_r$. And the aggregation vector $\boldsymbol{\gamma}$ converts firm-level output into aggregate output through the loadings $c_r$, and thereby decides how visible each eigenmode is in the aggregate.

\smallskip
 In the exact decomposition \eqref{eq:y_modal_exact}, the $r=1$ (Perron) term and the
$r\ge 2$ (transient) terms are two economically distinct channels through which firm-level shocks reach
aggregate output. Grouping them and using $\mathbf u_1=\mathbf m^\ast$ and
$\boldsymbol{\gamma}^\top\mathbf v_1=\boldsymbol{\gamma}^\top\mathbf 1=1$ from \eqref{eq:W_spectral},
\begin{equation}
y_{t+1}
=
\underbrace{\mathbf m^{\ast\top}\!\sum_{\tau=0}^{t}\lambda_1^{\,\tau}\boldsymbol{\epsilon}_{t-\tau}}_{\text{granular (Perron) channel } y^{\mathrm{gr}}_{t+1}}
\;+\;
\underbrace{\sum_{r\ge 2}(\boldsymbol{\gamma}^\top\mathbf v_r)\,\mathbf u_r^\top\!\sum_{\tau=0}^{t}\lambda_r^{\,\tau}\boldsymbol{\epsilon}_{t-\tau}}_{\text{transient block}}.
\label{eq:y_perron_transient}
\end{equation}
The granular channel loads on the size-weighted shock $\mathbf m^{\ast\top}\boldsymbol{\epsilon}$, which is the classical mechanism by which highly connected firms move the aggregate, and, because
$\mathbf m^\ast$ concentrates on hubs, it does not wash out when the degree distribution is fat-tailed.
Note that its aggregate loading is $\boldsymbol\gamma^\top\mathbf v_1=1$ for every admissible weight vector, so the granular channel is always present at full strength. The transient block loads on the cross-sectional pattern of shocks relative to
the common direction, and is governed by the transient spectrum $\{\lambda_r\}_{r\ge 2}$.  The
reallocation channel, defined next, is the dominant ($r=2$) term of this block.

\smallskip
The granular channel and the transient block are time-averaged at different rates. The granular channel decays at the common rate $\lambda_1$, a single scalar that carries no
information about how the network mixes. The transient
block decays at the transient eigenvalues, the slowest being $\lambda_2$, which encode the network's
mixing and its degree heterogeneity. We therefore retain the dominant transient mode $r=2$ and develop the reallocation channel in full, which presumes Assumption~\ref{assu:misalignment}, since at size weights the loading $b$ is zero and the reallocation channel does not reach the aggregate at all.\footnote{The projection $\mathbf u_2^\top$ onto the transient direction is orthogonal to the Perron direction by biorthogonality, $\mathbf u_2^\top\mathbf v_1=0$, so it isolates the reallocation channel without any demeaning of the cross-section. Demeaning by the cross-sectional mean $\mathbf 1^\top\boldsymbol{\epsilon}$ would in fact remove the wrong direction, because under $\mathbf W=\varsigma\mathbf A^\top$ the common mode is driven by the size-weighted shock $\mathbf m^{\ast\top}\boldsymbol{\epsilon}$ and not by the cross-sectional mean.} We return to the rate at which the granular channel time-averages once the reallocation channel is in hand.

\smallskip
Since the mode enters only through the rank-one product $\mathbf v_2\mathbf u_2^\top$, its
contribution can be written equivalently with any reciprocal rescaling of the left--right
eigenvectors.  For our purpose, define the rescaling
\begin{equation}
\tilde{\mathbf u}_2 := \frac{\mathbf u_2}{\|\mathbf u_2\|},
\qquad
\tilde{\mathbf v}_2 := \|\mathbf u_2\|\,\mathbf v_2,
\label{eq:u2_normalization}
\end{equation}
so that $\|\tilde{\mathbf u}_2\|=1$ and
$\tilde{\mathbf v}_2\tilde{\mathbf u}_2^\top=\mathbf v_2\mathbf u_2^\top$, leaving the rank-one
product, and hence the channel defined next, unchanged.%
\footnote{Eigenvectors are defined only up to reciprocal scaling of the left--right pair.  Fixing
$\|\tilde{\mathbf u}_2\|=1$ is therefore without loss and is convenient because, under
Assumption~\ref{assu:innovations}, the projected
innovation $\tilde{\mathbf u}_2^\top\boldsymbol{\epsilon}_t$ then has variance $\sigma^2$.} We now use these rescaled eigenvectors to characterize the dynamics of an $n$-dimensional shock vector through the network, with scalar state variables. Define the second-mode innovation and the associated aggregate loading by\footnote{These objects retain clear economic meaning in the eigenmode representation.  The scalar
$\eta_t$ is the projection of the firm-level innovation onto the dominant transient direction, and it summarizes, at date $t$, how strongly the realized cross-section of shocks excites the slow mode.
The scalar $b$ is the aggregate visibility, or loading, of that mode, and it measures how a unit of the
mode's state translates into aggregate output through the aggregation weights $\boldsymbol{\gamma}$.}
\begin{equation}
\eta_t := \tilde{\mathbf u}_2^\top \boldsymbol{\epsilon}_t \in \mathbb R,
\qquad
b := \boldsymbol{\gamma}^\top \tilde{\mathbf v}_2 \in \mathbb R.
\label{eq:eta_b_def}
\end{equation}
In these coordinates the reallocation channel is the dominant transient term of \eqref{eq:y_perron_transient},\footnote{Under Assumption~\ref{assu:dominant_transient} the modes $r\ge3$ decay strictly faster than the dominant one after a single impulse. That ordering of decay rates does not by itself make the two-mode representation $y_t\approx y_t^{\mathrm{gr}}+y_t^{\mathrm{re}}$ accurate for stationary increment variance, because continually renewed innovations excite every mode at every date. The own contribution of a real mode $r$ to the stationary increment variance is $2\sigma^2c_r^2\|\mathbf u_r\|_2^2/(1+\lambda_r)$, which does not vanish as $\lambda_r\to0$, so a fast mode with a sizeable loading carries a sizeable share of the aggregate. We return to how all the channels combine into the aggregate, and to the error of the two-mode representation, after developing the two leading channels.}
\begin{equation}
y_{t+1}^{\mathrm{re}}
:=
b\sum_{\tau=0}^{t}\lambda_2^{\,\tau}\,\eta_{t-\tau}.
\label{eq:y_single_mode_time_subsec}
\end{equation}

\subsection{Aggregate volatility under the dominant-transient reduction}
\label{subsec:dominant_transient_vol}

In the static benchmark, each date-$t$ innovation is fully processed within the date before the
next arrives.  The reallocation component of the fully processed mapping \eqref{eq:y_star_map} is the $r=2$ term of its eigenmode expansion. Expanding $(\mathbf I-\mathbf W)^{-1}$ over the eigenmodes, that term carries the within-date resolvent weight $(1-\lambda_2)^{-1}$, so
\begin{equation}
y_t^{\ast,\mathrm{re}} = \frac{b}{1-\lambda_2}\,\eta_t,
\qquad
\Delta y_t^{\ast,\mathrm{re}}
=
\frac{b}{1-\lambda_2}\,(\eta_t-\eta_{t-1}).
\label{eq:Dy_static_lambda2}
\end{equation}

\smallskip
In the dynamic economy, by contrast, a date-$t$ shock is processed only one step before date $t+1$
arrives, so partially processed vintages remain in the system.  To make this overlap explicit
in differences, it is convenient to summarize the entire history of projected innovations
$\{\eta_{t-\tau}\}_{\tau\ge 0}$ by a single scalar state.  Define the dominant-mode state $s_t$ by
\begin{equation}
s_t := \sum_{\tau=0}^{t-1}\lambda_2^{\,\tau}\eta_{t-\tau},
\qquad\text{equivalently}\qquad
s_t=\lambda_2 s_{t-1}+\eta_{t},
\qquad s_0=0,
\label{eq:s_recursion}
\end{equation}
so that the reallocation channel is the scalar process\footnote{Here we align the dynamic series with the contemporaneous static benchmark
$y_t^{\ast,\mathrm{re}}=\tfrac{b}{1-\lambda_2}\eta_t$ by indexing the reduced state so that $s_t$, and hence
$y_t^{\mathrm{re}}=b\,s_t$, responds to the date-$t$ projected innovation $\eta_t$.  This absorbs the
one-period production lag of Definition~\ref{def:aggregate_output} (under which $y_{t+1}$ loads on
$\boldsymbol{\epsilon}_t$). Because the $\eta_t$ are i.i.d., the relabeling changes none of the volatility or
tail-risk objects, and it lines up the date-$t$ static and dynamic increments so that the two can be
compared date by date.}
\begin{equation}
\label{eq:dyn_output_approx}
y_t^{\mathrm{re}}= b\,s_t.
\end{equation}
With $s_t$ in hand,\footnote{Population statements evaluate this channel under its stationary law, in which $s_t=\sum_{\tau\ge0}\lambda_2^{\,\tau}\eta_{t-\tau}$ is the two-sided moving average. The initialization $s_0=0$ in \eqref{eq:s_recursion} matters only for finite-window statistics.} the infinite moving average becomes a one-dimensional recursion, the channel
increment has a closed form, and so do the two variances that we now compare.  Combining \eqref{eq:dyn_output_approx} with \eqref{eq:s_recursion}, the reallocation component of dynamic output growth can be written as
\begin{align}
\Delta y_t^{\mathrm{re}}
&= y_t^{\mathrm{re}}-y_{t-1}^{\mathrm{re}} \notag\\
&= b(s_t-s_{t-1}) \notag\\
&= b\bigl((\lambda_2-1)s_{t-1}+\eta_{t}\bigr).
\label{eq:Dy_dynamic_lambda2}
\end{align}

We now state the spectral counterpart of Lemma~\ref{lem:neumann_bounds_L}.\footnote{%
In standard network models, equilibrium output and influence vectors are functions of the resolvent
$(\mathbf I-\mathbf W)^{-1}$, whose spectral representation depends on the full
eigenvalue--eigenvector decomposition of $\mathbf W$.  Movements in eigenvalues therefore shift both
equilibrium exposure and the persistence of propagation.  Our results refine the usual influence-vector logic, in that the same spectral objects that determine equilibrium exposure also govern
transient survival, and hence the gap between static and dynamic volatility.}

\begin{theo}[Attenuation of growth-rate volatility by intertemporal overlap]
\label{thm:attenuation}
Maintain Assumptions~\ref{assu:innovations}, \ref{assu:misalignment}, and~\ref{assu:dominant_transient}, and evaluate the reallocation channel under its stationary law. Then the two increment variances are
\begin{align}
\Var(\Delta y_t^{\mathrm{re}})&=\frac{2b^2\sigma^2}{1+\lambda_2},
\label{eq:var_dynamic_increment}\\
\Var(\Delta y_t^{\ast,\mathrm{re}})&=\frac{2b^2\sigma^2}{(1-\lambda_2)^2},
\label{eq:var_static_increment}
\end{align}
so that
\begin{equation}
\frac{\Var(\Delta y_t^{\mathrm{re}})}{\Var(\Delta y_t^{\ast,\mathrm{re}})}
=
\mathcal R(\lambda_2),
\qquad\text{where}\quad
\mathcal R(\lambda):=\frac{(1-\lambda)^2}{1+\lambda},
\label{eq:attenuation_ratio}
\end{equation}
and the attenuation factor $\mathcal R$ is strictly decreasing on $(0,1)$ with
$\lim_{\lambda\uparrow 1}\mathcal R(\lambda)=0$.
\end{theo}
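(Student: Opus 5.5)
The plan is a direct second-moment computation for each increment, using only that the projected innovations $\eta_t=\tilde{\mathbf u}_2^\top\boldsymbol{\epsilon}_t$ are i.i.d.\ with mean zero and variance $\sigma^2$. This follows from Assumption~\ref{assu:innovations} and the normalization $\|\tilde{\mathbf u}_2\|=1$ in \eqref{eq:u2_normalization}. Since Assumption~\ref{assu:dominant_transient} makes $\lambda_2$ a real simple eigenvalue, we may take $\tilde{\mathbf u}_2$ real, and then $\Var(\eta_t)=\sigma^2\tilde{\mathbf u}_2^\top\tilde{\mathbf u}_2=\sigma^2$. Independence of $\eta_t$ across dates is inherited from independence of the $\boldsymbol{\epsilon}_t$. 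Gaussianity is not needed here; only independence and the common variance are used.

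For the static increment \eqref{eq:Dy_static_lambda2}, we have $\eta_t-\eta_{t-1}$ with independent terms. Hence $\Var(\eta_t-\eta_{t-1})=2\sigma^2$, and multiplying by $b^2/(1-\lambda_2)^2$ gives \eqref{eq:var_static_increment}.

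For the dynamic increment, I will work under the stationary law, in which $s_t=\sum_{\tau\ge0}\lambda_2^\tau\eta_{t-\tau}$. This series converges in $L^2$ because $0<\lambda_2<1$, and it gives $\Var(s_t)=\sigma^2/(1-\lambda_2^2)$. Then \eqref{eq:Dy_dynamic_lambda2} writes $\Delta y_t^{\mathrm{re}}=b\bigl((\lambda_2-1)s_{t-1}+\eta_t\bigr)$. Here $s_{t-1}$ depends only on $\eta_{t-1},\eta_{t-2},\ldots$, so it is independent of $\eta_t$, and the cross term vanishes. This yields
\[
\Var(\Delta y_t^{\mathrm{re}})=b^2\sigma^2\left(\frac{(1-\lambda_2)^2}{(1-\lambda_2)(1+\lambda_2)}+1\right)=b^2\sigma^2\,\frac{(1-\lambda_2)+(1+\lambda_2)}{1+\lambda_2}=\frac{2b^2\sigma^2}{1+\lambda_2},
\]
which is \eqref{eq:var_dynamic_increment}.

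The ratio \eqref{eq:attenuation_ratio} then follows by division; it is well defined because $b\neq0$ by Assumption~\ref{assu:misalignment}.

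For the properties of $\mathcal R$, I will differentiate:
\[
\mathcal R'(\lambda)=\frac{-2(1-\lambda)(1+\lambda)-(1-\lambda)^2}{(1+\lambda)^2}=-\frac{(1-\lambda)(3+\lambda)}{(1+\lambda)^2},
\]
which is strictly negative on $(0,1)$, so $\mathcal R$ is strictly decreasing there. The limit $\mathcal R(\lambda)\to0$ as $\lambda\uparrow1$ is immediate because the numerator vanishes while the denominator tends to $2$.

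There is no real obstacle. The only point requiring care is justifying the stationary-law evaluation. Either the two-sided moving average is used directly, or, starting from $s_0=0$, we have $\Var(s_{t-1})=\sigma^2(1-\lambda_2^{2(t-1)})/(1-\lambda_2^2)$, which converges to the stationary value geometrically. Either route gives the displayed variances as the limits in Definition~\ref{def:agg_vol_static_dynamic}.
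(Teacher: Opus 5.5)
Your proposal is correct and follows the paper's own proof essentially line for line. You compute the static variance directly from $\eta_t-\eta_{t-1}$, compute the dynamic one from $b\bigl((\lambda_2-1)s_{t-1}+\eta_t\bigr)$ using independence of the stationary state $s_{t-1}$ from $\eta_t$ and $\Var(s_{t-1})=\sigma^2/(1-\lambda_2^2)$, and establish the monotonicity and limit of $\mathcal R$ by the same derivative $-(1-\lambda)(3+\lambda)/(1+\lambda)^2$.
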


\noindent\emph{Proof in Appendix~\ref{app:proof:thm:attenuation}.}

Under full processing of shocks, the within-date mapping from the projected innovation $\eta_t$ to the channel is amplified by the resolvent weight $(1-\lambda_2)^{-1}$. Differencing does not undo this amplification, because the resolvent multiplier scales $y_t^{\ast,\mathrm{re}}$ and $y_{t-1}^{\ast,\mathrm{re}}$ by the same constant. The static increment variance therefore rises with $\lambda_2$ like $(1-\lambda_2)^{-2}$, up to $(1-\varsigma)^{-2}$ times $2b^2\sigma^2$ within the model, where $\lambda_2<\varsigma$.

The dynamic variance $\Var(\Delta y_t^{\mathrm{re}})$ differs from this benchmark in one respect only, which is that the vintages overlap. With one-step propagation, each vintage is only partially processed when the next arrives, leaving a residual imbalance that interacts with subsequent shocks. When we difference output, this inherited component is subtracted against the new disturbance, so partially processed waves from different vintages interfere and cancel. The factor $\mathcal R$ is the fraction of the fully processed variance that survives this interference. As $\lambda_2$ rises, the dominant transient becomes more persistent, waves live longer, overlap grows, and $\mathcal R$ falls like $(1-\lambda_2)^2$. So $\lambda_2$ pulls in two directions at once, raising the static benchmark through the resolvent and lowering $\mathcal R$ by making vintages overlap for longer. Within the model both forces act over the range $0<\lambda_2<\lambda_1=\varsigma$, because the transient rate is $\varsigma\lambda_2(\mathbf A)$ with $\lambda_2(\mathbf A)<1$. Over that range the static reallocation variance rises from $2b^2\sigma^2$ toward $2b^2\sigma^2/(1-\varsigma)^2$, the dynamic reallocation variance falls from $2b^2\sigma^2$ toward $2b^2\sigma^2/(1+\varsigma)$, and the channel ratio $\mathcal R(\lambda_2)$ falls from one toward the granular rate $\mathcal R(\varsigma)$. When the network mixes slowly, so that $\lambda_2$ is close to $\varsigma$, the reallocation channel survives overlap little better than the granular channel does, and the dominant transient mode is close to invisible in the aggregate ratio. The limit $\lambda_2\uparrow1$ is not reachable at fixed returns to scale, since along any sequence of economies on which $\lambda_2\to1$ the Perron rate $\varsigma$ also tends to one, and statements about slow mixing in this paper are therefore statements about $\lambda_2$ close to $\varsigma$ at fixed $\varsigma$.\footnote{Nothing in Theorem~\ref{thm:attenuation} uses Gaussianity or the centering of the innovations. The ratio compares second moments, so any i.i.d.\ innovation stream with finite variance gives the same $\mathcal R(\lambda_2)$, and a common innovation mean $\mu$ shifts the stationary path of log output by the constant $\mu/(1-\varsigma)$, which cancels from every increment (Lemma~\ref{lem:mean_invariance} in Appendix~\ref{app:reversal}). The attenuation factor, the variance bound of Lemma~\ref{lem:neumann_bounds_L}, and the magnitude bound of Corollary~\ref{coro:reversal_magnitude} in the same appendix therefore apply verbatim to one-sided innovations. The results that compare probabilities rather than second moments use the Gaussian shape of the increments and do not extend, and Appendix~\ref{app:reversal} shows how each can fail. One-sided innovations also slow convergence to the stochastic steady state, since the mean of log output must then travel to its shifted level at the rate $\varsigma$, whereas under centered innovations only second moments converge, at the faster rate $\varsigma^2$.}

\smallskip
The granular (Perron) channel of \eqref{eq:y_perron_transient} time-averages too, and we now record its rate. The granular channel is the scalar
$\mathrm{AR}(1)$ process $\mathbf m^{\ast\top}\sum_{\tau\ge0}\lambda_1^{\tau}\boldsymbol{\epsilon}_{t-\tau}$,
with i.i.d.\ driver $g_t:=\mathbf m^{\ast\top}\boldsymbol{\epsilon}_t$ of variance
$\sigma_g^2=\sigma^2\|\mathbf m^\ast\|_2^2$ (the size Herfindahl).

\begin{coro}[Time-averaging of the granular channel]
\label{coro:granular_perron_rate}
The granular (Perron) channel has dynamic and static increment variances in the ratio
\[
\frac{\Var(\Delta y_t^{\mathrm{gr}})}{\Var(\Delta y_t^{\ast,\mathrm{gr}})}
=
\frac{(1-\varsigma)^2}{1+\varsigma}
=
\mathcal R(\lambda_1),
\]
with $\mathcal R(\cdot)$ the attenuation factor of Theorem~\ref{thm:attenuation}.
\end{coro}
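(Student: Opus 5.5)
The plan is to reduce the corollary to Theorem~\ref{thm:attenuation} by observing that the granular channel has exactly the same scalar structure as the reallocation channel, with $(\lambda_1,1,g_t)$ in place of $(\lambda_2,b,\eta_t)$.

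\emph{Dynamic increment.} By Lemma~\ref{lem:modal_decomposition} and \eqref{eq:y_perron_transient}, with $c_1=1$ and $\mathbf u_1=\mathbf m^\ast$, the granular channel is $y_t^{\mathrm{gr}}=s_{1,t}$, where $s_{1,t}=\lambda_1 s_{1,t-1}+g_t$. I date this state, as in \eqref{eq:s_recursion}, by its newest innovation. Under the stationary law $s_{1,t-1}=\sum_{\tau\ge0}\lambda_1^\tau g_{t-1-\tau}$, so the increment is
\[
\Delta y_t^{\mathrm{gr}}=(\lambda_1-1)s_{1,t-1}+g_t .
\]
Here $s_{1,t-1}$ is independent of $g_t$ and has variance $\sigma_g^2/(1-\lambda_1^2)$. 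Independence across dates follows from Assumption~\ref{assu:innovations}: the vectors $\boldsymbol\epsilon_t$ are i.i.d., so the scalars $g_t=\mathbf m^{\ast\top}\boldsymbol\epsilon_t$ are i.i.d. with variance $\sigma^2\|\mathbf m^\ast\|_2^2$. The variance of the increment is then
\[
\sigma_g^2\Bigl(1+\tfrac{(1-\lambda_1)^2}{1-\lambda_1^2}\Bigr)=\tfrac{2\sigma_g^2}{1+\lambda_1}.
\]
This is the same computation as \eqref{eq:var_dynamic_increment} with $b^2\sigma^2$ replaced by $\sigma_g^2$. The only difference is the variance of the driver: there it was $\sigma^2$ because $\|\tilde{\mathbf u}_2\|=1$, and here it is $\sigma_g^2$.

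\emph{Static increment.} I expand $(\mathbf I-\mathbf W)^{-1}=\sum_r(1-\lambda_r)^{-1}\mathbf v_r\mathbf u_r^\top$. Its Perron term, aggregated with $\boldsymbol\gamma$, gives $y_t^{\ast,\mathrm{gr}}=g_t/(1-\lambda_1)$, since $\boldsymbol\gamma^\top\mathbf v_1=1$. The static increment is therefore $(g_t-g_{t-1})/(1-\lambda_1)$, whose variance is $2\sigma_g^2/(1-\lambda_1)^2$.

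\emph{Ratio.} Dividing the two variances cancels $\sigma_g^2$ and gives $(1-\lambda_1)^2/(1+\lambda_1)$. Substituting $\lambda_1=\varsigma$ from \eqref{eq:W_spectral} yields $\mathcal R(\varsigma)$.

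The main point needing care is that nothing in the proof of Theorem~\ref{thm:attenuation} uses properties specific to mode~2 beyond three facts: the eigenvalue is real with $0<\lambda<1$, the driver is i.i.d. with finite variance, and the loading is a constant that cancels in the ratio. All three hold for the Perron mode: $\lambda_1=\varsigma\in(0,1)$, the driver $g_t$ is i.i.d., and the loading is $c_1=1$. In particular the loading is nonzero for every admissible $\boldsymbol\gamma$, so Assumption~\ref{assu:misalignment} is not needed here. For the same reason Assumption~\ref{assu:dominant_transient} plays no role, since the Perron eigenvalue is simple, real and positive by primitivity. I would also note that $\sigma_g^2>0$ because $\mathbf m^\ast\neq\mathbf 0$, so the ratio is well defined.
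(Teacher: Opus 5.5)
Your proposal is correct and follows the paper's own proof, which likewise treats the granular channel as the scalar system of Theorem~\ref{thm:attenuation} under the substitution $(b,\lambda_2,\eta_t)\mapsto(1,\lambda_1,g_t)$ and notes that $\sigma_g^2$ cancels in the ratio. You simply write out the two variance computations, $2\sigma_g^2/(1+\lambda_1)$ and $2\sigma_g^2/(1-\lambda_1)^2$, that the paper leaves implicit.
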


\noindent\emph{Proof in Appendix~\ref{app:proof:coro:granular_perron_rate}.}

\noindent
The rate $\mathcal R(\lambda_1)=(1-\varsigma)^2/(1+\varsigma)$ depends only on returns to scale, a primitive that carries no information about how the network mixes, so two economies with the same returns to scale and entirely different production networks time-average their granular channels identically. The reallocation channel decays instead at the rate at which the network mixes expansions and contractions across firms, which is the one object that distinguishes one production network from another. And because $\lambda_2<\lambda_1$ and $\mathcal R$ is decreasing, the reallocation channel survives time-averaging more than the granular channel does. Degree heterogeneity therefore reaches realized volatility through two routes that time-average at different speeds. Through the granular route it enters as the concentration of the size distribution, which scales the level of that channel's variance but leaves its rate of time-averaging at returns to scale. Through the reallocation route it enters as the rate of time-averaging itself.

\smallskip
The two channels are separable in a stronger sense than the algebra suggests.  Because the spectral
projectors annihilate one another, $\mathbf P_r\mathbf P_s=\delta_{rs}\mathbf P_r$, a disturbance to
the common level of output and a disturbance to the allocation of inputs across firms decay
independently of each other.  The first decays at the Perron rate $\lambda_1$ and the second at the
transient rate $\lambda_2$, and at no date does the network convert one into the other.  A reshuffling of inputs
across firms never becomes a movement of the aggregate level, and a movement of the level never
becomes a reshuffling.  This decoupling of the dynamics, and no assumption about the shocks, makes the two survival rates exact.  Each rate is
the time-averaging of a single autonomous process, with the driver variance and the aggregate loading
canceling inside its own ratio.\footnote{%
What decoupling does not give us is statistical separability of the two channels, and we do not
assume it.  Decoupling is a property of the dynamics (the modal amplitudes evolve autonomously), not of
the innovations.  The projected shocks $g_t=\mathbf m^{\ast\top}\boldsymbol\epsilon_t$ and
$\eta_t=\tilde{\mathbf u}_2^\top\boldsymbol\epsilon_t$ are in general correlated, with
$\Cov(g_t,\eta_t)=\sigma^2\,\mathbf m^{\ast\top}\tilde{\mathbf u}_2$.  Biorthogonality restricts only left-against-right
eigenvectors, $\mathbf u_r^\top\mathbf v_s=\delta_{rs}$, which yields the ``pure reallocation'' identities
$\mathbf u_2^\top\mathbf 1=0$ and $\mathbf m^{\ast\top}\mathbf v_2=0$, but places no restriction on the
left--left inner product $\mathbf u_1^\top\mathbf u_2$ (equivalently $\mathbf m^{\ast\top}\tilde{\mathbf u}_2$, since
$\mathbf m^\ast=\mathbf u_1$ and $\tilde{\mathbf u}_2\propto\mathbf u_2$).  That product vanishes only in non-generic cases, in
particular when $\mathbf W$ is symmetric, which in the present model means a symmetric input-share
matrix.  An undirected network with unequal degrees does not suffice.  Its column-normalized share
matrix is $\mathbf A=\mathbf B\,\mathrm{diag}(\mathbf d)^{-1}$, with $\mathbf B$ the symmetric adjacency
matrix and $\mathbf d$ the degree vector.  The operator
$\mathbf W=\varsigma\,\mathrm{diag}(\mathbf d)^{-1}\mathbf B$ is then similar to a symmetric matrix but is
not normal in the Euclidean inner product under which the innovations are isotropic, so its left
eigenvectors need not be orthogonal.  In a directed production network the hubs that
move the aggregate level are also those whose shocks set off reallocation, so a single draw excites both
modes in correlated amounts.  Total aggregate volatility then carries a level--reallocation cross-term
$\propto b\,(\mathbf m^{\ast\top}\tilde{\mathbf u}_2)$ in addition to the two channel variances.  The per-channel ratios are exact whether or not the projected shocks are correlated. The cross-term enters only if one recombines the channels into a single
aggregate ratio, and it is itself meaningful, being the macroeconomic co-movement, induced by the same shock,
of level shifts and input reallocations.}

%==========================================================
\subsection{Aggregation weights and measured aggregate volatility}
\label{subsec:aggregator_geometry}
%==========================================================

So far we have looked at the reallocation channel on its own and, in
Corollary~\ref{coro:granular_perron_rate}, at the granular channel on its own.  An observer sees both
at once, and how much of each enters the aggregate is decided by the aggregation vector rather than
by the network or by the shocks.  We now work out what the aggregation weights do, for all modes at once.  Recall from Lemma~\ref{lem:modal_decomposition} the modal loadings
$c_r=\boldsymbol\gamma^\top\mathbf v_r$ and the modal innovations
$\xi_{r,t}=\mathbf u_r^\top\boldsymbol{\epsilon}_t$.

\begin{theo}[Exact aggregate variances under any admissible weights]
\label{thm:aggregator_geometry}
Let $\mathbf W$ be diagonalizable, maintain Assumption~\ref{assu:innovations}, fix an admissible weight vector $\boldsymbol\gamma$, write $\mathbf M_r:=c_r\mathbf u_r$, and evaluate all variances under the stationary law. Then
\begin{align*}
\Var(\Delta y_t)
&=
\sigma^2\Bigg(\Big\|\sum_r \mathbf M_r\Big\|_2^2
+\sum_{\tau\ge0}\Big\|\sum_r (1-\lambda_r)\lambda_r^{\,\tau}\mathbf M_r\Big\|_2^2\Bigg),
\\
\Var(\Delta y_t^\ast)
&=
2\sigma^2\Big\|\sum_r \frac{\mathbf M_r}{1-\lambda_r}\Big\|_2^2.
\end{align*}
\end{theo}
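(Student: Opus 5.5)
The plan is to write both increments as linear filters of the i.i.d.\ innovation stream and then apply the isometry $\Var(\mathbf a^\top\boldsymbol\epsilon_s)=\sigma^2\|\mathbf a\|_2^2$ for real $\mathbf a$, together with independence across dates. Lemma~\ref{lem:modal_decomposition} gives $y_t=\sum_r c_r s_{r,t}$, and under the stationary law $s_{r,t}=\sum_{\tau\ge0}\lambda_r^{\tau}\xi_{r,t-\tau}$. Hence
\[
y_t=\sum_{\tau\ge0}\Big(\sum_r\lambda_r^{\tau}\mathbf M_r\Big)^{\!\top}\boldsymbol\epsilon_{t-\tau},
\qquad \mathbf M_r=c_r\mathbf u_r .
\]
The vector $\sum_r\lambda_r^\tau\mathbf M_r=(\boldsymbol\gamma^\top\mathbf W^\tau)^\top$ is real, because complex modes pair with their conjugates, as Lemma~\ref{lem:modal_decomposition} records. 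This lets us take Euclidean norms without worrying about conjugation.

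The key step is to difference the filter coefficient by coefficient. The coefficient on $\boldsymbol\epsilon_t$ in $\Delta y_t=y_t-y_{t-1}$ is $\sum_r\mathbf M_r$, coming from the $\tau=0$ term of $y_t$. For $\tau\ge1$, the coefficient on $\boldsymbol\epsilon_{t-\tau}$ is
\[
\sum_r(\lambda_r^{\tau}-\lambda_r^{\tau-1})\mathbf M_r=-\sum_r(1-\lambda_r)\lambda_r^{\tau-1}\mathbf M_r .
\]
Reindexing $\tau-1\mapsto\tau$ and using independence across dates and isotropy within dates gives the stated dynamic formula; the minus sign disappears inside the squared norm. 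Summability holds because every $|\lambda_r|\le\varsigma<1$, so the series converges geometrically. This is also what justifies passing to the stationary law, since the limit defining $\phi$ in Definition~\ref{def:agg_vol_static_dynamic} coincides with this two-sided variance.

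For the static increment, expand the resolvent modally as $(\mathbf I-\mathbf W)^{-1}=\sum_r(1-\lambda_r)^{-1}\mathbf v_r\mathbf u_r^\top$, which follows from \eqref{eq:A_power_modal_subsec} and the Neumann series \eqref{eq:neumann_series}. Then $y_t^\ast=\big(\sum_r\mathbf M_r/(1-\lambda_r)\big)^\top\boldsymbol\epsilon_t$ with a real coefficient vector. The increment $\Delta y_t^\ast$ is this same vector applied to $\boldsymbol\epsilon_t-\boldsymbol\epsilon_{t-1}$. Since $\boldsymbol\epsilon_t$ and $\boldsymbol\epsilon_{t-1}$ are independent with common covariance $\sigma^2\mathbf I$, the variance is twice $\sigma^2$ times the squared norm of that vector, which is the stated static formula.

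I expect the main obstacle to be bookkeeping rather than substance. The first point is keeping the relabeled dating of Lemma~\ref{lem:modal_decomposition} consistent with the one-period lag of Definition~\ref{def:aggregate_output}; this relabeling does not affect stationary variances because the innovations are i.i.d. The second point is checking carefully that each coefficient vector is real, so that $\|\cdot\|_2^2$ is the correct quadratic form rather than a Hermitian one. I would handle the second point by always grouping the modes into the real matrices $\boldsymbol\gamma^\top\mathbf W^\tau$ and $\boldsymbol\gamma^\top(\mathbf I-\mathbf W)^{-1}$ before taking norms.
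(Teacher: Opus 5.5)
Your proposal is correct and is essentially the paper's proof: both write $\Delta y_t$ as a moving average of the innovations with coefficient $\sum_r\mathbf M_r$ on $\boldsymbol\epsilon_t$ and $-\sum_r(1-\lambda_r)\lambda_r^{\,\tau}\mathbf M_r$ on $\boldsymbol\epsilon_{t-1-\tau}$. Both then use independence across dates and $\Cov(\boldsymbol\epsilon_t)=\sigma^2\mathbf I$, and for the static side the modal resolvent with $\Var(\Delta y_t^\ast)=2\Var(y_t^\ast)$. The only difference is bookkeeping: the paper substitutes the stationary moving average for $s_{r,t-1}$ into $(\lambda_r-1)s_{r,t-1}+\xi_{r,t}$, whereas you difference the moving-average coefficients directly, and your identification of the coefficient vectors with the real vectors $(\boldsymbol\gamma^\top\mathbf W^{\tau})^\top$ makes explicit the realness the paper only asserts.
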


\noindent\emph{Proof in Appendix~\ref{app:proof:thm:aggregator_geometry}.}

The two variances hold whether or not the transient eigenvalues beyond the dominant one are complex, because conjugate modes combine into real coefficient vectors $\sum_r(1-\lambda_r)\lambda_r^{\,\tau}\mathbf M_r$. Both variances carry cross-terms between modes, the level--reallocation cross-term of Section~\ref{subsec:dominant_transient_vol} among them, and those cross-terms separate the aggregate ratio of dynamic to static variance from every weighted average of the per-mode factors $\mathcal R(\lambda_r)$.\footnote{The exact variances also say how much a representation that keeps only the granular and the reallocation channels leaves out. Write the increment of the omitted block, the modes $r\ge3$ of \eqref{eq:y_perron_transient}, as the filter $\sum_{k\ge0}\mathbf h_k^\top\boldsymbol{\epsilon}_{t-k}$, with $\mathbf h_0=\sum_{r\ge3}\mathbf M_r$ and $\mathbf h_{k+1}=-\sum_{r\ge3}(1-\lambda_r)\lambda_r^{\,k}\mathbf M_r$, which are real because conjugate modes combine. Write $V_e:=\sigma^2\sum_{k\ge0}\|\mathbf h_k\|_2^2$ for its stationary variance and $V_r$ for the stationary increment variance of the retained granular and reallocation channels. Then $|\Var(\Delta y_t)-V_r|\le V_e+2\sqrt{V_eV_r}$, because the covariance between the retained and the omitted increments is bounded in modulus by $\sqrt{V_eV_r}$, whether or not the retained and omitted innovations are correlated. The spectral gap does not control $V_e$. The own contribution of a real mode $r$ to the dynamic increment variance is $2\sigma^2c_r^2\|\mathbf u_r\|_2^2/(1+\lambda_r)$, and the factor $1/(1+\lambda_r)$ lies between one half and one for every eigenvalue in $(0,1)$, so a small eigenvalue does not make a mode's contribution small. The two-mode representation is accurate when $V_e$ is small relative to $V_r$, which is a condition on the loadings of the omitted modes that a spectral gap alone does not deliver.}

\smallskip
At size weights the exact ratio needs neither diagonalizability nor any assumption on the transient spectrum, and we state it in that generality.

\begin{proposi}[Aggregate volatility under size weights]
\label{prop:singular_aggregator}
Let the innovations $\{\boldsymbol{\epsilon}_t\}$ be i.i.d.\ across dates with $\Cov(\boldsymbol{\epsilon}_t)=\sigma^2\mathbf I$, and let $\boldsymbol\gamma=\mathbf m^\ast$. Then aggregate output is the scalar autoregression driven by the size-weighted innovation $g_t=\mathbf m^{\ast\top}\boldsymbol{\epsilon}_t$,
\[
y_{t+1}=\sum_{\tau=0}^{t}\varsigma^{\,\tau}g_{t-\tau},
\]
and, under the stationary law,
\[
\frac{\Var(\Delta y_t)}{\Var(\Delta y_t^\ast)}
=
\mathcal R(\varsigma)
=
\frac{(1-\varsigma)^2}{1+\varsigma}
\]
exactly.
\end{proposi}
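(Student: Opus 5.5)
The plan is to reduce everything to a scalar computation using only the left Perron identity, so that no diagonalizability or spectral assumption enters. First I would record that $\mathbf m^{\ast\top}\mathbf W=\varsigma\,\mathbf m^{\ast\top}$. This follows from $\mathbf A\mathbf m^\ast=\mathbf m^\ast$, which transposes to $\mathbf m^{\ast\top}\mathbf A^\top=\mathbf m^{\ast\top}$, and then from multiplying by $\varsigma$. Induction gives $\mathbf m^{\ast\top}\mathbf W^\tau=\varsigma^\tau\mathbf m^{\ast\top}$ for every $\tau\ge0$. Substituting $\boldsymbol\gamma=\mathbf m^\ast$ (admissible, since it is positive with $\mathbf 1^\top\mathbf m^\ast=1$) into Definition~\ref{def:aggregate_output} yields $y_{t+1}=\sum_{\tau=0}^t\varsigma^\tau g_{t-\tau}$ directly. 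For the static series, the Neumann series \eqref{eq:neumann_series} gives $\mathbf m^{\ast\top}(\mathbf I-\mathbf W)^{-1}=(1-\varsigma)^{-1}\mathbf m^{\ast\top}$, and hence $y_t^\ast=g_t/(1-\varsigma)$. The drivers $g_t$ are i.i.d.\ with mean zero and variance $\sigma_g^2=\sigma^2\|\mathbf m^\ast\|_2^2>0$, and this uses only the covariance assumption, not Gaussianity.

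Next I would compute the two variances. The static one is immediate: $\Delta y_t^\ast=(g_t-g_{t-1})/(1-\varsigma)$, so $\Var(\Delta y_t^\ast)=2\sigma_g^2/(1-\varsigma)^2$ for every $t$, and the limit is trivial. For the dynamic series, I would pass to the stationary law, which is the two-sided moving average $s_t=\sum_{\tau\ge0}\varsigma^\tau g_{t-\tau}$. Shifting the date by the production lag changes nothing, since the $g_t$ are i.i.d. Then $\Delta s_t=g_t+(\varsigma-1)s_{t-1}$, and $s_{t-1}$ is independent of $g_t$ with $\Var(s_{t-1})=\sigma_g^2/(1-\varsigma^2)$. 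This gives
\[
\Var(\Delta y_t)=\sigma_g^2\Bigl(1+\frac{(1-\varsigma)^2}{1-\varsigma^2}\Bigr)=\sigma_g^2\Bigl(1+\frac{1-\varsigma}{1+\varsigma}\Bigr)=\frac{2\sigma_g^2}{1+\varsigma}.
\]
This is exactly the computation of Theorem~\ref{thm:attenuation} with $\lambda_2$ replaced by $\varsigma$, and equivalently Corollary~\ref{coro:granular_perron_rate}. I would cite it rather than redo it, checking only that its proof used nothing beyond i.i.d.\ drivers with finite variance.

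Finally I would take the ratio: $\frac{2\sigma_g^2/(1+\varsigma)}{2\sigma_g^2/(1-\varsigma)^2}=\frac{(1-\varsigma)^2}{1+\varsigma}=\mathcal R(\varsigma)$, and $\sigma_g^2$ cancels. The only point needing care is the generality claim. I would stress that $\mathbf m^\ast$ exists by primitivity of $\mathbf A$ alone. The argument also never invokes an eigen-decomposition of $\mathbf W$, only the single left-eigenvector identity, so diagonalizability and Assumption~\ref{assu:dominant_transient} are not needed. Convergence of $\Var(\Delta y_t)$ from the start $s_0=0$ to its stationary value holds because $\Var(s_{t-1})=\sigma_g^2(1-\varsigma^{2(t-1)})/(1-\varsigma^2)$, which converges geometrically. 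I expect no real obstacle; the main risk is a bookkeeping slip in aligning the production-lag dating, which is harmless since the drivers are i.i.d.
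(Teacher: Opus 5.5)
Your proposal is correct and follows the paper's proof essentially step for step. Both use the left Perron identity $\mathbf m^{\ast\top}\mathbf W=\varsigma\,\mathbf m^{\ast\top}$ with induction, substitute into the definition of aggregate output, use the Neumann series to get $y_t^\ast=g_t/(1-\varsigma)$, and then apply the scalar variance computation of Theorem~\ref{thm:attenuation} under $(b,\lambda_2,\eta_t)\mapsto(1,\varsigma,g_t)$, which needs only i.i.d.\ drivers with finite variance. Your explicit check that $\Var(\Delta y_t)$ converges geometrically from $s_0=0$ is a harmless addition that makes the limit defining $\phi$ explicit. Your remark that $g_t$ has mean zero is not needed, since only second moments enter.
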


\noindent\emph{Proof in Appendix~\ref{app:proof:prop:singular_aggregator}.}

The first thing to take from Theorem~\ref{thm:aggregator_geometry} and Proposition~\ref{prop:singular_aggregator} is that the granular channel is not optional.  Its loading $c_1$ equals one for every admissible weight vector (Lemma~\ref{lem:modal_decomposition}), because $\mathbf v_1=\mathbf 1$ and weights sum to one, so no choice of
$\boldsymbol\gamma$ can remove the common mode or change its standalone contribution.  That
contribution is $2\sigma^2\|\mathbf m^\ast\|_2^2/(1-\lambda_1)^2$ to the static increment variance and
$2\sigma^2\|\mathbf m^\ast\|_2^2/(1+\lambda_1)$ to the dynamic one.  Its share of either aggregate variance does change with the weights, because the transient loadings and the cross-terms between
modes change.  Only the transient loadings $\{c_r\}_{r\ge2}$ are at the observer's disposal.

\smallskip
The second is that one choice of weights removes the network's transient spectrum from aggregate output.  Proposition~\ref{prop:singular_aggregator} is exact and needs no diagonalizability, no
spectral gap, no Gaussianity, and no assumption on $\lambda_2$.  Weighting firms by their equilibrium
size makes the aggregation vector a left Perron direction of the propagation operator, so aggregate
output is a scalar autoregression at the Perron rate and no transient component is measured at any
horizon.  An observer using size weights therefore recovers
$\mathcal R(\lambda_1)$ for every network, and would conclude that the speed of convergence is
governed by returns to scale alone.  Such an observer would be right about the number and wrong about its source.  The network's
transient spectrum is still there, and it still governs how fast disturbances mix across firms, but a
size-weighted aggregate never loads on it, so nothing about the network's spectrum can show up in the
ratio that observer measures.  The levels of the two variances,
$2\sigma^2\|\mathbf m^\ast\|_2^2/(1+\varsigma)$ and $2\sigma^2\|\mathbf m^\ast\|_2^2/(1-\varsigma)^2$, do
carry the network through the size Herfindahl $\|\mathbf m^\ast\|_2^2$, which multiplies both and
cancels from their ratio.  Equilibrium gross-sales weighting is the convention that our computational experiments adopt.

\smallskip
The third is that between these poles the aggregate ratio of dynamic to static variance is not a weighted average of the per-mode factors, because the exact variances of Theorem~\ref{thm:aggregator_geometry} carry cross-terms between modes. The two channel rates therefore describe the two channels and do not bracket the aggregate ratio at other weights. Assumption~\ref{assu:misalignment} is the requirement that the weights do not lie on the set of weight vectors orthogonal to $\tilde{\mathbf v}_2$, of which size weights are one point.

%==========================================================
\subsection{Tail risk}
\label{subsec:omega_dist_dom}
%==========================================================

We now ask how overlap changes the probability of a large fall in output. Under Assumption~\ref{assu:innovations} each increment is a linear combination of Gaussian innovations and therefore a centered normal random variable, so the probability that output falls by more than $c$ in one step depends on the increment variance alone,
\[
\omega_c^\ast
:=
\Pr(\Delta y_t^\ast<-c)
=
\Phi\!\left(-\frac{c}{\sqrt{\phi^\ast}}\right),
\qquad
\omega_c
:=
\Pr(\Delta y_t<-c)
=
\Phi\!\left(-\frac{c}{\sqrt{\phi}}\right).
\]
Here $\Phi$ is the standard normal distribution function. Since $\phi^\ast>\phi$, the static tail probability exceeds the dynamic one at every threshold. We measure the gap between the two regimes by the variance ratio
\begin{equation}
\kappa
\;:=\;
\frac{\phi}{\phi^\ast}
\;\in\;
(0,1),
\label{eq:kappa_def}
\end{equation}
which for the reallocation channel equals the attenuation factor $\mathcal R(\lambda_2)$ of Theorem~\ref{thm:attenuation}.\footnote{We keep the separate symbol because the tail comparison uses only the variance ratio, and so applies to any pair of centered Gaussian increments with ordered variances, aggregate output under any admissible weights included.} A gap that is linear in $\kappa$ for variances is exponential for tail probabilities.

\begin{proposi}[Tail-risk magnification of the timing wedge]
\label{prop:tail_amp}
Maintain Assumption~\ref{assu:innovations} and the stationary law, and write $x:=c/\sqrt{\phi}$. Then for every $c>0$,
\begin{equation}
\frac{\omega_c^\ast}{\omega_c}
=
\frac{\Phi\!\left(-x\sqrt{\kappa}\right)}
     {\Phi\!\left(-x\right)},
\label{eq:tail_ratio_exact}
\end{equation}
and, as $x\to\infty$,
\begin{equation}
\frac{\omega_c^\ast}{\omega_c}
\;\sim\;
\frac{1}{\sqrt{\kappa}}\,
\exp\left\{\frac{1}{2}x^2\left(1-\kappa\right)\right\}.
\label{eq:tail_ratio_asymp}
\end{equation}
\end{proposi}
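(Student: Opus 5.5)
The plan is to use the Gaussian representation established just before the statement. Under Assumption~\ref{assu:innovations} and the stationary law, both $\Delta y_t^\ast$ and $\Delta y_t$ are finite or absolutely convergent linear combinations of independent centered normals. They are therefore centered normal, with variances $\phi^\ast$ and $\phi$; the absolute convergence follows from $\rho(\mathbf W)=\varsigma<1$, so the $L^2$ limit of Gaussians is Gaussian. Hence $\omega_c^\ast=\Phi(-c/\sqrt{\phi^\ast})$ and $\omega_c=\Phi(-c/\sqrt{\phi})$. The exact identity \eqref{eq:tail_ratio_exact} then follows by substituting $c=x\sqrt{\phi}$ and using the definition $\kappa=\phi/\phi^\ast$ from \eqref{eq:kappa_def}:
\[
\frac{c}{\sqrt{\phi^\ast}}=x\sqrt{\frac{\phi}{\phi^\ast}}=x\sqrt{\kappa}.
\]
I would state that this step uses nothing about $\kappa$ beyond positivity of both variances. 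The hypothesis $\kappa\in(0,1)$ matters only for interpretation, for instance via Lemma~\ref{lem:neumann_bounds_L} or Theorem~\ref{thm:attenuation}.

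For the asymptotic statement \eqref{eq:tail_ratio_asymp}, I will invoke the standard Mills-ratio asymptotics, $\Phi(-u)\sim \varphi(u)/u$ as $u\to\infty$, where $\varphi$ is the standard normal density. I would quote it, or derive it from the bounds
\[
\frac{u}{1+u^2}\,\varphi(u)\le \Phi(-u)\le \frac{\varphi(u)}{u}
\qquad (u>0),
\]
which are obtained by integration by parts. Since $\kappa>0$ is fixed, $x\to\infty$ implies $x\sqrt{\kappa}\to\infty$, so both numerator and denominator can be replaced by their asymptotic equivalents. The ratio of two asymptotic equivalences is again one. This gives
\[
\frac{\Phi(-x\sqrt\kappa)}{\Phi(-x)}\sim \frac{\varphi(x\sqrt\kappa)/(x\sqrt\kappa)}{\varphi(x)/x}=\frac{1}{\sqrt\kappa}\exp\Bigl\{-\tfrac12 x^2\kappa+\tfrac12 x^2\Bigr\},
\]
which is \eqref{eq:tail_ratio_asymp}.

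The only delicate point is the interpretation of the limit $x\to\infty$. The ratio $\kappa$ must be held fixed, with $c$ growing or $\phi$ fixed, so that the equivalence is uniform in nothing else. I would remark that because $\kappa<1$ the exponent $\tfrac12x^2(1-\kappa)$ diverges. The tail ratio therefore grows like a Gaussian in $x$, whereas the variance ratio $1/\kappa$ is fixed, which is the claimed magnification. There is no real obstacle beyond confirming the Gaussianity of the stationary increments; everything else is the Mills-ratio computation.
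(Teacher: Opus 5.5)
Your proposal is correct and follows the paper's proof essentially step for step. Gaussianity of both stationary increments gives $\omega_c=\Phi(-x)$ and $\omega_c^\ast=\Phi(-x\sqrt{\kappa})$ through $c/\sqrt{\phi^\ast}=x\sqrt{\kappa}$, and the Mills-ratio asymptotic $\Phi(-u)\sim\varphi(u)/u$, applied at $u=x\sqrt{\kappa}$ and at $u=x$, gives the exponential magnification. Your justification of Gaussianity via absolute convergence of the filter, and your remark that the exact identity needs only positive variances, are details the paper leaves implicit.
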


\noindent\emph{Proof in Appendix~\ref{app:proof:prop:tail_amp}.}

Variance averages over the whole distribution and is driven by typical fluctuations, whereas a fixed-threshold tail probability isolates rare one-step falls. A fall of size $c$ is a fall of $x$ dynamic standard deviations but of only $x\sqrt{\kappa}$ static standard deviations, since the static increment has the larger variance $\phi/\kappa$, and the Gaussian tail responds to the gap between these two standardized thresholds exponentially in $x^2$. The magnification therefore comes from the two increment variances alone, with no particular sequence of draws needed for it, and even a modest timing wedge is exponentially magnified in tail risk at high thresholds.

% =========================================================
\section{Finite-horizon Distributions of Macroeconomic Fluctuations}
\label{sec:eigenmodes}
% =========================================================

The comparisons of Section~\ref{sec:volatility_leontief} are between population variances, which an observer sees only after an infinitely long history of shocks. An empirical researcher works instead with a window of $T$ growth rates and computes realized volatility from it, and since any one window is a single draw, what the researcher needs to know is not the mean of realized volatility but its distribution. In this section we derive that distribution for a window of any length, under static timing and under overlapping propagation, and we show that the static distribution first-order stochastically dominates the dynamic one. An observer who computes realized volatility over a window of any length is therefore less likely to record a value above any given threshold when shocks overlap in time than when each shock is fully processed before the next one arrives.

We work with the reallocation channel, and write $y_t$ for the channel $y_t^{\mathrm{re}}=b\,s_t$ and $y_t^\ast$ for its static counterpart $y_t^{\ast,\mathrm{re}}$, so that every display in this section is exact. The granular channel is a scalar autoregression of the same form at the Perron rate, with loading one under every admissible weight vector by Lemma~\ref{lem:modal_decomposition}. The argument of this section applies to it with the Perron rate in place of the dominant transient eigenvalue and the size-weighted shock in place of the projected innovation. Written in matrix form over the window, realized volatility is a quadratic form in the Gaussian innovation history, and therefore a weighted sum of independent chi-square variables whose weights are the eigenvalues of a symmetric matrix of order $T-1$. The timing regime enters only through these weights, and the dominance follows once the dynamic weights are ordered below the static ones one by one. We then show that the dominance owes nothing to the dominant-transient reduction, by proving it for aggregate output on any production network of the model and under any admissible weights, using only the nonnegativity of the propagation operator and the Gaussianity of the innovations. The ranking is a ranking of distributions over a window and not of increments date by date. Individual dates at which the dynamic increment exceeds its fully processed counterpart occur at a frequency that Appendix~\ref{app:reversal} computes exactly, and the magnitude of such reversals vanishes in large economies.

\subsection{Distribution of finite-horizon aggregate volatility}

Consider an observer who records $y_t$ at $T$ consecutive dates and computes realized volatility from the $T-1$ growth rates in the window. Each growth rate depends on every innovation up to its date, so the growth rates in the window are correlated, and the distribution of their sum of squares depends on the joint law of the $T$ innovations and not only on the variance of any one of them. The joint law is easiest to handle once the output path over the window is written in matrix form.

Let $\boldsymbol{\eta}:=(\eta_1,\ldots,\eta_T)^\top$ collect the innovations to the channel over the window and $\mathbf s:=(s_1,\ldots,s_T)^\top$ its transient state. The state at each date is the sum of the innovations up to that date, each discounted by $\lambda_2$ raised to its age, and the $T\times T$ lower-triangular matrix $\mathbf K$ collects these discount factors,
\begin{equation}
[\mathbf K]_{tq}
=
\lambda_2^{\,t-q}\mathbf 1_{\{q\le t\}},
\qquad 1\le q,t\le T,
\label{eq:K_def}
\end{equation}
so that
\begin{equation}
\mathbf K
=
\begin{pmatrix}
1 & 0 & 0 & \cdots & 0\\
\lambda_2 & 1 & 0 & \cdots & 0\\
\lambda_2^2 & \lambda_2 & 1 & \cdots & 0\\
\vdots & \vdots & \ddots & \ddots & \vdots\\
\lambda_2^{T-1} & \lambda_2^{T-2} & \cdots & \lambda_2 & 1
\end{pmatrix}.
\label{eq:K_explicit}
\end{equation}
When the window opens on a settled transient state, $s_0=0$, the recursion $s_t=\lambda_2 s_{t-1}+\eta_t$ reads
\begin{equation}
\mathbf s=\mathbf K\,\boldsymbol{\eta},
\label{eq:s_K_eta}
\end{equation}
and the output path over the window is
\begin{equation}
\mathbf y:=(y_1,\ldots,y_T)^\top = b\,\mathbf K\,\boldsymbol{\eta}.
\label{eq:y_K_eta}
\end{equation}
Under static timing each innovation is fully processed before the next one arrives, and the output path is
\begin{equation}
\mathbf y^\ast:=(y_1^\ast,\ldots,y_T^\ast)^\top
=
\frac{b}{1-\lambda_2}\,\boldsymbol{\eta}.
\label{eq:y_star_vector}
\end{equation}
The two paths differ only in the matrix that stands between the innovations and the outputs. Under static timing that matrix is the scalar resolvent gain $b/(1-\lambda_2)$ times the identity, so the output of each date depends on the innovation of that date alone. Under overlapping propagation the innovation of date $q$ enters the output of date $q$ and of every later date in the window, with the geometric weights $\lambda_2,\lambda_2^2,\ldots$, and a column of $\mathbf K$ records this trail. We call $\mathbf K$ the overlap operator. It spreads each vintage across the dates that follow it, so that the output of a date carries a component inherited from earlier vintages, and it is this inherited component that is subtracted away when we pass from the level of output to its growth rate.

Growth rates are first differences of the output path. Let $\mathbf D_T\in\mathbb R^{(T-1)\times T}$ denote the first-difference matrix,
\begin{equation}
\mathbf D_T
=
\begin{pmatrix}
-1 & 1 & 0 & \cdots & 0\\
0 & -1 & 1 & \cdots & 0\\
\vdots & & \ddots & \ddots & \vdots\\
0 & \cdots & 0 & -1 & 1
\end{pmatrix},
\label{eq:DT_matrix}
\end{equation}
so that $\mathbf D_T\mathbf x=(x_2-x_1,\;x_3-x_2,\;\ldots,\;x_T-x_{T-1})^\top$ for any $\mathbf x=(x_1,\ldots,x_T)^\top\in\mathbb R^T$. Realized volatility over the window, in the sense of Definition~\ref{def:agg_vol_static_dynamic}, is the mean of the squared growth rates,
\begin{equation}
\hat\phi
=
\frac{1}{T-1}\,(\mathbf D_T\mathbf y)^\top(\mathbf D_T\mathbf y),
\label{eq:hatphi_def_matrix}
\end{equation}
and substituting the output path \eqref{eq:y_K_eta} gives the quadratic form
\begin{equation}
\hat\phi
=
\frac{1}{T-1}\, b^2
\boldsymbol{\eta}^\top
\Big(\mathbf K^\top \mathbf D_T^\top \mathbf D_T \mathbf K\Big)
\boldsymbol{\eta}.
\label{eq:phi_dyn_quad_main}
\end{equation}
The same steps under static timing give
\begin{align}
\hat\phi^\ast
&=
\frac{1}{T-1}\,(\mathbf D_T\mathbf y^\ast)^\top(\mathbf D_T\mathbf y^\ast) \notag\\
&=
\frac{1}{T-1} \, \frac{1}{(1-\lambda_2)^2}\,b^2\,
\boldsymbol{\eta}^\top(\mathbf D_T^\top \mathbf D_T)\boldsymbol{\eta}.
\label{eq:phi_stat_quad_main}
\end{align}

The two statistics are therefore quadratic forms in the same Gaussian innovation history, and the overlap operator is the only difference between them. All randomness enters through the projected innovations $\eta_t=\tilde{\mathbf u}_2^\top\boldsymbol{\epsilon}_t$, which under the normalization \eqref{eq:u2_normalization} satisfy $\boldsymbol{\eta}\overset{d}{=}\sigma\mathbf Z$ with $\mathbf Z\sim\mathcal N(0,I_T)$. A quadratic form in a standard Gaussian vector is a weighted sum of independent chi-square variables with the eigenvalues of its matrix as weights, and the eigenvalues that matter here are those of the Gram matrices
\[
\mathbf M^\ast
:=
\mathbf D_T\mathbf D_T^\top,
\qquad
\mathbf M
:=
\mathbf D_T\mathbf K\mathbf K^\top\mathbf D_T^\top,
\]
both of order $T-1$, which we write as $\{\nu^*_j\}_{j=1}^{T-1}$ and $\{\nu_j\}_{j=1}^{T-1}$. Both distributions are then available in closed form,
\begin{align}
\hat\phi^\ast
&\ \overset{d}{=}\
\sigma^2 \frac{b^2}{(1-\lambda_2)^2}\cdot\frac{1}{T-1}\sum_{j=1}^{T-1}\nu^*_j\,Z_j^2,
\label{eq:phi_stat_chisq}
\\
\hat\phi
&\ \overset{d}{=}\
\sigma^2 b^2\cdot\frac{1}{T-1}\sum_{j=1}^{T-1}\nu_j\,Z_j^2,
\label{eq:phi_dyn_chisq}
\end{align}
where $\{Z_j\}_{j=1}^{T-1}$ are i.i.d.\ standard normals (Lemma~\ref{lem:phi_quadratic_forms} in Appendix~\ref{app:proofs} records the derivation). The timing regime is thus encoded entirely in the eigenvalues of a single matrix of order $T-1$, and the comparison of the two regimes comes down to a comparison of two sets of weights.

The two sets of weights respond differently to the network. The static weights are the eigenvalues of $\mathbf D_T\mathbf D_T^\top$, which depends on the differencing operator alone, so they depend neither on the production network nor on $\lambda_2$. They equal $2-2\cos(j\pi/T)$ for $j=1,\ldots,T-1$ and run from near zero to near four, and as $\lambda_2$ rises toward one the whole static distribution is scaled up by $(1-\lambda_2)^{-2}$ while its shape stays put. The dynamic weights are the eigenvalues of $\mathbf D_T\mathbf K\mathbf K^\top\mathbf D_T^\top$, and the overlap operator changes their shape as well as their scale. The columns of $\mathbf K$ carry the lag profile $(1,\lambda_2,\lambda_2^2,\ldots)$, and as persistence rises they become more alike, so that $\mathbf K\mathbf K^\top$ builds up correlation across calendar time, which differencing then removes. At the endpoint $\lambda_2=1$, which lies outside the model but fixes the limit of the formulas, differencing removes all of it. There $\mathbf K$ is the lower-triangular matrix of ones, $\mathbf D_T\mathbf K=[\,\mathbf 0\ \ \mathbf I_{T-1}\,]$ and $\mathbf M=\mathbf I_{T-1}$, so every dynamic weight equals one and realized dynamic volatility is $\sigma^2b^2/(T-1)$ times a $\chi^2_{T-1}$ variable. As $\lambda_2$ rises, then, the dynamic weights move from the static profile toward a flat profile at one, while the static weights stay where they are and their common scale grows. Persistence therefore stretches the static distribution through the resolvent gain, while overlap flattens the dynamic one.

Taking expectations in \eqref{eq:phi_stat_chisq} and \eqref{eq:phi_dyn_chisq} shows that the static statistic is unbiased at every window length, $\E[\hat\phi^\ast]=\phi^\ast$, whereas the dynamic statistic started from a settled state has a mean below $\phi$ that rises to $\phi$ as the window lengthens. A comparison of means, however, is what Section~\ref{sec:volatility_leontief} already gave for a long window. The observer with a single window needs the two distributions ordered, so that the dynamic statistic is less likely than the static one to exceed any given value, and this is what the ordering of the weights delivers. The window may also open on a transient state drawn from its stationary law, which is how an observer who starts recording at an arbitrary date finds the economy, and the ordering holds there too.

\begin{theo}[Stochastic dominance of static over dynamic realized volatility at every window length]
\label{thm:phi_finite_ordering}
Fix $T\ge 2$, maintain Assumption~\ref{assu:innovations}, and let $\nu_{(1)}\ge\cdots\ge\nu_{(T-1)}$ and $\nu^\ast_{(1)}\ge\cdots\ge\nu^\ast_{(T-1)}$ be the eigenvalues of $\mathbf M$ and $\mathbf M^\ast$ in descending order. Then
\begin{equation}
\nu_{(j)}\ \le\ \bigl(1-\lambda_2^{\,T}\bigr)^{2}\,\frac{\nu^\ast_{(j)}}{(1-\lambda_2)^2}
\ <\ \frac{\nu^\ast_{(j)}}{(1-\lambda_2)^2}
\qquad\text{for all } j=1,\ldots,T-1,
\label{eq:finiteT_fosd_condition}
\end{equation}
and consequently
\[
\hat\phi^\ast \succeq_{\mathrm{FOSD}} \hat\phi.
\]
The dominance also holds when the initial state $s_0$ is drawn from the stationary law $\mathcal N\bigl(0,\sigma^2/(1-\lambda_2^2)\bigr)$ independently of the innovations, in which case $\E[\hat\phi]=\phi$ for every $T\ge2$.
\end{theo}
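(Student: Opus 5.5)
The plan is to reduce the eigenvalue comparison in \eqref{eq:finiteT_fosd_condition} to a singular-value inequality for a product of matrices. Note that $\nu_{(j)}=\sigma_j(\mathbf D_T\mathbf K)^2$ and $\nu^\ast_{(j)}=\sigma_j(\mathbf D_T)^2$, where $\sigma_j(\cdot)$ denotes the $j$-th largest singular value. This holds because $\mathbf M=(\mathbf D_T\mathbf K)(\mathbf D_T\mathbf K)^\top$ and $\mathbf M^\ast=\mathbf D_T\mathbf D_T^\top$, and both matrices have exactly $T-1$ eigenvalues. The standard product inequality $\sigma_j(\mathbf A\mathbf B)\le\sigma_j(\mathbf A)\,\|\mathbf B\|_2$ then gives
\[
\nu_{(j)}\le \|\mathbf K\|_2^2\,\nu^\ast_{(j)} .
\]
So the first inequality in \eqref{eq:finiteT_fosd_condition} follows once we show $\|\mathbf K\|_2\le(1-\lambda_2^{T})/(1-\lambda_2)$.

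For that bound I would write $\mathbf K=\sum_{k=0}^{T-1}\lambda_2^{\,k}\mathbf S^k$, with $\mathbf S$ the $T\times T$ lower shift. Since $\|\mathbf S^k\|_2\le1$, the triangle inequality gives $\|\mathbf K\|_2\le\sum_{k=0}^{T-1}\lambda_2^{\,k}=(1-\lambda_2^{T})/(1-\lambda_2)$. The strict second inequality is immediate because $\lambda_2^{T}>0$ under Assumption~\ref{assu:dominant_transient}. In the nondegenerate case $\nu^\ast_{(j)}=2-2\cos(j\pi/T)>0$, which makes the strictness meaningful.

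For the stochastic dominance I would use the chi-square representations \eqref{eq:phi_stat_chisq} and \eqref{eq:phi_dyn_chisq} from Lemma~\ref{lem:phi_quadratic_forms}. Because the $Z_j$ are i.i.d., each representation is invariant to permuting the weights, so both can be realized on one probability space with the same $Z_j$ attached to the $j$-th largest weight in each. Pointwise,
\[
\sigma^2b^2\sum_j\nu_{(j)}Z_j^2\le \sigma^2 b^2(1-\lambda_2)^{-2}\sum_j\nu^\ast_{(j)}Z_j^2 .
\]
This coupling gives $\hat\phi^\ast\succeq_{\mathrm{FOSD}}\hat\phi$.

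For a stationary initial state, I would write $s_0=\sigma Z_0/\sqrt{1-\lambda_2^2}$ and stack $(Z_0,\boldsymbol\eta/\sigma)$. Then $\mathbf s=\sigma\tilde{\mathbf K}\tilde{\mathbf Z}$, where $\tilde{\mathbf K}=[\,\lambda_2(1-\lambda_2^2)^{-1/2}(1,\lambda_2,\dots,\lambda_2^{T-1})^\top\ \ \mathbf K\,]$. The same quadratic-form argument applies with $\tilde{\mathbf M}=\mathbf D_T\tilde{\mathbf K}\tilde{\mathbf K}^\top\mathbf D_T^\top$, which is still of order $T-1$. The step I expect to be the main obstacle is the operator-norm bound here, because the extra column breaks the finite geometric-sum argument. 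I would handle it through $\tilde{\mathbf K}\tilde{\mathbf K}^\top=\Cov(\mathbf s)/\sigma^2$, which is the Toeplitz autocovariance matrix of a stationary AR(1). Its eigenvalues are bounded by the supremum of the spectral density $|1-\lambda_2e^{i\theta}|^{-2}$, which is $(1-\lambda_2)^{-2}$. Hence $\|\tilde{\mathbf K}\|_2\le(1-\lambda_2)^{-1}$, and then $\tilde\nu_{(j)}\le\nu^\ast_{(j)}/(1-\lambda_2)^2$, so the same coupling delivers the dominance. Finally, $\E[\hat\phi]=\phi$ holds in this case because each increment $\Delta y_t$ is stationary with variance $\phi$ by Theorem~\ref{thm:attenuation}.
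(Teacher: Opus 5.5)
Your proposal is correct and follows the paper's proof essentially step for step. The same truncated Neumann series in the lower shift bounds $\|\mathbf K\|_2$. Your singular-value inequality $\sigma_j(\mathbf D_T\mathbf K)\le\sigma_j(\mathbf D_T)\|\mathbf K\|_2$ is the paper's Loewner comparison $\mathbf M\preceq\|\mathbf K\|_2^2\mathbf M^\ast$ followed by Weyl's monotonicity theorem. The common-coordinate coupling and the Toeplitz-symbol bound on the stationary covariance $\tilde{\mathbf K}\tilde{\mathbf K}^\top$ are also what the paper uses.

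The only difference is in the stationary case. The paper sharpens the bound to $\|\tilde{\mathbf K}\tilde{\mathbf K}^\top\|_2<(1-\lambda_2)^{-2}$ via Parseval, which makes the coupling strict almost surely. Your non-strict bound already suffices for the weak dominance $\hat\phi^\ast\succeq_{\mathrm{FOSD}}\hat\phi$ that the statement asserts.
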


\noindent\emph{Proof in Appendix~\ref{app:proof:thm:phi_finite_ordering}.}

The dominance is strict at every window length, and the gap between the two distributions widens as mixing slows. The rescaled static weights $\nu^\ast_{(j)}/(1-\lambda_2)^2$ grow like $(1-\lambda_2)^{-2}$, since the static weights never exceed four, whereas the dynamic weights stay bounded, the largest of them satisfying $\nu_{(1)}=\|\mathbf D_T\mathbf K\|_2^2\le 4/(1+\lambda_2)^2$.\footnote{The
matrix $\mathbf D_T\mathbf K$ carries the Toeplitz symbol
$(e^{i\theta}-1)/(1-\lambda_2 e^{-i\theta})$, whose squared modulus
$(2-2\cos\theta)/(1-2\lambda_2\cos\theta+\lambda_2^2)$ is decreasing in $\cos\theta$ and therefore
attains its supremum $4/(1+\lambda_2)^2$ at $\theta=\pi$.  The norm of a finite section of a
Toeplitz operator is bounded by the supremum of its symbol, which gives the bound at every window
length.} Persistence enlarges the component of the level that a date inherits from earlier vintages, differencing removes that component, and slow mixing therefore leaves the dynamic weights flat and bounded while it inflates the static scale. When mixing is fast the two timing regimes are close to each other, and as $\lambda_2\to 0$ the slack $\lambda_2^{\,T}(2-\lambda_2^{\,T})$ in \eqref{eq:finiteT_fosd_condition} vanishes and the two distributions coincide.

Dominance over a window is compatible with reversals at individual dates. A reversal is a one-step, pathwise event in which the dynamic increment exceeds its fully processed counterpart, and Proposition~\ref{prop:reversal_frequency} in Appendix~\ref{app:reversal} gives its frequency in closed form, free of $b$, $\sigma$, and $n$, running from about $0.11$ to $0.06$ over the calibrated range of $\lambda_2$. An ordering of the two window distributions cannot exclude such dates, because the Loewner comparison behind it holds for the Gram matrices $\mathbf M$ and $\mathbf M^\ast$ and not for the quadratic forms in $\boldsymbol\eta$ themselves, as the proof shows. The ordering is between the distributions of realized volatility over a window of length $T$, and it leaves room for realized windows in which the dynamic statistic is the larger one. Along an innovation history close to a nonzero constant vector, for instance, the static differences are near zero while the dynamic differences from a settled start are not, and such histories have positive probability under Assumption~\ref{assu:innovations}. Appendix~\ref{app:reversal} shows that in large economies the reversals are of vanishing magnitude.

\subsection{Dominance of static over dynamic volatility on any production network}
\label{subsec:full_network_fosd}

Everything in this section so far has used the dominant-transient reduction. The reduction did one job, which was to write the weights of the two chi-square forms in closed form, and the dominance itself owes nothing to it. The ordering rests on features of the model that hold on any production network. The exposure $\boldsymbol\gamma^\top\mathbf W^{k}$ of the aggregate to a vintage $k$ dates old is entrywise nonnegative, so the exposures of the vintages present at a date add up to the fully processed exposure without cancellation. And Gaussian innovations turn an ordering of covariance matrices into an ordering of weighted chi-square laws. On nonnegativity and Gaussianity alone we now prove the dominance for aggregate output itself, with every mode of the network retained and under any admissible weights. The dynamic aggregate is $y_t=\boldsymbol\gamma^\top\sum_{k\ge0}\mathbf W^{k}\boldsymbol{\epsilon}_{t-k}$, with dates relabeled by the one-period production lag of Definition~\ref{def:aggregate_output} so that $y_t$ carries $\boldsymbol{\epsilon}_t$, and the static aggregate is $y_t^\ast=\boldsymbol\gamma^\top(\mathbf I-\mathbf W)^{-1}\boldsymbol{\epsilon}_t$.

\begin{theo}[Stochastic dominance of static over dynamic realized volatility on any production network]
\label{thm:full_network_fosd}
Fix an admissible weight vector $\boldsymbol\gamma$ and a window length $T\ge2$, and maintain Assumption~\ref{assu:innovations}. Let the dynamic aggregate start from a settled configuration, $\boldsymbol{\epsilon}_t=\mathbf 0$ for $t\le0$, or from the stationary law. Then
\[
\hat\phi^\ast\ \succeq_{\mathrm{FOSD}}\ \hat\phi .
\]
\end{theo}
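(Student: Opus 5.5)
The plan is to reduce the statement to a Loewner ordering of two $(T-1)\times(T-1)$ covariance matrices, exactly as in Theorem~\ref{thm:phi_finite_ordering}, but to obtain that ordering from spectral densities rather than from the closed-form weights.

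\textbf{Step 1: reduction to covariances.} Stack the growth rates in the window into $\Delta\mathbf y=(\Delta y_2,\ldots,\Delta y_T)^\top$, and likewise $\Delta\mathbf y^\ast$. Each is a linear image of the Gaussian innovation history, so each is a centered Gaussian vector, with covariances $\boldsymbol\Sigma$ and $\boldsymbol\Sigma^\ast$. Then $(T-1)\hat\phi=\|\Delta\mathbf y\|_2^2\overset{d}{=}\sum_j\nu_{(j)}Z_j^2$, where the $\nu_{(j)}$ are the eigenvalues of $\boldsymbol\Sigma$; the same holds for the static statistic with the eigenvalues $\nu^\ast_{(j)}$ of $\boldsymbol\Sigma^\ast$ (this is the argument of Lemma~\ref{lem:phi_quadratic_forms}). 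If $\boldsymbol\Sigma\preceq\boldsymbol\Sigma^\ast$, Weyl's monotonicity gives $\nu_{(j)}\le\nu^\ast_{(j)}$ for every $j$. Realizing both statistics with the same $Z_j$ then orders them pathwise, and FOSD follows. So everything rests on proving $\boldsymbol\Sigma\preceq\boldsymbol\Sigma^\ast$.

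\textbf{Step 2: stationary start, via spectral densities.} Write $\mathbf h_k:=(\mathbf W^\top)^k\boldsymbol\gamma$. These vectors are entrywise nonnegative because $\mathbf W\ge0$ and $\boldsymbol\gamma\ge0$. Under the stationary law, $\Delta y_t=\sum_{k\ge0}(\mathbf h_k-\mathbf h_{k-1})^\top\boldsymbol\epsilon_{t-k}$, with $\mathbf h_{-1}:=\mathbf 0$, is a stationary linear filter. Its spectral density is
\[
f(\theta)=\sigma^2\,|1-e^{i\theta}|^2\,\Big\|\sum_{k\ge0}e^{ik\theta}\mathbf h_k\Big\|_2^2 .
\]
The static increment $\mathbf h_\ast^\top(\boldsymbol\epsilon_t-\boldsymbol\epsilon_{t-1})$, with $\mathbf h_\ast=(\mathbf I-\mathbf W^\top)^{-1}\boldsymbol\gamma=\sum_k\mathbf h_k$, has spectral density
\[
f^\ast(\theta)=\sigma^2\,|1-e^{i\theta}|^2\,\|\mathbf h_\ast\|_2^2 .
\]
The key inequality is $f\le f^\ast$ pointwise, and it follows componentwise from the triangle inequality: $\bigl|\sum_k e^{ik\theta}h_{k,i}\bigr|\le\sum_k h_{k,i}$ because $h_{k,i}\ge0$. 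This is precisely where nonnegativity of the network enters. Since $\boldsymbol\Sigma$ and $\boldsymbol\Sigma^\ast$ are the $(T-1)$-sections of the Toeplitz matrices with symbols $f$ and $f^\ast$, for any $\mathbf x$ we have
\[
\mathbf x^\top(\boldsymbol\Sigma^\ast-\boldsymbol\Sigma)\mathbf x=\frac1{2\pi}\int(f^\ast-f)\Big|\sum_t x_te^{it\theta}\Big|^2d\theta\ge0 .
\]
This gives the Loewner ordering for the stationary start.

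\textbf{Step 3: settled start.} Under the settled start $\boldsymbol\epsilon_t=\mathbf 0$ for $t\le0$, split the stationary increment vector into the part driven by in-window innovations and the part driven by pre-window innovations. The two parts are independent, and the first is exactly the settled-start vector. Hence $\boldsymbol\Sigma_{\mathrm{settled}}\preceq\boldsymbol\Sigma_{\mathrm{stat}}\preceq\boldsymbol\Sigma^\ast$. The static vector is unaffected by the start, except for its first entry, which involves $\boldsymbol\epsilon_1-\boldsymbol\epsilon_0$. Under the settled start I would either treat $\boldsymbol\epsilon_0$ as part of the static history or index the window so that both series use $\boldsymbol\epsilon_1,\ldots,\boldsymbol\epsilon_T$; I need to check which convention matches Definition~\ref{def:agg_vol_static_dynamic}.

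\textbf{Main obstacle.} I expect the main obstacle to be this bookkeeping at the window edge. If the static series uses only in-window innovations, then $\boldsymbol\Sigma^\ast$ is still the Toeplitz section with symbol $f^\ast$, because static increments are 1-dependent. For the settled dynamic series, I would verify that it equals the stationary filter with the pre-window innovations set to zero, so that the Toeplitz comparison of Step~2 still applies.
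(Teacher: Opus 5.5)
Your argument is correct, and its core is the paper's. Nonnegative exposures give a pointwise Fourier-domain domination through the triangle inequality. That domination becomes a Loewner ordering of the two window covariances. Weyl monotonicity with the common-coordinate coupling then gives FOSD.

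The paper runs the comparison on levels rather than increments. It shows $\mathrm{Cov}(\mathbf y)\preceq\sigma^2 S\,\mathbf I_T$ with $S<\|\boldsymbol\gamma^\top(\mathbf I-\mathbf W)^{-1}\|_2^2$, and then conjugates by $\mathbf D_T$, using that the static level covariance is a multiple of the identity. Your symbols $f$ and $f^\ast$ are the paper's integrand multiplied by $|1-e^{i\theta}|^2$, so your Toeplitz identity is the paper's evaluated at $\mathbf z=\mathbf D_T^\top\mathbf x$.

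The genuine difference is the settled start. The paper bounds it directly, writing the truncated path as $\sum_j\mathbf K_j\boldsymbol\epsilon^{(j)}$ with $\|\mathbf K_j\|_2\le\sum_{k<T}a_{k,j}$. You instead derive it from the stationary case by splitting off the independent pre-window component, so $\boldsymbol\Sigma_{\mathrm{settled}}$ equals $\boldsymbol\Sigma_{\mathrm{stat}}$ minus a positive semidefinite matrix. That is shorter, and it shows at once that a settled start can only lower the dynamic weights.

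What the paper's route buys is strictness. Using primitivity, plus a compactness step in the stationary case, it gets constants strictly below the static one. This yields strictly dominated weights and $Q^\ast>Q$ almost surely. Your weak ordering uses only $\mathbf W\ge 0$ and $\boldsymbol\gamma\ge 0$, and it is all the FOSD statement requires.

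The edge bookkeeping you flag is not an obstacle. Under the relabeling in which $y_t$ carries $\boldsymbol\epsilon_t$, the settled path is by definition the stationary filter with $\boldsymbol\epsilon_t=\mathbf 0$ for $t\le 0$. The first static increment in the window $t=2,\ldots,T$ is $\mathbf h_\ast^\top(\boldsymbol\epsilon_2-\boldsymbol\epsilon_1)$, so $\boldsymbol\epsilon_0$ never enters. In any case FOSD compares marginal laws, so $\boldsymbol\Sigma^\ast$ is the $(T-1)$-section of the Toeplitz matrix with symbol $f^\ast$, whichever consecutive innovations build the static series.
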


\noindent\emph{Proof in Appendix~\ref{app:proof:thm:full_network_fosd}.}

Beyond Assumption~\ref{assu:innovations}, the hypotheses of the theorem are those of the model itself. Nothing in the proof uses diagonalizability, a real transient spectrum, a spectral gap, or the reduction to a dominant mode, so the dominance covers a defective propagation operator and any network too large for its diagonalizability to be checked. The sign and phase of the transient eigenvalues never enter either, because primitivity makes the exposures positive whatever the transient spectrum, so the dominance covers negative and complex subdominant eigenvalues. Assumption~\ref{assu:misalignment} enters nowhere, so the dominance holds at size weights, where the reallocation channel is absent and Proposition~\ref{prop:singular_aggregator} makes aggregate output a scalar autoregression at the Perron rate. At those weights the theorem is the finite-window form of that proposition. What the general statement gives up is the closed form of the weights, which Theorem~\ref{thm:phi_finite_ordering} supplies for the reallocation channel.

%====================================================
\section{Fat Tails and Aggregate Volatility When Shocks Overlap}
\label{sec:spectral_gap_mediates_fat_tails}
%====================================================
Static aggregate volatility bounds its dynamic counterpart from above, on average by Lemma~\ref{lem:neumann_bounds_L} and Theorem~\ref{thm:attenuation} and in distribution by Theorems~\ref{thm:phi_finite_ordering} and~\ref{thm:full_network_fosd}. When shocks arrive as a stream, successive vintages overlap and partly cancel, and the aggregate is less volatile than it would be if each draw were fully processed before the next one arrived. We now ask what this means for the granular origins of aggregate fluctuations. We first compute, back of the envelope, how much of observed aggregate volatility granular shocks generate once they average over time, under the degree heterogeneity found in actual firm networks. The two leading channels through which the shocks reach the aggregate carry between about $1.5$ and $12$ percent of observed aggregate volatility, measured as standard deviations, where the static calculation puts them at between a tenth and a third. We then ask how the thickness of the tail of the degree distribution and the speed of convergence to equilibrium interact in generating aggregate volatility, and show that fatter tails need not raise it. Fat tails raise the exposure of the aggregate to the cross-section of shocks, and they slow the economy's convergence to equilibrium, and once shocks overlap the slower convergence works against the greater exposure.

The calculation does not require the aggregation weights, because Theorem~\ref{thm:aggregator_geometry} gives a rate for each channel and Proposition~\ref{prop:singular_aggregator} an exact value at one choice of weights. Degree heterogeneity reaches aggregate volatility through both channels. The granular channel carries it as the size Herfindahl $\|\mathbf m^\ast\|_2^2$ and averages over time at the rate $\mathcal R(\lambda_1)=(1-\varsigma)^2/(1+\varsigma)$ of Corollary~\ref{coro:granular_perron_rate}, which depends on returns to scale alone. The reallocation channel carries it through the transient loading $b$ and the dominant transient eigenvalue $\lambda_2$, and averages over time at the rate $\mathcal R(\lambda_2)$. Because $\mathcal R$ is decreasing and $\lambda_2<\lambda_1$, the two rates are ordered,
\begin{equation}
\mathcal R(\lambda_1)\ <\ \mathcal R(\lambda_2).
\label{eq:wedge_bracket}
\end{equation}
At size weights the aggregate ratio $\phi/\phi^\ast$ equals the granular rate exactly, by Proposition~\ref{prop:singular_aggregator}. Away from size weights it is the exact expression of Theorem~\ref{thm:aggregator_geometry}, which carries cross-terms between modes, so the two rates describe the two channels taken separately and not the aggregate.\footnote{The level--reallocation cross-term of Section~\ref{subsec:dominant_transient_vol} moves the aggregate ratio away from the granular rate, and the modes beyond the second, which attenuate less severely since $\mathcal R$ is decreasing, move it above $\mathcal R(\lambda_2)$ when the weights load on them. The aggregate is known exactly only at size weights, where it coincides with the granular rate.} Both rates are computable from primitives, and we report each as the attenuation that its channel applies to whatever static contribution it carries.

%----------------------------------------------------
\subsection{Back-of-the-envelope estimates of aggregate volatility}
\label{subsec:boe_vol}
%----------------------------------------------------

We evaluate both rates and report the range between them. The Perron rate is returns to scale, $\lambda_1=\varsigma$, and we take $\varsigma=0.8$, the value we use in our computational experiments. The dominant transient rate is $\lambda_2=\varsigma\lambda_2(\mathbf A)$, where $\lambda_2(\mathbf A)$ is the subdominant eigenvalue of the monetary circuit, which we measure directly on reconstructed US production networks in Appendix~\ref{app:abm_eigenvalue}. There it clusters in $[0.7,1.0)$, the upper end excluded by the primitivity of $\mathbf A$, so the transient rate for output lies in $[0.56,0.80)$, and over that range
\[
\mathcal R(\lambda_1)=0.022,
\qquad
\mathcal R(\lambda_2)\ \in\ (\,0.022,\ 0.124\,],
\qquad\text{equivalently}\qquad
\sqrt{\mathcal R(\lambda_1)}=0.15,
\qquad
\sqrt{\mathcal R(\lambda_2)}\ \in\ (\,0.15,\ 0.35\,].
\]
A forty-fifth of the static increment variance carried by the granular channel survives intertemporal averaging, and between a forty-fifth and an eighth of the static increment variance carried by the reallocation channel survives.\footnote{The two rates mean different things.  The granular rate is the aggregate ratio at size weights, exactly and for every network (Proposition~\ref{prop:singular_aggregator}).  The reallocation rate describes what overlap does to the reallocation channel on its own.  The aggregate approaches it only as the transient loading $b$ comes to dominate the granular channel and the cross-term between the two channels, which requires weights strongly misaligned with the size distribution.}

\smallskip
Empirical estimates of the static granular contribution exist, and they vary substantially. They vary because static aggregate volatility is very sensitive to degree heterogeneity, as is easiest to see when heterogeneity is summarized by a power-law tail exponent \citep[Section 3]{mandel2023voltest}, and because estimates of that exponent differ across datasets and estimation procedures. For the United States, reported exponents vary widely enough that the implied static contribution ranges from nearly all of observed aggregate volatility to nearly none.\footnote{See Table~5 in \citet{mandel2023voltest} and the discussion in \citet{bacilieri2023}.}

\smallskip
One of the most detailed US estimates currently available is that of \citet{mandel2023voltest}, who use firm-level buyer--seller network data with an order of magnitude more firms and connections than most other studies. Their range of tail exponents implies static contributions of between one-tenth and one-third of observed aggregate volatility, measured, like the dynamic contributions we now compute, as shares of standard deviations.

We read a published static share as carried entirely by one channel, the one the observer's weights expose, and we treat this reading as a scenario rather than an identification. An undivided static estimate does not separate the two channels, it need not refer to the aggregation convention under which a channel is exposed, and in a directed network the cross-term between the two channels prevents adding one channel's contribution to the other's. The calculations that follow are therefore conditional on both the measurement convention and the channel assignment.

Under this scenario the overlap-adjusted contribution of a channel is the static share multiplied by the channel's standard-deviation rate, $0.15$ for the granular channel and between $0.15$ and $0.35$ for the reallocation channel, so that overlap lowers the implied contribution by a factor of between three and seven. The granular channel then accounts for between about $1.5$ and $5$ percent of observed aggregate volatility, and the reallocation channel for between about $1.5$ and $12$ percent, the lower end open because $\lambda_2<0.8$ and approached only as the transient rate approaches the Perron rate. Granular shocks thus account for between almost none of observed aggregate volatility and roughly an eighth of it, where the static calculation puts them at between a tenth and a third.

Under equilibrium gross-sales weights, the weights of our computational experiments, the granular range is the whole of the aggregate, since the reallocation channel is then invisible. The ends of the reallocation range pair the smallest static share with the strongest attenuation and the largest share with the weakest. And because the tail exponent moves exposure and mixing together, as we now show, the two are not independent, and an actual economy is likely to attain a narrower range.

%----------------------------------------------------
\subsection{Fatter tails need not raise aggregate volatility}
\label{subsec:fat_tails}
%----------------------------------------------------

In the static benchmark a thicker tail of the degree distribution can move aggregate volatility a great deal, because degree heterogeneity directly reshapes the exposure of the aggregate to the cross-section of shocks. Overlap mutes that sensitivity. On any fixed network, overlap attenuates what fully processed exposure contributes to time-averaged fluctuations, so a given change in exposure produces a smaller change in volatility. And once the thickness of the tail also sets the speed at which the economy converges, the effect of fatter tails on dynamic volatility can go either way, because the tail then moves volatility through persistence as well as through exposure.

\smallskip
We now parameterize degree heterogeneity by a truncated power law with tail exponent $\alpha$, so that a lower $\alpha$ means a fatter tail, and adopt the degree proxy of Assumption~\ref{assu:degree_proxy} in Appendix~\ref{app:fat_tails_degree_approx}. Under the proxy the stationary balance vector is the degree vector, as it is under locally even weights with matched in- and out-degrees. The degree vector, centered at its realized mean on the left and at its size-weighted mean on the right and normalized by the realized degree variance, stands in for the dominant transient pair.\footnote{The proxy is an assumption, because no eigenvector equation delivers the transient direction from the degree distribution, and Appendix~\ref{app:fat_tails_degree_approx} gives a family of networks with matched degrees and even weights on which the centered degree vector lies in modes faster than the dominant one. None of the exact channel results depends on the proxy, which enters only the comparative statics of this section.} With the population moments of Proposition~\ref{prop:fat_tail_moments_and_proxy} in Appendix~\ref{app:fat_tails_degree_approx} in place of the realized moments of \eqref{eq:utilde_vtilde_alpha_app} there, the scalar loading
\[
b(\alpha):=\boldsymbol\gamma^\top \tilde{\mathbf v}_2(\alpha)
\]
admits the approximation
\begin{equation}
b(\alpha)
\ \approx\
\frac{\Delta(\alpha)}{\sqrt{n\,\Var(d)}},
\qquad
\Delta(\alpha):=\E_{\boldsymbol\gamma}[d]-\frac{\E[d^2]}{\E[d]},
\label{eq:b_alpha_explicit_fattails}
\end{equation}
where $\E_{\boldsymbol\gamma}[d]:=\boldsymbol\gamma^\top\mathbf d$ is the mean degree under the observer's weights. The gap $\Delta(\alpha)$ measures the misalignment of the weights with firm size in degrees, comparing the degree of the typical firm as the observer weights it with the degree of the typical firm as the economy weights it. It vanishes at $\boldsymbol\gamma\propto\mathbf d\propto\mathbf m^\ast$, the size weights of Proposition~\ref{prop:singular_aggregator}, and more generally on the set of weight vectors whose degree-weighted mean equals $\E[d^2]/\E[d]$. That set is the degree-proxy counterpart of the set of weight vectors orthogonal to $\tilde{\mathbf v}_2$ in Assumption~\ref{assu:misalignment}.

\smallskip
Fatter tails move both the misalignment gap and the degree dispersion, and the log-derivative of $b(\alpha)$ keeps both,
\begin{equation}
\frac{b'(\alpha)}{b(\alpha)}
\ \approx\
\frac{\Delta'(\alpha)}{\Delta(\alpha)}
\;-\;
\frac{1}{2}\,\frac{\Var(d)'(\alpha)}{\Var(d)}.
\label{eq:log_deriv_b_alpha}
\end{equation}
The first term is the change in the gap between the observer's average degree and the economy's, and the second is the change in the dispersion by which that gap is scaled. Fatter tails raise both, so the two terms work against each other, and the sign of the derivative turns on which rises faster. The sign is not the same for every weight vector. A weight vector held fixed while $\alpha$ moves can be the size weights of some tail exponent $\alpha_0$, at which $b(\alpha_0)=0$, and $|b|$ then rises on both sides of $\alpha_0$, so that on one side exposure increases as the tail thins.

We therefore state the exposure channel for a family of weight vectors that moves with the degree distribution, $\gamma_i\propto d_i^{\,\theta}$, under which $\E_{\boldsymbol\gamma}[d]=\E[d^{1+\theta}]/\E[d^{\theta}]$. Fattening the tail raises every moment of the degree distribution, and for $\theta>1$ the hub-tilted weights load on moments of higher order than the second moment in $\Var(d)$, which rise faster. Under the cutoff normalization of Proposition~\ref{prop:fat_tail_moments_and_proxy} the comparison is exact in one case. Suppose $\alpha>2$, so that the degree variance converges as the economy grows, and $\theta>\alpha-1$, so that the observer's mean degree $\E[d^{1+\theta}]/\E[d^{\theta}]$ is carried by a moment that grows with $n$. Then the gap $\Delta(\alpha)$ grows like $n^{e(\alpha,\theta)}$, with $e(\alpha,\theta)=(1+\theta-\alpha)/\alpha$ for $\theta<\alpha$ and $e(\alpha,\theta)=1/\alpha$ for $\theta\ge\alpha$, while the dispersion $\sqrt{\Var(d)}$ converges. The exponent $e$ falls as $\alpha$ rises, so the loading $|b(\alpha)|$ decreases in $\alpha$ in every sufficiently large economy. Outside this case the sign of $b'/b$ is a finite-sample question, and we impose it as a restriction.\footnote{At $\theta=1$ the weights are the size distribution itself, $\Delta\equiv0$,
and the loading is identically zero at every $\alpha$, as Proposition~\ref{prop:singular_aggregator}
requires.  As $\theta$ approaches one from either side the loading vanishes at every $\alpha$, so weights close to size weights inherit a weak exposure channel.} The exposure channel is the restriction $(b^2)'(\alpha)<0$, equivalently $b'/b<0$ wherever $b\neq0$, so that the exposure scale $b(\alpha)^2$ rises as the tail fattens. The sign of $b'(\alpha)$ on its own says nothing about exposure, because $b$ is negative for $\theta<1$ and the sign of the eigenvector pair is a convention.

\smallskip
Holding $\lambda_2$ fixed, tail thickness enters both regimes only through the common exposure scale $b(\alpha)^2$, since by Theorem~\ref{thm:attenuation} the two population variances are $\phi^\ast(\alpha)=2\sigma^2b(\alpha)^2/(1-\lambda_2)^2$ and $\phi(\alpha)=2\sigma^2b(\alpha)^2/(1+\lambda_2)$. Both therefore respond to $\alpha$ through the same exposure elasticity,
\begin{equation}
\frac{\partial}{\partial \alpha}\phi^\ast
=
2\,\frac{b'(\alpha)}{b(\alpha)}\,\phi^\ast,
\qquad
\frac{\partial}{\partial \alpha}\phi
=
2\,\frac{b'(\alpha)}{b(\alpha)}\,\phi,
\label{eq:mean_sensitivity_common}
\end{equation}
and the dynamic economy moves less in absolute terms because overlap has already lowered the level of volatility,
\begin{equation}
\Bigl|\partial_\alpha \phi\Bigr|
=
\mathcal R(\lambda_2)\,
\Bigl|\partial_\alpha \phi^\ast\Bigr|.
\label{eq:absolute_change_smaller}
\end{equation}
We now let $\lambda_2$ move with $\alpha$. Tail thickness then reshapes not only exposure, through $b(\alpha)$, but also the speed of convergence, through $\lambda_2(\alpha)$, and the monotone static ranking need not carry over to the dynamic economy.

\subsubsection{Interplay between the power-law tail and the rate of convergence to equilibrium}
\label{subsubsec:alpha_lambda2_interplay}
So far we have treated the tail exponent $\alpha$ and the dominant transient eigenvalue $\lambda_2$ as if they had nothing to do with each other. In that reading $\alpha$ told us how exposed the aggregate is to the cross-section of shocks, and $\lambda_2$ told us how strongly one vintage overlaps with the next. But in an actual production network the two are not independent, because the same structural features that generate degree concentration also shape the speed with which the network mixes.

\smallskip
Spectral theory and simulations on heterogeneous graphs both suggest that more unequal degrees go with slower convergence. Even when the graph is connected, flow between large regions may pass through a thin set of connectors, and when propagation runs mainly through hubs, the hubs become bottlenecks and disturbances take longer to spread through the economy. Appendix~\ref{app:fat_tails_degree_approx} summarizes the theoretical links between bottlenecks, spectral gaps, and mixing rates, together with results on large random graphs that support this direction of association. Guided by that evidence, we let mixing move with tail thickness, write $\lambda_2=\lambda_2(\alpha)$, and impose the reduced-form sign restriction
\begin{equation}
\lambda_2'(\alpha)<0,
\label{eq:lambda2_prime_sign_fattails}
\end{equation}
so that fatter tails (lower $\alpha$) go with slower convergence (higher $\lambda_2$).

\smallskip
Once $\lambda_2$ moves with $\alpha$, the dynamic economy is no longer a rescaling of the static benchmark, and the stationary formulas of Theorem~\ref{thm:attenuation} give the comparison in closed form. The reallocation increment variances are
\[
\phi(\alpha)=\frac{2\sigma^2 b(\alpha)^2}{1+\lambda_2(\alpha)},
\qquad
\phi^\ast(\alpha)=\frac{2\sigma^2 b(\alpha)^2}{(1-\lambda_2(\alpha))^2}.
\]
In the static benchmark the tail exponent enters through the exposure scale $b(\alpha)^2$ and through the resolvent gain $(1-\lambda_2)^{-2}$, and under the sign restrictions both rise as $\alpha$ falls. In the dynamic economy the tail exponent enters through the same exposure scale and through the overlap factor $1/(1+\lambda_2)$, which falls as $\alpha$ falls. We now record the resulting condition and a bound on how much the overlap factor can do at fixed returns to scale.

\begin{proposi}[Fatter tails and dynamic volatility]
\label{prop:fat_tails_condition}
Let $b(\alpha)$ and $\lambda_2(\alpha)$ be differentiable on an interval of tail exponents, with
$b(\alpha)\neq0$, $(b^2)'(\alpha)<0$, and $\lambda_2'(\alpha)<0$, and hold $\sigma^2$ fixed.
\begin{itemize}
\item[(i)] The static and dynamic elasticities are
\[
\frac{d\log\phi^\ast}{d\alpha}
=
2\frac{b'}{b}+\frac{2\lambda_2'}{1-\lambda_2}<0,
\qquad
\frac{d\log\phi}{d\alpha}
=
2\frac{b'}{b}-\frac{\lambda_2'}{1+\lambda_2},
\]
so lowering $\alpha$ lowers the dynamic variance $\phi$ if and only if
$-\lambda_2'/(1+\lambda_2)>-2b'/b$.
\item[(ii)] For two economies with the same returns to scale $\varsigma$ and reallocation loadings $b_1$ and $b_2$,
$\phi_2/\phi_1\ge b_2^2/\bigl((1+\varsigma)b_1^2\bigr)$, so that $\phi_2>\phi_1$ whatever the two transient eigenvalues whenever $b_2^2>(1+\varsigma)b_1^2$.
\end{itemize}
\end{proposi}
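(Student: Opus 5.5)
The plan is to work directly from the closed-form stationary variances of Theorem~\ref{thm:attenuation}, which under the stated hypotheses hold pointwise in $\alpha$:
\[
\phi(\alpha)=\frac{2\sigma^2b(\alpha)^2}{1+\lambda_2(\alpha)},\qquad \phi^\ast(\alpha)=\frac{2\sigma^2b(\alpha)^2}{(1-\lambda_2(\alpha))^2}.
\]
Since $b(\alpha)\neq0$ and $0<\lambda_2<1$ on the interval, both are strictly positive and differentiable. We may therefore take logarithms:
\[
\log\phi=\log(2\sigma^2)+\log b^2-\log(1+\lambda_2),\qquad \log\phi^\ast=\log(2\sigma^2)+\log b^2-2\log(1-\lambda_2).
\]

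For part (i) we differentiate in $\alpha$, using $(\log b^2)'=2b'/b$. This gives the two displayed elasticities directly. For the sign of the static elasticity, note that $(b^2)'<0$ together with $b\neq0$ gives $2b'/b=(b^2)'/b^2<0$. Also $\lambda_2'<0$ and $1-\lambda_2>0$ give $2\lambda_2'/(1-\lambda_2)<0$, so the sum is negative.

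For the dynamic case, lowering $\alpha$ lowers $\phi$ exactly when $d\log\phi/d\alpha>0$. This holds iff $-\lambda_2'/(1+\lambda_2)>-2b'/b$, which is the stated condition. Both sides are positive under the sign restrictions, which is why the comparison is nontrivial. The only care needed is to read ``lowering $\alpha$ lowers $\phi$'' as a statement about the local derivative, i.e.\ positivity of the elasticity at the point in question.

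For part (ii), take two economies with common $\varsigma$. Their transient eigenvalues satisfy $0<\lambda_{2,k}<\lambda_1=\varsigma$, because $\lambda_2=\varsigma\lambda_2(\mathbf A)$ with $\lambda_2(\mathbf A)<1$ (Assumption~\ref{assu:dominant_transient} and the remarks after it). Then
\[
\frac{\phi_2}{\phi_1}=\frac{b_2^2}{b_1^2}\cdot\frac{1+\lambda_{2,1}}{1+\lambda_{2,2}}.
\]
We bound the second factor from below by replacing $\lambda_{2,1}$ with $0$ and $\lambda_{2,2}$ with $\varsigma$, which gives $(1+\lambda_{2,1})/(1+\lambda_{2,2})>1/(1+\varsigma)$. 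This yields the stated bound, and the strict conclusion $\phi_2>\phi_1$ whenever $b_2^2>(1+\varsigma)b_1^2$.

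The main obstacle is bookkeeping rather than analysis. We must check that the differentiability and nonvanishing hypotheses are exactly what legitimizes the logarithmic differentiation. We must also confirm the range $(0,\varsigma)$ for $\lambda_2$, which is what makes the bounding in (ii) uniform in the two transient eigenvalues.
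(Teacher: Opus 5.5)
Your proof is correct and follows essentially the same route as the paper's: you log-differentiate the closed-form variances $\phi=2\sigma^2b^2/(1+\lambda_2)$ and $\phi^\ast=2\sigma^2b^2/(1-\lambda_2)^2$ for part (i), and for part (ii) you bound the ratio $(1+\lambda_{2,1})/(1+\lambda_{2,2})$ below by $1/(1+\varsigma)$ using $0<\lambda_{2,k}<\varsigma$. The paper also bounds part (ii) through $1+\lambda_{2,1}\ge1$ and $1+\lambda_{2,2}<1+\varsigma$, and your extra checks (that $b\neq0$ legitimizes the log-derivative, and that $2b'/b=(b^2)'/b^2<0$) only make explicit what the paper leaves implicit.
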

\noindent\emph{Proof in Appendix~\ref{app:proof:prop:fat_tails_condition}.}

\smallskip
Fatter tails therefore need not mean more aggregate volatility once shocks overlap in time. In the static benchmark exposure and persistence reinforce each other, because the resolvent gain rises with $\lambda_2$. In the dynamic economy they oppose each other, because the interference that removes the inherited component strengthens with $\lambda_2$. Whether the reallocation channel's contribution to observed aggregate volatility rises or falls as tails fatten depends on the persistence elasticity $-\lambda_2'/(1+\lambda_2)$ against the exposure elasticity $-2b'/b$. Persistence can do only so much at fixed returns to scale. The overlap factor $1/(1+\lambda_2)$ ranges over $(1/(1+\varsigma),1)$, so an increase of $|b|$ by more than $\sqrt{1+\varsigma}$, which is $1.34$ at $\varsigma=0.8$, raises dynamic volatility whatever happens to mixing, as part (ii) of Proposition~\ref{prop:fat_tails_condition} records.\footnote{The opposition between the two channels is specific to increments. In levels the dynamic variance is $\sigma^2b^2/(1-\lambda_2^2)$ by Lemma~\ref{lem:levels_increments_translation} in Appendix~\ref{app:levels_vs_increments}, so its elasticity $2b'/b+2\lambda_2\lambda_2'/(1-\lambda_2^2)$ has two negative terms and fatter tails raise dynamic level variance through both channels.}

%====================================================
\section{Computational Experiments on the Reconstructed US Production Network}
\label{sec:abm}
%====================================================

The closed forms of the preceding sections rest on assumptions that an actual production network need not satisfy. We assumed that the propagation operator is diagonalizable, that its dominant transient eigenvalue is real, simple, and separated from the rest of the spectrum by a gap, and that aggregate dynamics reduce to the two leading modes, and every formula took the balances of the money circuit as settled. We now relax all of these assumptions at once. We build the economy of Section~\ref{sec:model} in silico on reconstructed US production networks and run its full adjustment dynamics, with the money circuit, the market-clearing prices, and every eigenmode retained, and we read volatility and its entire distribution off the realized paths rather than from any formula. The experiments are agent-based in the sense that each firm spends its balance on inputs in its Cobb--Douglas shares, posts the price that clears its market, and produces from the inputs delivered, and aggregate output is whatever these decisions add up to.

Throughout, we weight firms by their equilibrium gross sales, the size weights $\boldsymbol\gamma=\mathbf m^\ast$ of Proposition~\ref{prop:singular_aggregator}, taken from the settled baseline balances and held fixed, the same vector in the static and in the dynamic series. Aggregate volatility is the standard deviation of the one-step growth increments of aggregate output, the square root of the variances compared in the preceding sections. At these weights the theory makes a prediction with nothing left to fit. By Proposition~\ref{prop:singular_aggregator}, aggregate output is a scalar autoregression at the Perron rate whatever the Jordan structure of the operator, so the ratio of dynamic to static volatility equals $\sqrt{\mathcal R(\varsigma)}=(1-\varsigma)/\sqrt{1+\varsigma}$ for every network, which is $0.149$ at the calibrated returns to scale. Once balances have settled the log-output recursion \eqref{eq:logq_recursion} is exactly linear, so the ratio is an identity of the simulated economy, and the experiments ask whether the identity survives the full economy on reconstructed topologies across nearly three decades of size. They also add what the identity does not fix, namely the level of static and dynamic volatility at each economy size and the dispersion of the measured ratio over finite windows.

We proceed from one network to an ensemble. On a single reconstructed network we trace the step-to-step movements of dynamic aggregate output against the static output generated by the same productivity draws, and find that dynamic output moves far less from one step to the next. We then measure how the two volatilities decay as the economy grows, and find their ratio stable across nearly three decades in the number of firms. And we repeat the comparison across an ensemble of independent reconstructions at two sizes, on which every reconstruction returns the same ratio. Appendix~\ref{app:abm_eigenvalue} records the realized subdominant eigenvalue of the monetary circuit across the reconstructions, which is the source of the calibrated range of Section~\ref{subsec:boe_vol}, and shows that the ratio is unmoved by whether that eigenvalue is real, complex, or negative.

\smallskip
The networks are reconstructed with the method of \citet{bhattathiripad2026reconstructing}, which generates a firm-to-firm network with a chosen number of firms and supplies its edge weights. The weights enter as each firm's Cobb--Douglas intermediate-input shares, with self-loops dropped and the shares renormalized so that each firm's expenditure shares sum to one. Returns to scale are $\varsigma=0.8$, so output closes about a fifth of its distance to equilibrium each step. Firm productivities are drawn i.i.d.\ as $z_i=\exp(\epsilon_i)$ with $\epsilon_i\sim\mathcal N(0,\sigma^2)$ and $\sigma=0.10$. The scale is illustrative, and the volatility ratios do not depend on it.\footnote{Firm-level volatility, measured by the standard deviation of firm sales growth, is on the order of $10\%$ \citep{comin2005rise,thesmar2011contrasting,davis2006volatility}. We take that order of magnitude as the illustrative scale without identifying the two objects, since nominal sales in this economy equal the settled balances $\mathbf m^\ast$ at every date and do not respond to productivity once balances have settled.} Two output series, one static and one dynamic, are built from one shared sequence of draws. Each network is first relaxed over $200$ shock-free steps. The dynamic series then advances the economy one step per draw and records output at every date. The static series takes each draw in turn and runs the economy in isolation for $40$ steps, by which point the mean absolute relative price change has fallen below $10^{-4}$ and output has reached the neighborhood of the equilibrium of that draw. The two series differ only in timing.

%----------------------------------------------------
\subsection{Localized movements of dynamic output}
%----------------------------------------------------

We begin with a single reconstructed network of about $10^5$ firms and $100$ draws, one productivity draw per firm at each step of the run. Figure~\ref{fig:ts} plots the two output series, each normalized by its own mean. The static path (grey) swings across roughly $[0.88,1.10]$, while the dynamic path (black) stays bunched in $[0.97,1.02]$, since dynamic output never travels the full distance to each draw's equilibrium before the next draw arrives. The standard deviation of dynamic growth increments is about $0.15$ times that of static increments, against the $0.149$ predicted at the calibrated returns to scale.

\begin{figure}[H]
\centering
\includegraphics[width=\textwidth]{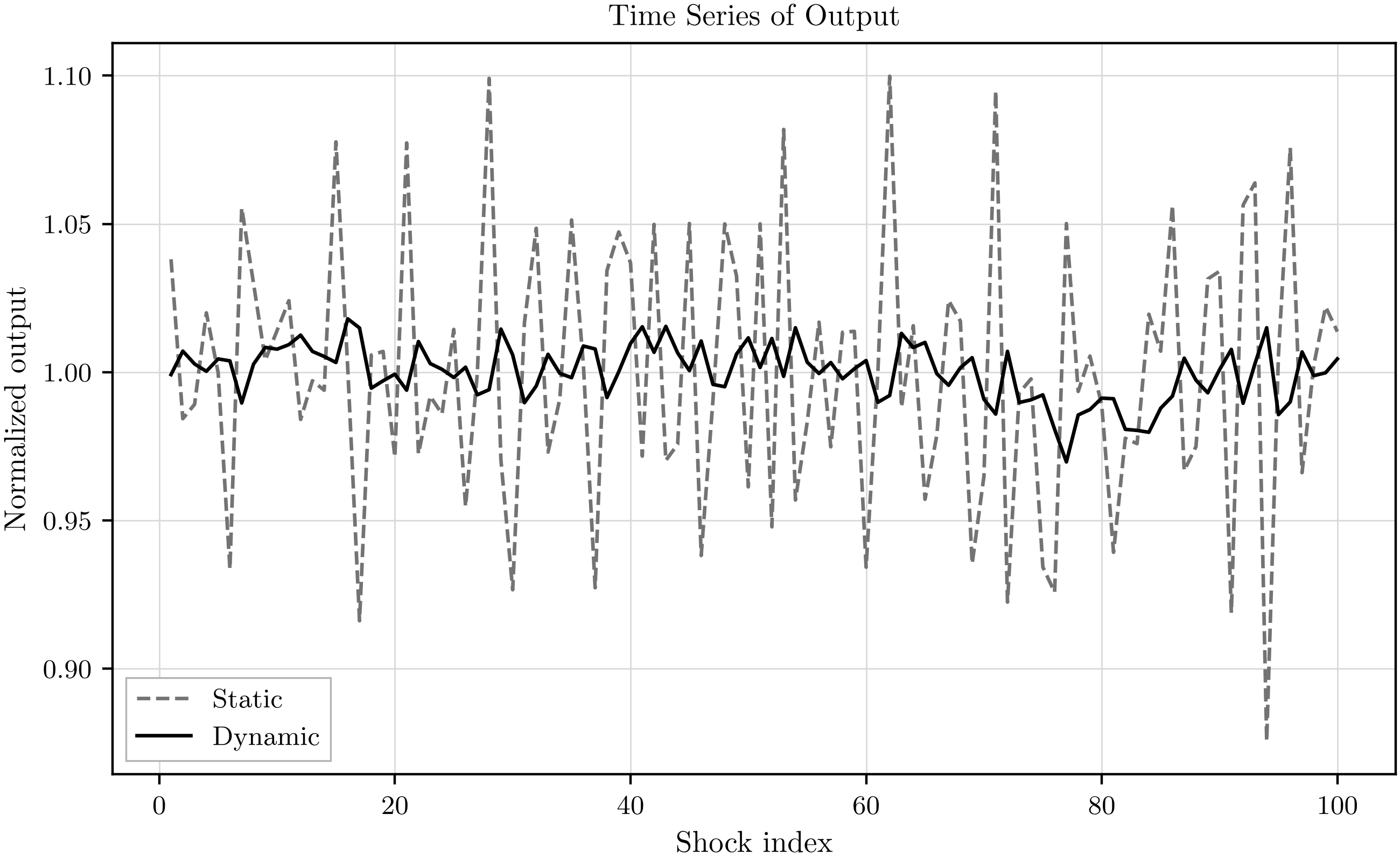}
\caption{Static and dynamic aggregate output for $100$ productivity draws on a reconstructed US
production network of about $10^5$ firms, each series normalized by its own mean.}
\label{fig:ts}
\end{figure}

The levels in the figure compress less than the increments do, and under the granular substitution Lemma~\ref{lem:levels_increments_translation} in Appendix~\ref{app:levels_vs_increments} says that they must. The dispersion of dynamic levels is $\sqrt{(1-\varsigma)/(1+\varsigma)}=0.33$ times that of static levels, while the increment ratio is $0.149$, because differencing removes the inherited component that persistence keeps in the level.

\smallskip
The dampening does not hold at every step. Figure~\ref{fig:ts_zoom} isolates three windows, around steps $43$--$44$, $58$--$59$, and $89$--$90$, in which the dynamic series moves more than the static one. In each window two consecutive draws happen to define nearly coincident static equilibria, so that the static change is close to zero by construction, while the dynamic series, still lagging an earlier draw, keeps drifting toward equilibrium and its change remains appreciably nonzero.

These windows are the finite-horizon reversals of Appendix~\ref{app:reversal}, seen through size weights. Since $\boldsymbol\gamma=\mathbf m^\ast$ sets $b=0$, the relevant reduction is the granular one, obtained from \eqref{eq:overshoot_core_f} under the substitution $(b,\lambda_2,\eta_t)\mapsto(1,\lambda_1,g_t)$. A dynamic increment then exceeds its fully processed counterpart when the difference of two adjacent size-weighted innovations is small relative to the combination of the inherited common-mode state and the new innovation. Proposition~\ref{prop:reversal_frequency} says how often such configurations occur once the run has moved away from its settled start. The reversal frequency is free of $b$, $\sigma$, and $n$, and under the granular substitution it equals $p(\lambda_1)=p(0.8)\approx 0.057$ at the calibrated returns to scale, about one step in eighteen. The three circled windows are the most visible of these episodes in this run.

The prediction presumes the alignment of Section~\ref{subsec:dominant_transient_vol}. Under that alignment the dynamic increment at a step is paired with the static increment of the draw that first enters dynamic output at that step. Pairing the two series one date apart changes the covariance of the pair, and with it the frequency, so a count of realized reversals must use the same pairing. The fat-tailed size distribution does not disturb the prediction. The granular innovation $g_t=\mathbf m^{\ast\top}\boldsymbol{\epsilon}_t$ has standard deviation $\sigma\|\mathbf m^\ast\|_2$, which need not vanish as $n$ grows, but the frequency is scale-free, and only the magnitude of reversals, governed by Corollary~\ref{coro:reversal_magnitude} of Appendix~\ref{app:reversal}, depends on the scale of the innovations.

\begin{figure}[H]
\centering
\includegraphics[width=\textwidth]{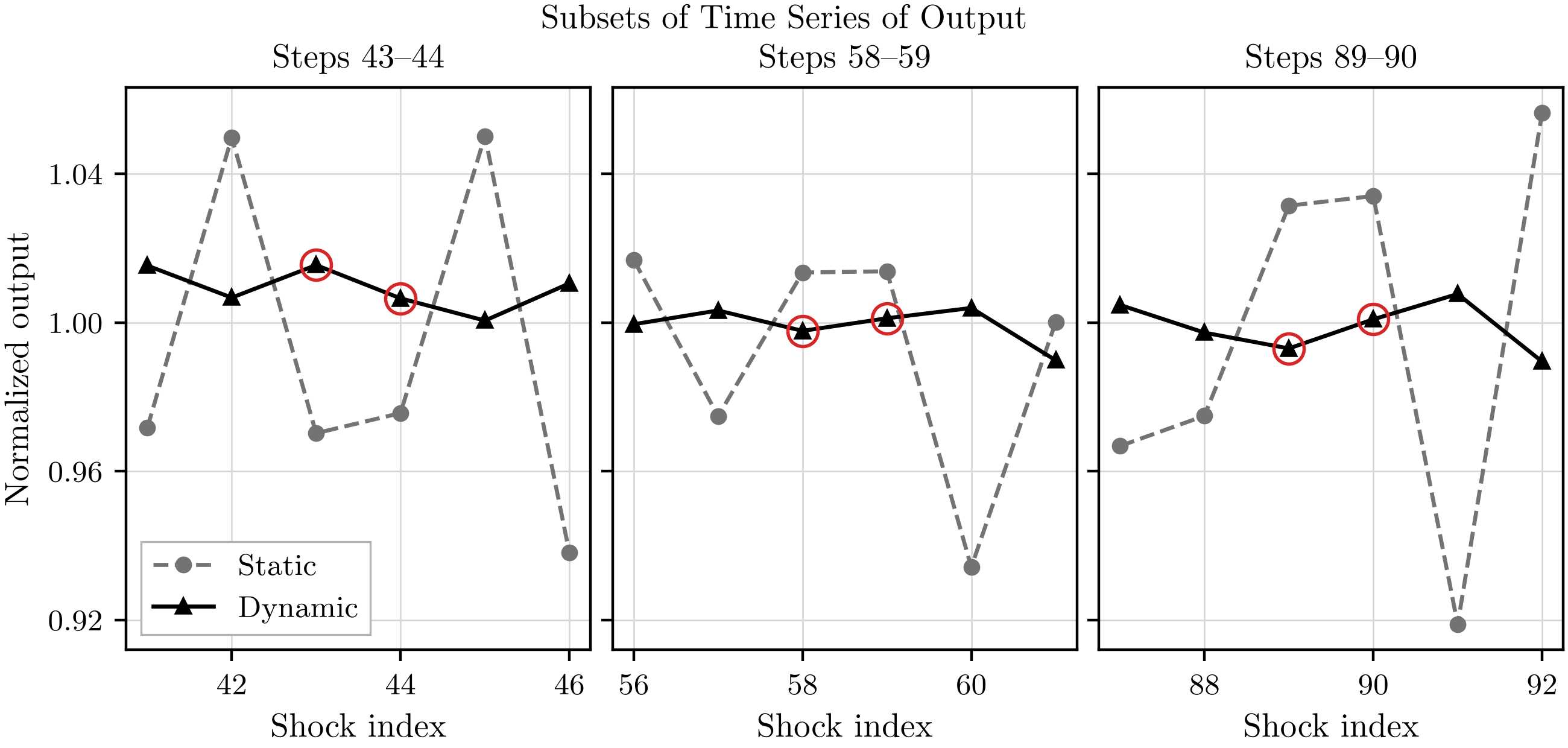}
\caption{Three windows of the series in Figure~\ref{fig:ts} (steps $43$--$44$, $58$--$59$, $89$--$90$)
in which dynamic output moves more than static output.  The reversal steps are circled.  We write
$d_{\mathrm{dyn}}$ and $d_{\mathrm{stat}}$ for the corresponding one-step changes.}
\label{fig:ts_zoom}
\end{figure}

Figure~\ref{fig:hist_ratio} turns the visual comparison into a comparison of distributions. It plots the empirical distribution of one-step output ratios $Y_s/Y_{s+1}$ for both series over $10{,}000$ draws on the same network. The dynamic distribution is a tall, narrow spike around one, while the static distribution is broad, the standard deviation of the ratio being $0.061$ for the static series against $0.0091$ for the dynamic series (a ratio of $0.149$), and the $1$--$99\%$ spread is $0.283$ against $0.042$ (a ratio of $0.148$). The two distributions have the same shape, and would at any network size, because the innovations are Gaussian and the log-output recursion is linear, so log output is Gaussian in both series, and the kurtosis of about $3.0$ in both histograms records nothing more than the exponential map from log output to the ratio. The dynamics therefore compress the width of the entire distribution by the constant factor $0.149$ and leave its shape unchanged. At a fixed threshold the compression is the tail effect of Proposition~\ref{prop:tail_amp}. A one-step change of five percent or more, which occurs in roughly two of every five static draws, does not occur once in the ten thousand dynamic draws.

\begin{figure}[H]
\centering
\includegraphics[width=\textwidth]{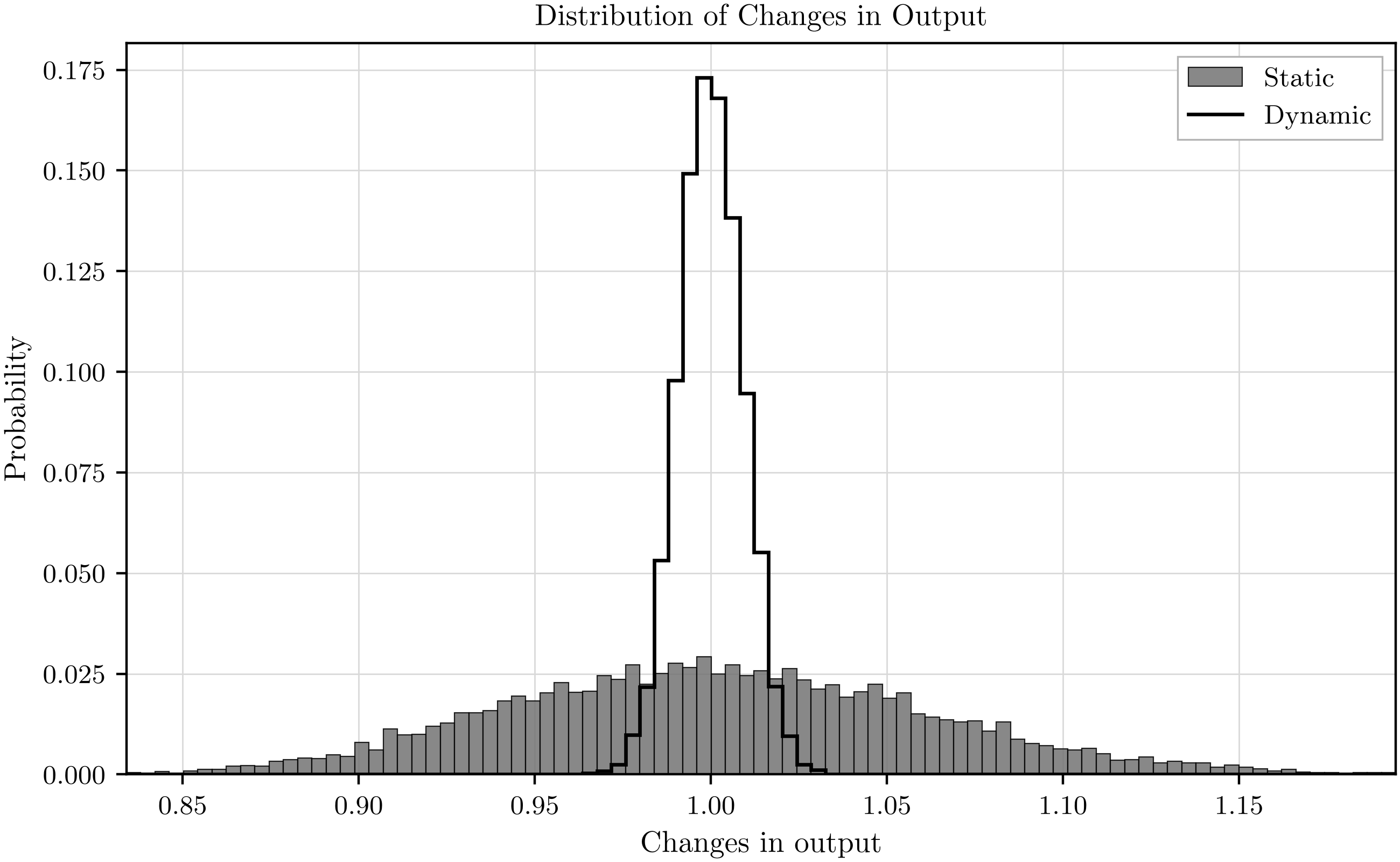}
\caption{Empirical distribution of one-step output ratios $Y_s/Y_{s+1}$ for the static and dynamic series,
over $10{,}000$ draws on a reconstructed US network of about $10^5$ firms.  The dynamic distribution
(narrow) is far more concentrated than the static (wide).  Both are close to Gaussian.}
\label{fig:hist_ratio}
\end{figure}

%----------------------------------------------------
\subsection{The scaling of aggregate volatility}
%----------------------------------------------------

One might worry that the gap between static and dynamic volatility is a small-economy phenomenon, and that it washes out as the number of firms grows and cross-sectional averaging takes hold. We therefore repeat the comparison across economy sizes from about $10^4$ to $6\times 10^6$ firms. For each size we draw reconstructed networks and shock collections, compute the static and dynamic output sequences, and record the two aggregate volatilities, averaging over $100$ reconstructions at the two smaller sizes and using a single network at the two largest.

Figure~\ref{fig:scaling} plots the mean volatilities on log--log axes. Both decline with the number of firms, and more slowly than the $1/\sqrt n$ of the central limit theorem, as a fat-tailed size distribution implies. In the static series volatility falls from $11.9\%$ at about $10^4$ firms to $1.04\%$ at about $6\times 10^6$ firms, a factor of eleven against a factor of six hundred in the number of firms, which is a slope of about $-0.4$ on the log--log axes where the central limit theorem would give $-0.5$. The two lines are parallel, because under size weights both volatilities are proportional to the size norm $\|\mathbf m^\ast\|_2$ of the granular driver in Proposition~\ref{prop:singular_aggregator}, so the dynamic-to-static ratio is about $0.15$ at every size ($0.148$, $0.150$, $0.146$, $0.158$ across the four sizes). The dampening is therefore not a small-economy phenomenon. It operates with the same force at every scale, which is the computational counterpart of Lemma~\ref{lem:neumann_bounds_L} and Corollary~\ref{coro:granular_perron_rate}, both of which order the two volatilities independently of $n$.

\begin{figure}[H]
\centering
\includegraphics[width=\textwidth]{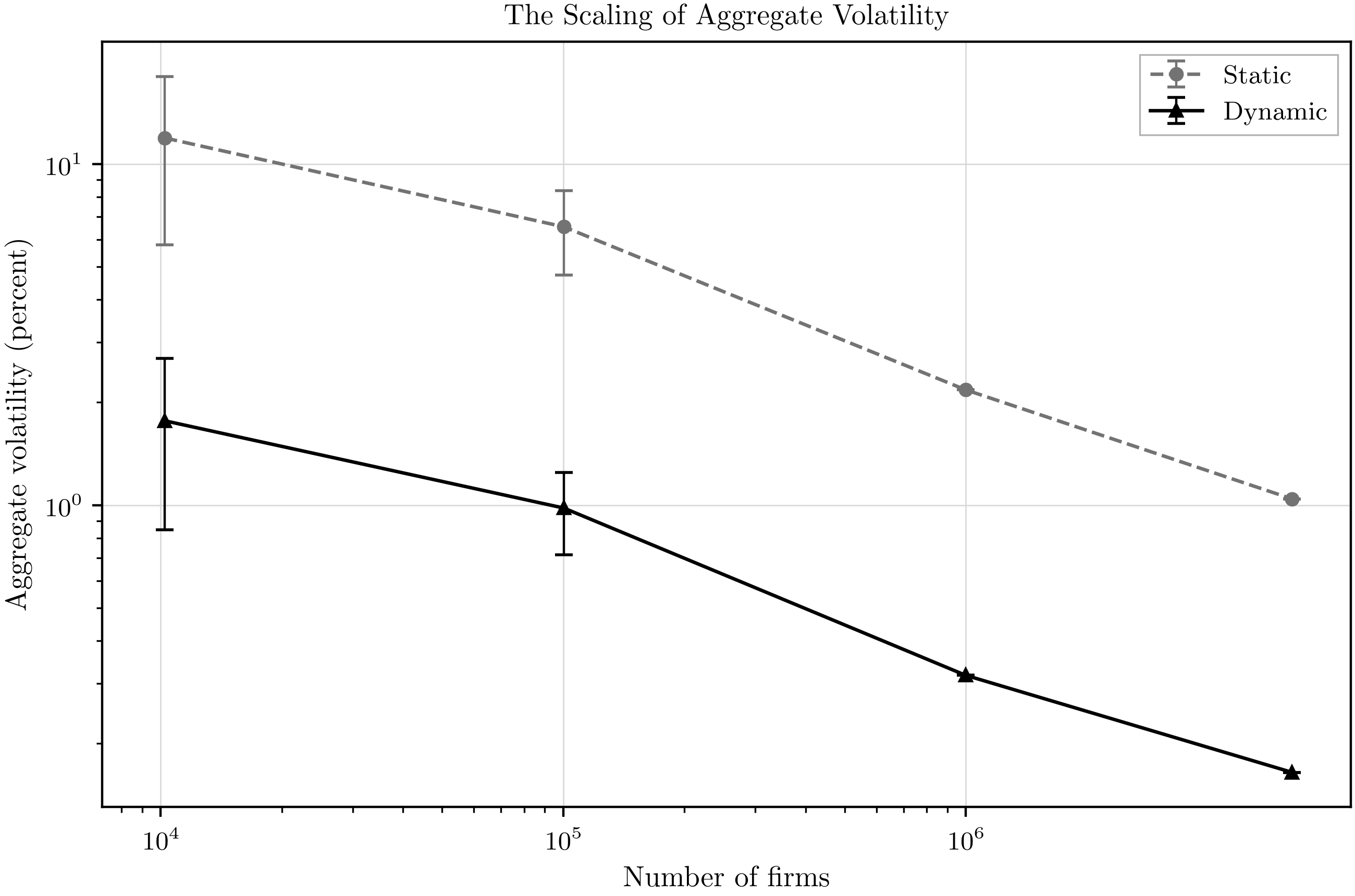}
\caption{Static and dynamic aggregate volatility (percent, log scale) against the number of firms (log
scale).  Each point at $10^4$ and $10^5$ firms is the mean over $100$ reconstructed US networks (error
bars $\pm 1$ s.d.\ across networks).  The $10^6$ and $6\times 10^6$ points are single networks.}
\label{fig:scaling}
\end{figure}

The value of the ratio, and not only the ordering, is the main result of this section. Proposition~\ref{prop:singular_aggregator} says that under size weighting the ratio equals $(1-\varsigma)/\sqrt{1+\varsigma}$ for every network, with nothing left to fit. Volatility is a standard deviation, so the measured $0.15$ corresponds to a variance ratio of about $0.022$, against the $\mathcal R(\varsigma)=(1-\varsigma)^2/(1+\varsigma)=0.0222$ of the proposition at $\varsigma=0.8$. Two of the four measured ratios sit within one percent of $0.149$, a third within two percent, and the fourth, the single-network $6\times10^6$ run, within six percent, across networks that differ in size by nearly three decades and in topology from one draw to the next. The two ratios within one percent are means over $100$ networks. The two that sit farther from $0.149$ are the single-network runs, each estimated from one window of draws on one network, and their distance from $0.149$ is the finite-window dispersion of Section~\ref{sec:eigenmodes}, whose settled start and window length these runs share.

\smallskip
The ratio measured here therefore cannot depend on the network, even though every network in the experiments is fat-tailed and freshly reconstructed. The measurements are blind to the network because of how the aggregate is built, since under size weights every transient mode is removed before it reaches the aggregate, for any transient eigenvalue. The network matters for weights that load on a transient mode, and size weights load on none. An observer who weights firms by equilibrium size will conclude that convergence speed is governed by returns to scale alone, whatever the mixing rate of the network being observed. Exhibiting the reallocation channel requires weights satisfying Assumption~\ref{assu:misalignment}, which none of our experiments imposes.

%----------------------------------------------------
\subsection{Aggregate volatility across reconstructed US production networks}
%----------------------------------------------------

Finally, we ask how the dampening behaves across independent reconstructions of the US production network. We generate an ensemble of reconstructed firm-to-firm networks at two sizes, about $10^4$ and $10^5$ firms, with $100$ independent reconstructions at each size, and, holding the rest of the calibration fixed, we run the static and dynamic experiments on every reconstructed network.

Figure~\ref{fig:boxplot} reports the results in two size panels. In each panel the dynamic box sits far below the static one. Median static volatility is $9.65\%$ against dynamic $1.45\%$ at $10^4$ firms, and $6.12\%$ against $0.92\%$ at $10^5$ firms, a ratio of about $0.15$ at both sizes. Both distributions tighten as the economy grows. The spread within each panel is the spread of the size norm $\|\mathbf m^\ast\|_2$ across reconstructions, to which both volatilities are proportional under size weights, and that spread leaves the ratio untouched, the ratio being about $0.15$ in every reconstruction, as Figure~\ref{fig:agg_vol_robust} in Appendix~\ref{app:abm_eigenvalue} shows at $10^4$ firms. The gap is thus a feature of the overlapping-adjustment mechanism that every reconstruction in the ensemble reproduces.

\begin{figure}[H]
\centering
\includegraphics[width=\textwidth]{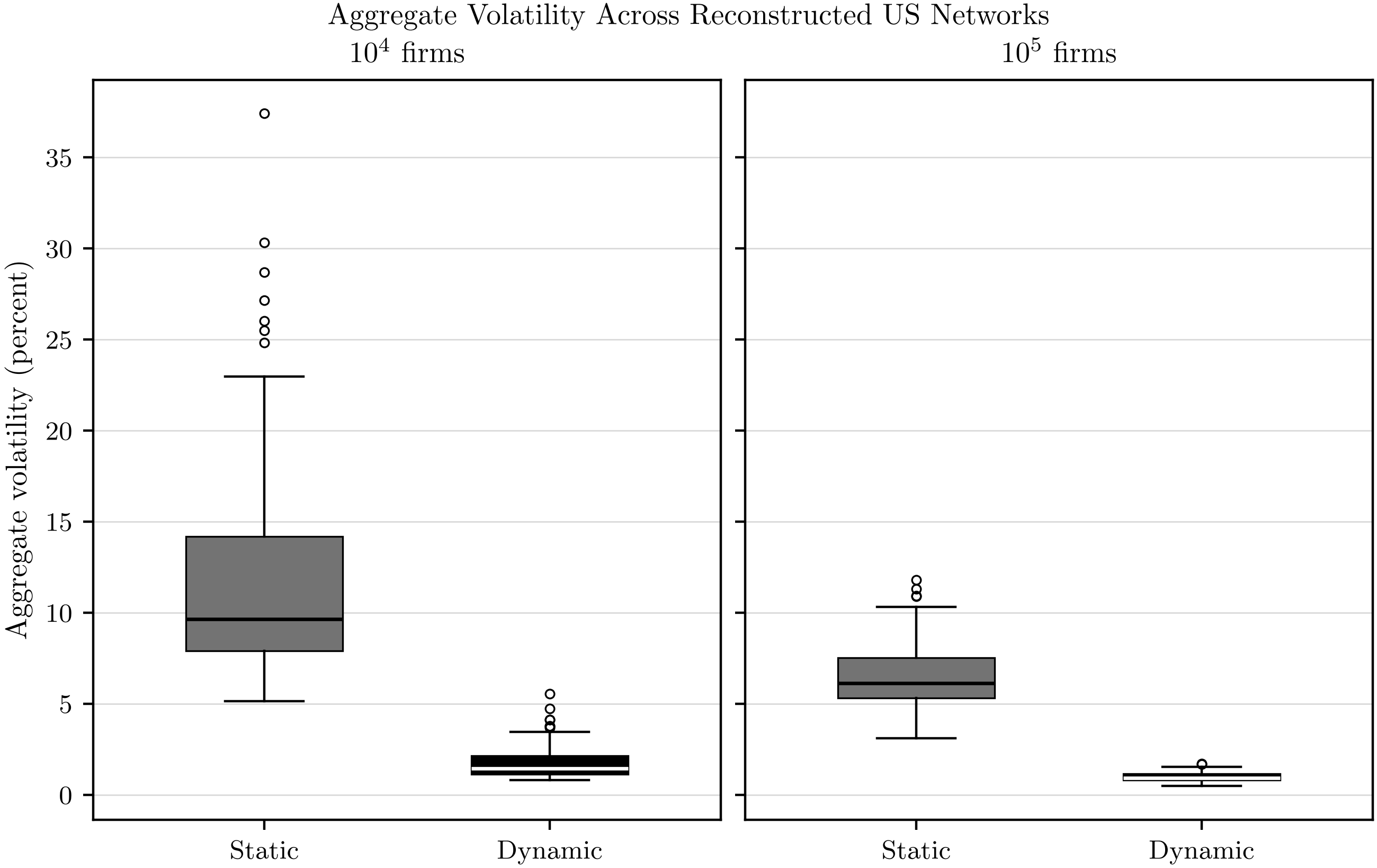}
\caption{Static and dynamic aggregate volatility across reconstructed US production networks, in two
size panels ($\approx 10^4$ and $\approx 10^5$ firms, $100$ independent reconstructions each), with
$100$ productivity draws per network.  In both panels the dynamic distribution sits well below the
static one.}
\label{fig:boxplot}
\end{figure}

Taken together, the experiments establish that static volatility exceeds dynamic volatility on real topology, as Lemma~\ref{lem:neumann_bounds_L} orders them, across nearly three decades in the number of firms and across independent reconstructions, with dynamic aggregate volatility running at about $15$ percent of its static counterpart. They also establish that the size-weighted ratio of Proposition~\ref{prop:singular_aggregator}, an identity of the linearized economy, survives the full economy with its money circuit and its prices and every mode retained, and the measured ratios scatter around it by no more than finite windows produce. Under equilibrium gross-sales weights the surviving fraction is set by returns to scale alone, and the aggregate does not see the network's transient spectrum. The empirical content of the granular-origins view therefore depends jointly on the rate at which the economy converges to equilibrium and on the weights through which convergence is observed.

\section{Concluding Thoughts} \label{sec:conclusion}
How granular innovations aggregate into macroeconomic fluctuations is a difficult problem because aggregation runs along two dimensions at once. Shocks cumulate across firms, and they cumulate across time, and neither process operates alone. A firm's innovation reaches other firms only through changes in the flow of intermediate inputs, and those flows take time to travel the supply chain. Successive draws interact only through the network, because how today's draw meets yesterday's depends on which firms were hit and how they are connected. Treating the economy as a representative firm, or as one that reaches equilibrium without the passage of time, collapses the interaction between the two processes. What we show in this paper is that much of what is macroeconomically consequential arises precisely from that interaction, as shocks mix across firms while overlapping across dates. We model this mechanism using a Neumann-series representation of the propagation operator to derive asymptotic aggregate volatility and tail risk. We then derive the full finite-horizon distribution of aggregate volatility using a dominant-transient (second-mode) reduction of the production network. We find that macroeconomic risk increases with the rate at which the economy converges to equilibrium. Slower convergence means more overlapping of shocks from the past, and therefore more `time-averaging'. Because cross-sectional averaging of shocks across firms itself occurs only through time, stronger time-averaging dampens the macroeconomic impact of shocks originating from highly connected firms. The dampening holds provided the aggregation weights do not annihilate the transient modes, as equilibrium gross-sales weights do, at which the network's transient spectrum drops out of the attenuation ratio altogether.\footnote{One might worry that granular shocks need not be i.i.d.\ over time, since many observed
series are autocorrelated.  But serial dependence in observables need not imply serial dependence
in the primitive innovations, since network propagation can generate topology-dependent temporal
correlation even when primitives are uncorrelated, and can even flip its sign
\citep{puravankara_ts_networks2026}.  More generally, autocorrelation in measured productivity is
therefore not, by itself, a reason to build persistence into the primitives. See also \citet{foerster2008} and \cite{liu2020}.}

\smallskip
Why highly connected firms matter less in a dynamic setting is easy to see. Even the largest firms in an economy are connected to only a small portion of the universe of firms.\footnote{In the United States, for instance, there are about six million firms \citep{Axtell2001Zipf}. But even the largest of these tend not to have more than ten thousand significant connections \citep{Atalay2011NetworkStructure}.} Their macroeconomic influence arises through indirect buyer--seller relations, in that large firms matter because their innovations can reach most other firms through second- or third-order connections. In a dynamic setting, these connections perennially carry disturbances from earlier innovations, originating from firms large and small. Every new innovation therefore superposes with the partially processed waves of earlier vintages, and the resulting interference and partial cancellation mute the aggregate consequences of tail innovations in highly connected firms. The slower the economy converges to equilibrium, the more residual disturbance the paths of propagation carry. When convergence is sufficiently slow, and the weights load on the transient direction at all, the macroeconomic risks generated by a fat-tailed degree distribution can therefore look much the same as those in an economy with roughly homogeneous connections. The reason is that time-averaging dampens the influence of the productivity change in one firm on the output of others.

\smallskip
The paper's results can be summarized briefly. Lemma \ref{lem:neumann_bounds_L} compares static and dynamic aggregate volatility under an infinite stream of i.i.d.\ shocks, for every fixed nonnegative aggregation vector, household-derived weights included. Dynamic aggregate volatility is lower than its static counterpart because the productivity waves generated by i.i.d.\ shocks of different vintages superpose, thereby dampening each other's impact on output. The lemma uses a Neumann-series representation of the Leontief inverse to derive this asymptotic comparison. Theorem \ref{thm:attenuation} makes the comparison exact for the dominant transient mode, where overlap deflates the static growth-rate variance of that mode by a closed-form factor that depends only on the dominant transient eigenvalue. Building on the same spectral reduction, we characterize the finite-horizon distributions of static and dynamic aggregate volatility as weighted sums of independent chi-square variables. Theorem \ref{thm:phi_finite_ordering} shows that the static distribution dominates the dynamic one in the first-order stochastic sense, at every window length and every admissible rate of convergence. Theorem \ref{thm:full_network_fosd} proves the same dominance for aggregate output on any production network of the model, under any admissible weights and without any spectral hypothesis. Theorem \ref{thm:aggregator_geometry} and Proposition \ref{prop:singular_aggregator} then supply what the preceding results leave open, namely how much of any of this an observer sees. The theorem decomposes the aggregate ratio across eigenmodes, with every admissible weight vector carrying the granular channel at full loading, and the proposition shows that weighting firms by equilibrium size removes the transient spectrum exactly, without any diagonalizability hypothesis. The two channel rates of our calibration are therefore statements about the two channels rather than about the aggregate, which is known exactly only at size weights. Together these results give the channel rates used in that calibration and the point prediction that the computational experiments check. The intermediate derivations also allow us to make stochastic-dominance statements about static and dynamic tail-risk.  Overall, the passing of time can blunt the macro consequences of degree heterogeneity. Put differently, how much fat-tails matter depends on the rate at which the economy converges to equilibrium, and on the weights through which convergence is observed.

\smallskip
We estimated the empirical significance of our theoretical insights by running computational experiments on a reconstructed production network with the universe of firms in the United States. These experiments showed that the aggregate volatility generated by granular shocks when the economy takes time to relax to equilibrium is an order of magnitude lower than that under the assumption of instantaneous convergence. Which is to say that the claim that `granular shocks can explain nearly all of the observed macroeconomic fluctuations within a general equilibrium setting' is not the same as the claim that `granular shocks can explain nearly all of the observed macroeconomic fluctuations'. The latter claim cannot be made unconditional on the rate of convergence to equilibrium, and conditional on the rate it can plausibly leave a sizeable share of aggregate volatility unexplained by granular shocks. How much of observed macroeconomic fluctuation granular shocks account for is thus a statement about a rate of convergence to equilibrium as much as about the size or the degree distribution of firms.

\newpage
\bibliography{ref}
\bibliographystyle{aea}

\newpage

\appendix

\section{Proofs}
\label{app:proofs}

\subsection{\texorpdfstring{Proofs for Section~\ref{sec:volatility_leontief}}{Proofs for Section 3}}

\subsubsection{\texorpdfstring{Proof of Lemma~\ref{lem:neumann_convergence_L}}{Proof of Lemma 1}}
\label{app:proof:lem:neumann_convergence_L}
\begin{proof}
The Neumann expansion of $(\mathbf I-\mathbf W)^{-1}$ is the standard consequence of
$\rho(\mathbf W)<1$.  The column-stochastic structure makes the layer weights exact.  Since
$\mathbf W\mathbf 1=\varsigma\mathbf 1$,
and both $\boldsymbol{\gamma}$ and $\mathbf W$ are entrywise nonnegative with
$\boldsymbol{\gamma}^\top\mathbf 1=1$, the row vector $\boldsymbol{\gamma}^\top\mathbf W^\ell$ is
nonnegative and
\[
\bigl\|\boldsymbol{\gamma}^\top \mathbf W^\ell\bigr\|_1
=
\boldsymbol{\gamma}^\top \mathbf W^\ell\mathbf 1
=
\varsigma^{\,\ell}.
\]
The same computation applied to each row of the nonnegative matrix $\mathbf W^\ell$, whose row sums are
$\varsigma^\ell$ because $\mathbf W^\ell\mathbf 1=\varsigma^\ell\mathbf 1$, gives
$\|\mathbf W^\ell\|_\infty=\varsigma^\ell$.  H\"older's inequality then gives
$\bigl|\boldsymbol{\gamma}^\top \mathbf W^\ell\boldsymbol{\epsilon}\bigr|\le\varsigma^{\,\ell}\|\boldsymbol{\epsilon}\|_\infty$,
and summing the geometric tail yields \eqref{eq:finite_depth_error_bound_simple}.
\end{proof}

\subsubsection{\texorpdfstring{Proof of Lemma~\ref{lem:neumann_bounds_L}}{Proof of Lemma 2}}
\label{app:proof:lem:neumann_bounds_L}
\begin{proof}
The increment loads on consecutive depth blocks, whose difference the Neumann sum controls because every block is entrywise nonnegative. We establish in turn the moving-average representation of the increment, the variance bound $\phi^{[\mathcal L]}\le\phi^\ast$, the tail-risk ordering, and the deep-processing limit.

We first derive the moving-average representation, starting from the stationary superposition
\eqref{eq:superposition_infinite_vintage_L}.  Write
$\mathbf a_\tau^\top:=\boldsymbol{\gamma}^\top\mathbf W^{\tau\mathcal L}\mathbf S_{\mathcal L}$ for the
block exposure of a vintage of age $\tau$, so that
$\mathbf a_0^\top=\boldsymbol{\gamma}^\top\mathbf S_{\mathcal L}$.  Shifting the index of the lagged sum
by one date, the vintage $\boldsymbol{\epsilon}_{t-\tau}$ enters $y_{t-1}^{[\mathcal L]}$ with the
coefficient $\mathbf a_{\tau-1}$ for $\tau\ge 1$ and does not enter it at all for $\tau=0$.  Therefore,
\begin{align*}
\Delta y_t^{[\mathcal L]}
&=
\sum_{\tau=0}^{\infty}\mathbf a_\tau^\top\boldsymbol{\epsilon}_{t-\tau}
-\sum_{\tau=0}^{\infty}\mathbf a_\tau^\top\boldsymbol{\epsilon}_{t-1-\tau} \\
&=
\mathbf a_0^\top\boldsymbol{\epsilon}_{t}
+
\sum_{\tau=1}^{\infty}\bigl(\mathbf a_\tau-\mathbf a_{\tau-1}\bigr)^\top
\boldsymbol{\epsilon}_{t-\tau}.
\end{align*}

We next bound the increment variance.  By
independence across dates and $\Cov(\boldsymbol{\epsilon}_t)=\sigma^2\mathbf I$,
\[
\phi^{[\mathcal L]}
=
\sigma^2\Bigg(\|\mathbf a_0\|_2^2+\sum_{\tau=1}^{\infty}\|\mathbf a_\tau-\mathbf a_{\tau-1}\|_2^2\Bigg).
\]
Since $\mathbf W\ge 0$ and $\boldsymbol{\gamma}\ge 0$, every $\mathbf a_\tau$ is entrywise nonnegative, so
all pairwise inner products $\mathbf a_\tau^\top\mathbf a_{\tau'}$ are nonnegative.  Applied to adjacent
blocks this gives $\|\mathbf a_\tau-\mathbf a_{\tau-1}\|_2^2\le\|\mathbf a_\tau\|_2^2+\|\mathbf a_{\tau-1}\|_2^2$,
and hence, writing $S:=\sum_{\tau\ge0}\|\mathbf a_\tau\|_2^2$,
\[
\sum_{\tau=1}^{\infty}\|\mathbf a_\tau-\mathbf a_{\tau-1}\|_2^2
\;\le\;
\sum_{\tau\ge1}\|\mathbf a_\tau\|_2^2+\sum_{\tau\ge0}\|\mathbf a_\tau\|_2^2
=
2S-\|\mathbf a_0\|_2^2,
\]
so that $\phi^{[\mathcal L]}\le 2\sigma^2 S$.  Applied to all pairs, nonnegativity also gives
\[
S
\;\le\;
\Big\|\sum_{\tau=0}^{\infty}\mathbf a_\tau\Big\|_2^2 .
\]
The block exposures sum to the fully processed exposure by the tiling identity \eqref{eq:block_tiling},
\[
\sum_{\tau=0}^{\infty}\mathbf a_\tau^\top
=
\boldsymbol{\gamma}^\top\Bigg(\sum_{\tau=0}^{\infty}\mathbf W^{\tau\mathcal L}\mathbf S_{\mathcal L}\Bigg)
=
\boldsymbol{\gamma}^\top(\mathbf I-\mathbf W)^{-1}.
\]
Combining the three displays yields
\[
\phi^{[\mathcal L]}
\;\le\;
2\sigma^2\bigl\|\boldsymbol{\gamma}^\top(\mathbf I-\mathbf W)^{-1}\bigr\|_2^2.
\]
In the static benchmark, $y_t^\ast=\boldsymbol{\gamma}^\top(\mathbf I-\mathbf W)^{-1}\boldsymbol{\epsilon}_t$
is i.i.d.\ across $t$, hence
\[
\phi^\ast
=
\Var(\Delta y_t^\ast)
=
2\,\Var(y_t^\ast)
=
2\sigma^2\bigl\|\boldsymbol{\gamma}^\top(\mathbf I-\mathbf W)^{-1}\bigr\|_2^2,
\]
so $\phi^{[\mathcal L]}\le \phi^\ast$.

We now turn to the tail-risk ordering.  Both $\Delta y_t^{[\mathcal L]}$ and $\Delta y_t^\ast$ are
centered Gaussian scalars under Assumption~\ref{assu:innovations}.  For fixed $c>0$, if
$Z\sim\mathcal N(0,v)$ then $v\mapsto \Pr(Z<-c)$ is
increasing, so $\phi^{[\mathcal L]}\le\phi^\ast$ implies $\omega_c^{[\mathcal L]}\le \omega_c^\ast$.

We finally take the deep-processing limit.  The layer-weight identity
\eqref{eq:layer_weight_identity} of Lemma~\ref{lem:neumann_convergence_L} gives
$\|\mathbf a_\tau\|_1=\boldsymbol{\gamma}^\top\mathbf W^{\tau\mathcal L}\mathbf S_{\mathcal L}\mathbf 1
=\varsigma^{\tau\mathcal L}(1-\varsigma^{\mathcal L})/(1-\varsigma)$, so
$\mathbf a_\tau\to\mathbf 0$ for every $\tau\ge1$ as $\mathcal L\to\infty$, while
$\mathbf a_0^\top=\boldsymbol{\gamma}^\top\mathbf S_{\mathcal L}\to\boldsymbol{\gamma}^\top(\mathbf I-\mathbf W)^{-1}$
by the Neumann expansion \eqref{eq:neumann_series}.  The tail of the moving-average representation is controlled uniformly in $\mathcal L$, since by nonnegativity and $\|\cdot\|_2\le\|\cdot\|_1$,
\[
\sum_{\tau=2}^{\infty}\|\mathbf a_\tau-\mathbf a_{\tau-1}\|_2^2
\;\le\;
2\sum_{\tau=1}^{\infty}\|\mathbf a_\tau\|_1^2
=
\frac{2(1-\varsigma^{\mathcal L})^2}{(1-\varsigma)^2}\cdot\frac{\varsigma^{2\mathcal L}}{1-\varsigma^{2\mathcal L}}
\;\longrightarrow\;0 .
\]
The representation therefore converges in $L^2$ to
$\boldsymbol{\gamma}^\top(\mathbf I-\mathbf W)^{-1}(\boldsymbol{\epsilon}_t-\boldsymbol{\epsilon}_{t-1})
=\Delta y_t^\ast$, and $\phi^{[\mathcal L]}\to\phi^\ast$.
\end{proof}

\subsubsection{\texorpdfstring{Proof of Lemma~\ref{lem:modal_decomposition}}{Proof of Lemma 3}}
\label{app:proof:lem:modal_decomposition}
\begin{proof}
Define $s_{r,t}:=\sum_{\tau=0}^{t-1}\lambda_r^{\,\tau}\xi_{r,t-\tau}$, the mode-$r$ inner sum of
\eqref{eq:y_modal_exact} with every innovation dated one period later.  Splitting off the
$\tau=0$ term gives $s_{r,t}=\lambda_r s_{r,t-1}+\xi_{r,t}$ with $s_{r,0}=0$, and
\eqref{eq:y_modal_exact} with the same relabeling of innovation dates reads
$y_t=\sum_r c_r s_{r,t}$.  Because the innovations are i.i.d.\ across dates, the relabeling changes
no variance, tail, or reversal object, and it lines up the dynamic series with the contemporaneous
static benchmark $y_t^\ast=\boldsymbol\gamma^\top(\mathbf I-\mathbf W)^{-1}\boldsymbol{\epsilon}_t$.
Since $\mathbf v_1=\mathbf 1$ by \eqref{eq:W_spectral} and
$\boldsymbol\gamma^\top\mathbf 1=1$, the granular loading is $c_1=1$.  For a complex pair
$(\lambda_r,\overline{\lambda_r})$ the eigenvectors can be chosen conjugate, so the two modal terms
are conjugates and their sum is real.
\end{proof}

\subsubsection{\texorpdfstring{Proof of Theorem~\ref{thm:attenuation}}{Proof of Theorem 1}}
\label{app:proof:thm:attenuation}
\begin{proof}
In the static benchmark, \eqref{eq:Dy_static_lambda2} gives
$\Delta y_t^{\ast,\mathrm{re}}=\frac{b}{1-\lambda_2}(\eta_t-\eta_{t-1})$, so independence implies
\[
\Var(\Delta y_t^{\ast,\mathrm{re}})
=
\frac{b^2}{(1-\lambda_2)^2}\Var(\eta_t-\eta_{t-1})
=
\frac{2b^2\sigma^2}{(1-\lambda_2)^2},
\]
which is \eqref{eq:var_static_increment}.

In the dynamic economy, we have
$y_t^{\mathrm{re}}=b\,s_t$ with $s_t=\lambda_2 s_{t-1}+\eta_t$, so
\begin{equation}
\Delta y_t^{\mathrm{re}}
=
b(s_t-s_{t-1})
=
b\bigl((\lambda_2-1)s_{t-1}+\eta_t\bigr).
\label{eq:Dy_dynamic_lambda2_forproof}
\end{equation}
Since the stationary state $s_{t-1}=\sum_{\tau\ge 0}\lambda_2^{\,\tau}\eta_{t-1-\tau}$ is measurable with respect to past
innovations, it is independent of $\eta_t$.  Moreover,
$\Var(s_{t-1})=\sigma^2\sum_{\tau\ge 0}\lambda_2^{2\tau}=\sigma^2/(1-\lambda_2^2)$, so
\begin{align*}
\Var(\Delta y_t^{\mathrm{re}})
&=
b^2\Bigl((1-\lambda_2)^2\Var(s_{t-1})+\Var(\eta_t)\Bigr)
\\
&=
\frac{2b^2\sigma^2}{1+\lambda_2},
\end{align*}
which is \eqref{eq:var_dynamic_increment}.  Taking the ratio of \eqref{eq:var_dynamic_increment} and \eqref{eq:var_static_increment} yields \eqref{eq:attenuation_ratio}.  Finally,
$\mathcal R'(\lambda)=-(1-\lambda)(3+\lambda)/(1+\lambda)^2<0$ for $\lambda\in(0,1)$, and
$\mathcal R(\lambda)\to 0$ as $\lambda\uparrow 1$.
\end{proof}

\subsubsection{\texorpdfstring{Proof of Corollary~\ref{coro:granular_perron_rate}}{Proof of Corollary 1}}
\label{app:proof:coro:granular_perron_rate}
\begin{proof}
The granular channel is the scalar system of Theorem~\ref{thm:attenuation} under the
substitution $(b,\lambda_2,\eta_t)\mapsto(1,\,\lambda_1,\,g_t)$ with $g_t=\mathbf m^{\ast\top}\boldsymbol{\epsilon}_t$.
Here $\sigma_g^2$ cancels in the ratio.
\end{proof}

\subsubsection{\texorpdfstring{Proof of Theorem~\ref{thm:aggregator_geometry}}{Proof of Theorem 2}}
\label{app:proof:thm:aggregator_geometry}
\begin{proof}
By Lemma~\ref{lem:modal_decomposition},
$\Delta y_t=\sum_r c_r\bigl[(\lambda_r-1)s_{r,t-1}+\xi_{r,t}\bigr]$ under the stationary law.
Substituting
$s_{r,t-1}=\sum_{\tau\ge0}\lambda_r^{\,\tau}\xi_{r,t-1-\tau}$ and collecting the coefficient of
each date's innovation gives
\[
\Delta y_t
=
\Big(\sum_r\mathbf M_r\Big)^{\!\top}\boldsymbol{\epsilon}_t
-\sum_{\tau\ge0}\Big(\sum_r(1-\lambda_r)\lambda_r^{\,\tau}\mathbf M_r\Big)^{\!\top}\boldsymbol{\epsilon}_{t-1-\tau}.
\]
Independence across dates and $\Cov(\boldsymbol{\epsilon}_t)=\sigma^2\mathbf I$ give the first display, and $y_t^\ast=\boldsymbol\gamma^\top(\mathbf I-\mathbf W)^{-1}\boldsymbol{\epsilon}_t
=\bigl(\sum_r\mathbf M_r/(1-\lambda_r)\bigr)^\top\boldsymbol{\epsilon}_t$ with $\Var(\Delta y_t^\ast)=2\Var(y_t^\ast)$ gives the second.
\end{proof}

\subsubsection{\texorpdfstring{Proof of Proposition~\ref{prop:singular_aggregator}}{Proof of Proposition 1}}
\label{app:proof:prop:singular_aggregator}
\begin{proof}
From $\mathbf A\mathbf m^\ast=\mathbf m^\ast$ and $\mathbf W=\varsigma\mathbf A^\top$,
$\mathbf m^{\ast\top}\mathbf W=\varsigma(\mathbf A\mathbf m^\ast)^\top=\varsigma\,\mathbf m^{\ast\top}$,
and by induction $\mathbf m^{\ast\top}\mathbf W^{\tau}=\varsigma^{\tau}\mathbf m^{\ast\top}$ for every
$\tau\ge0$.  Substituting $\boldsymbol\gamma=\mathbf m^\ast$ into
Definition~\ref{def:aggregate_output} gives the autoregression in the statement, and
$\mathbf m^{\ast\top}(\mathbf I-\mathbf W)^{-1}=\sum_{\tau\ge0}\varsigma^{\tau}\mathbf m^{\ast\top}
=\mathbf m^{\ast\top}/(1-\varsigma)$ gives $y_t^\ast=g_t/(1-\varsigma)$.  The stream $\{g_t\}$ is
i.i.d.\ with finite variance $\sigma^2\|\mathbf m^\ast\|_2^2$, so the variance computation of
Theorem~\ref{thm:attenuation} applied to the scalar system
$(b,\lambda_2,\eta_t)\mapsto(1,\varsigma,g_t)$, which uses only i.i.d.\ innovations with finite variance, gives the ratio.
\end{proof}

\subsubsection{\texorpdfstring{Proof of Proposition~\ref{prop:tail_amp}}{Proof of Proposition 2}}
\label{app:proof:prop:tail_amp}
\begin{proof}
Both increments are centered Gaussian scalars, so $\omega_c=\Phi(-c/\sqrt{\phi})=\Phi(-x)$ and,
since $c/\sqrt{\phi^\ast}=x\sqrt{\phi/\phi^\ast}=x\sqrt{\kappa}$,
$\omega_c^\ast=\Phi(-x\sqrt{\kappa})$.  Taking the ratio gives \eqref{eq:tail_ratio_exact}.  For
\eqref{eq:tail_ratio_asymp}, apply the Mills-ratio asymptotic
$\Phi(-u)=(1+o(1))\,e^{-u^2/2}/(u\sqrt{2\pi})$ as $u\to\infty$ at $u=x\sqrt{\kappa}$ and at $u=x$:
\[
\frac{\omega_c^\ast}{\omega_c}
=
\frac{e^{-\kappa x^2/2}\,/\,(x\sqrt{\kappa}\,\sqrt{2\pi})}
     {e^{-x^2/2}\,/\,(x\sqrt{2\pi})}\,(1+o(1))
=
\frac{1}{\sqrt{\kappa}}\,
\exp\!\left\{\frac{1}{2}x^2(1-\kappa)\right\}(1+o(1)).
\]
\end{proof}

\subsection{\texorpdfstring{Proofs for Section~\ref{sec:eigenmodes}}{Proofs for Section 4}}

\subsubsection{\texorpdfstring{Lemma~\ref{lem:phi_quadratic_forms} and its proof}{Lemma 4 and its proof}}
\label{app:proof:lem:phi_quadratic_forms}

The weighted chi-square representations \eqref{eq:phi_stat_chisq} and \eqref{eq:phi_dyn_chisq} of the text are the content of the following lemma.

\begin{lemma}[Weighted chi-square forms of finite-$T$ realized volatility]
\label{lem:phi_quadratic_forms}
Fix $T\ge2$ and maintain Assumption~\ref{assu:innovations}. Then
\begin{align*}
\hat\phi^\ast
&\ \overset{d}{=}\
\sigma^2 \frac{b^2}{(1-\lambda_2)^2}\cdot\frac{1}{T-1}\sum_{j=1}^{T-1}\nu^*_j\,Z_j^2,
\\
\hat\phi
&\ \overset{d}{=}\
\sigma^2 b^2\cdot\frac{1}{T-1}\sum_{j=1}^{T-1}\nu_j\,Z_j^2,
\end{align*}
where $\{Z_j\}_{j=1}^{T-1}$ are i.i.d.\ standard normals and $\{\nu^*_j\}_{j=1}^{T-1}$ and $\{\nu_j\}_{j=1}^{T-1}$ are the eigenvalues of $\mathbf M^\ast$ and $\mathbf M$.
\end{lemma}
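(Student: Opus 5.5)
The plan is to write each statistic as a quadratic form in a standard Gaussian vector and then diagonalize it. Lemma~\ref{lem:phi_quadratic_forms} is stated for the settled start $s_0=0$, so the output paths are \eqref{eq:y_K_eta} and \eqref{eq:y_star_vector}. The realized volatilities are then \eqref{eq:phi_dyn_quad_main} and \eqref{eq:phi_stat_quad_main}.

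First I identify the law of the projected innovations. Under Assumption~\ref{assu:innovations} the innovations $\boldsymbol\epsilon_t$ are i.i.d.\ $\mathcal N(\mathbf 0,\sigma^2\mathbf I)$. Under the normalization \eqref{eq:u2_normalization} we have $\|\tilde{\mathbf u}_2\|=1$, so each $\eta_t=\tilde{\mathbf u}_2^\top\boldsymbol\epsilon_t$ is $\mathcal N(0,\sigma^2)$. These are independent across dates, hence $\boldsymbol\eta\overset{d}{=}\sigma\mathbf Z$ with $\mathbf Z\sim\mathcal N(0,\mathbf I_T)$. Here $\tilde{\mathbf u}_2$ is real by Assumption~\ref{assu:dominant_transient}.

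Next I rewrite each statistic as a squared norm. With $\mathbf B:=\mathbf D_T\mathbf K$ for the dynamic case and $\mathbf B:=\mathbf D_T$ for the static case, the statistic is a constant times $\|\mathbf B\mathbf Z\|_2^2$. The constant is $\sigma^2b^2/(T-1)$ in the dynamic case and $\sigma^2b^2/((1-\lambda_2)^2(T-1))$ in the static case. The vector $\mathbf B\mathbf Z\in\mathbb R^{T-1}$ is Gaussian with mean zero and covariance $\mathbf B\mathbf B^\top$. This covariance is $\mathbf M$ in the dynamic case and $\mathbf M^\ast$ in the static case.

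The key step is to take a spectral decomposition $\mathbf B\mathbf B^\top=\mathbf Q\,\mathrm{diag}(\nu_j)\,\mathbf Q^\top$ with $\mathbf Q$ orthogonal and $\nu_j\ge0$. Then $\mathbf B\mathbf Z\overset{d}{=}\mathbf Q\,\mathrm{diag}(\sqrt{\nu_j})\,\mathbf Z'$ with $\mathbf Z'\sim\mathcal N(0,\mathbf I_{T-1})$, because both sides are centered Gaussian vectors with the same covariance. Orthogonal invariance of the Euclidean norm then gives $\|\mathbf B\mathbf Z\|_2^2\overset{d}{=}\sum_{j=1}^{T-1}\nu_jZ_j'^2$, and substituting the constants yields both displays. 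Working with the $(T-1)\times(T-1)$ Gram matrix $\mathbf B\mathbf B^\top$ rather than the $T\times T$ matrix $\mathbf B^\top\mathbf B$ avoids carrying the extra zero eigenvalue of $\mathbf B^\top\mathbf B$. That extra eigenvalue exists because $\mathbf B$ has $T-1$ rows, and the nonzero spectra of the two matrices coincide anyway.

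I do not expect a real obstacle here. The only point needing care is that the identity is one of distributions rather than a pathwise equality. It relies on the Gaussianity and isotropy of $\boldsymbol\eta$, which is exactly where the unit-norm normalization of $\tilde{\mathbf u}_2$ enters.
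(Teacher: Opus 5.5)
Your proposal is correct and follows the paper's proof essentially step for step. It obtains $\boldsymbol\eta\overset{d}{=}\sigma\mathbf Z$ from the unit-norm normalization of $\tilde{\mathbf u}_2$, writes each statistic as a constant times the squared norm of a centered Gaussian vector in $\mathbb R^{T-1}$ with covariance $\mathbf M^\ast$ or $\mathbf M$, and then diagonalizes that covariance by an orthogonal matrix. Your remark that you work with the $(T-1)\times(T-1)$ Gram matrix rather than $\mathbf B^\top\mathbf B$ is a harmless clarification that the paper leaves implicit.
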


\begin{proof}
By \eqref{eq:phi_stat_quad_main}, \eqref{eq:phi_dyn_quad_main}, and the normalization \eqref{eq:u2_normalization}, under which $\boldsymbol{\eta}\overset{d}{=}\sigma\mathbf Z$ with $\mathbf Z\sim\mathcal N(0,I_T)$,
\[
(T-1)\,\hat\phi^\ast
\ \overset{d}{=}\
\frac{\sigma^2b^2}{(1-\lambda_2)^2}\,\|\mathbf D_T\mathbf Z\|_2^2,
\qquad
(T-1)\,\hat\phi
\ \overset{d}{=}\
\sigma^2b^2\,\|\mathbf D_T\mathbf K\mathbf Z\|_2^2 .
\]
The vectors $\mathbf D_T\mathbf Z$ and $\mathbf D_T\mathbf K\mathbf Z$ are centered Gaussian on $\mathbb R^{T-1}$ with covariance matrices $\mathbf M^\ast=\mathbf D_T\mathbf D_T^\top$ and $\mathbf M=\mathbf D_T\mathbf K\mathbf K^\top\mathbf D_T^\top$, both symmetric and positive semidefinite. Diagonalizing each covariance matrix by an orthogonal matrix turns the squared norm into a sum of the eigenvalues times the squares of independent standard normals,
\[
\|\mathbf D_T\mathbf Z\|_2^2 \ \overset{d}{=} \ \sum_{j=1}^{T-1}\nu^*_j Z_j^2,
\qquad
\|\mathbf D_T\mathbf K\mathbf Z\|_2^2 \ \overset{d}{=} \ \sum_{j=1}^{T-1}\nu_j Z_j^2,
\]
which gives the two representations.
\end{proof}

\subsubsection{\texorpdfstring{Proof of Theorem~\ref{thm:phi_finite_ordering}}{Proof of Theorem 3}}
\label{app:proof:thm:phi_finite_ordering}
\begin{proof}
The proof rests on a norm bound on the overlap operator, which is a truncated Neumann series in the nilpotent shift and therefore has operator norm below the static resolvent gain $(1-\lambda_2)^{-1}$. We show that (i) $\|\mathbf K\|_2\le(1-\lambda_2^{\,T})/(1-\lambda_2)$, (ii) $\mathbf M\preceq\|\mathbf K\|_2^2\,\mathbf M^\ast$ in the Loewner order, (iii) the ordered eigenvalues inherit the domination, which is \eqref{eq:finiteT_fosd_condition}, and (iv) a comonotone coupling of the two weighted chi-square laws gives the stochastic dominance.

We first bound the overlap operator.  Let $\mathbf L$ denote the lower shift on $\mathbb R^{T}$,
$[\mathbf L]_{tq}=\mathbf 1\{t=q+1\}$, which satisfies $\mathbf L^{T}=\mathbf 0$.  Since
$[\mathbf K]_{tq}=\lambda_2^{\,t-q}\mathbf 1_{\{q\le t\}}$, the overlap operator is the truncated Neumann
series in the shift,
\[
\mathbf K=\sum_{m=0}^{T-1}\lambda_2^{\,m}\,\mathbf L^{m} .
\]
The shift is a partial isometry, so $\|\mathbf L\|_2=1$ and hence $\|\mathbf L^{m}\|_2\le 1$ for every
$m$.  The triangle inequality then gives
\[
\|\mathbf K\|_2
\ \le\
\sum_{m=0}^{T-1}\lambda_2^{\,m}
\ =\
\frac{1-\lambda_2^{\,T}}{1-\lambda_2}
\ <\
\frac{1}{1-\lambda_2} .
\]
This proves (i).

We next pass to the Loewner order.  For any $\mathbf x\in\mathbb R^{T-1}$,
\[
\mathbf x^\top\mathbf M\mathbf x
=
\bigl\|\mathbf K^\top\mathbf D_T^\top\mathbf x\bigr\|_2^2
\ \le\
\|\mathbf K\|_2^2\,\bigl\|\mathbf D_T^\top\mathbf x\bigr\|_2^2
=
\|\mathbf K\|_2^2\;\mathbf x^\top\mathbf M^\ast\mathbf x,
\]
so that $\mathbf M\preceq\|\mathbf K\|_2^2\,\mathbf M^\ast$.  This proves (ii).

We now compare the two weight profiles.  Weyl's monotonicity theorem states that if
$\mathbf B-\mathbf A$ is positive semidefinite then $\nu_{(j)}(\mathbf A)\le\nu_{(j)}(\mathbf B)$ for
every $j$.  Applying it to (ii), and substituting the bound on $\|\mathbf K\|_2$ from (i), gives
\eqref{eq:finiteT_fosd_condition}.  This proves (iii).

We finally couple the two statistics.  By Lemma~\ref{lem:phi_quadratic_forms} and the
normalization \eqref{eq:u2_normalization}, each statistic is distributed as a weighted sum
of $T-1$ independent $\chi^2(1)$ variables,
\begin{align*}
\hat\phi^\ast
&\ \overset{d}{=}\
\sigma^2 \frac{b^2}{(1-\lambda_2)^2}\cdot\frac{1}{T-1}\sum_{j=1}^{T-1}\nu^\ast_{(j)}\,Z_j^2
\\
\hat\phi
&\ \overset{d}{=}\
\sigma^2 b^2\cdot\frac{1}{T-1}\sum_{j=1}^{T-1}\nu_{(j)}\,Z_j^2.
\end{align*}
Since $\mathbf M^\ast$ and $\mathbf M$ do not commute in general, the two statistics are not driven by a common set of $\chi^2(1)$ variables in one orthonormal basis, and (ii) does not upgrade to pathwise domination. The Loewner comparison in (ii) is between the two Gram matrices on $\mathbb R^{T-1}$, whereas pathwise domination would require $\mathbf K^\top\mathbf D_T^\top\mathbf D_T\mathbf K\preceq(1-\lambda_2)^{-2}\,\mathbf D_T^\top\mathbf D_T$ on $\mathbb R^{T}$, which fails because $\mathbf D_T\mathbf 1=\mathbf 0$ gives $\hat\phi^\ast=0$ while $\hat\phi>0$ whenever $\boldsymbol\eta\propto\mathbf 1$. The distribution of a weighted sum of independent $\chi^2(1)$ variables depends only on the multiset of weights, however, and not on the basis that produced them. We therefore construct two random variables on a single standard-normal vector
$\mathbf Z=(Z_1,\ldots,Z_{T-1})^\top\sim\mathcal N(0,I_{T-1})$,
\[
Q^\ast:=\frac{\sigma^2 b^2}{(1-\lambda_2)^2(T-1)}\sum_{j=1}^{T-1}\nu^\ast_{(j)}Z_j^2,
\qquad
Q:=\frac{\sigma^2 b^2}{T-1}\sum_{j=1}^{T-1}\nu_{(j)}Z_j^2,
\]
by assigning the $j$-th largest weight of each statistic to the same coordinate $Z_j^2$.  This
common-coordinate coupling has $Q^\ast\overset{d}{=}\hat\phi^\ast$ and $Q\overset{d}{=}\hat\phi$, and
\[
Q^\ast-Q
=
\frac{\sigma^2 b^2}{T-1}\sum_{j=1}^{T-1}
\left(\frac{\nu^\ast_{(j)}}{(1-\lambda_2)^2}-\nu_{(j)}\right)Z_j^2 .
\]
By \eqref{eq:finiteT_fosd_condition} each coefficient is bounded below by
$\lambda_2^{\,T}(2-\lambda_2^{\,T})\,\nu^\ast_{(j)}/(1-\lambda_2)^2$, which is strictly positive because
$\mathbf D_T$ has full row rank and hence $\nu^\ast_{(j)}>0$ for every $j$.  Therefore $Q^\ast>Q$
almost surely, and $\Pr(\hat\phi^\ast>x)=\Pr(Q^\ast>x)\ge\Pr(Q>x)=\Pr(\hat\phi>x)$ for every $x$,
which is $\hat\phi^\ast\succeq_{\mathrm{FOSD}}\hat\phi$.

The coupling compares the two marginal laws and says nothing about the original pair $(\hat\phi^\ast,\hat\phi)$ on its common innovation history. The difference of that pair is negative on a set of positive probability, since for $\boldsymbol\eta$ close to a nonzero constant vector the static differences are close to zero while the dynamic differences from the settled start are not. This proves (iv).

It remains to treat the stationary initialization. Under the stationary initialization the state vector $\mathbf s=(s_1,\ldots,s_T)^\top$ is centered
Gaussian with covariance $\sigma^2\mathbf C$, where $[\mathbf C]_{tq}=\lambda_2^{|t-q|}/(1-\lambda_2^2)$.
Every increment $\Delta y_t=b(s_t-s_{t-1})$ therefore has the stationary variance $\phi$, and
$\E[\hat\phi]=\phi$.  For the dominance, write $\hat\phi=(b^2/(T-1))\,\mathbf s^\top\mathbf D_T^\top\mathbf D_T\mathbf s$
and $\mathbf s\overset{d}{=}\sigma\mathbf C^{1/2}\mathbf Z$ with $\mathbf Z\sim\mathcal N(0,I_T)$.  The
argument of Lemma~\ref{lem:phi_quadratic_forms}, with $\mathbf C^{1/2}\mathbf Z$ in place of
$\mathbf K\mathbf Z$, represents $\hat\phi$ as $\sigma^2b^2/(T-1)$ times a weighted sum of independent
$\chi^2(1)$ variables whose weights are the eigenvalues $\nu^{\mathrm{st}}_{(1)}\ge\cdots\ge\nu^{\mathrm{st}}_{(T-1)}$
of $\mathbf M^{\mathrm{st}}:=\mathbf D_T\mathbf C\mathbf D_T^\top$.  By steps (iii) and (iv) above, it suffices to show that
$\mathbf M^{\mathrm{st}}\preceq c_{\mathrm{st}}\,\mathbf M^\ast$ for some constant
$c_{\mathrm{st}}<(1-\lambda_2)^{-2}$.  For $\mathbf x\in\R^{T-1}$,
\[
\mathbf x^\top\mathbf M^{\mathrm{st}}\mathbf x
=
(\mathbf D_T^\top\mathbf x)^\top\mathbf C\,(\mathbf D_T^\top\mathbf x)
\ \le\
\|\mathbf C\|_2\,\mathbf x^\top\mathbf M^\ast\mathbf x,
\]
so $c_{\mathrm{st}}=\|\mathbf C\|_2$ serves once we show $\|\mathbf C\|_2<(1-\lambda_2)^{-2}$.

The entries of $\mathbf C$ are the Fourier coefficients of the symbol
$g(\theta):=(1-2\lambda_2\cos\theta+\lambda_2^2)^{-1}$, that is,
$[\mathbf C]_{tq}=(2\pi)^{-1}\int_{-\pi}^{\pi}g(\theta)e^{i(t-q)\theta}\,d\theta$, because
$\sum_{k\in\mathbb Z}\lambda_2^{|k|}e^{ik\theta}=(1-\lambda_2^2)\,g(\theta)$.  Hence, for every
$\mathbf w\in\R^T$,
\[
\mathbf w^\top\mathbf C\,\mathbf w
=
\frac{1}{2\pi}\int_{-\pi}^{\pi}g(\theta)\,\Big|\sum_{t=1}^{T}w_te^{it\theta}\Big|^2 d\theta .
\]
The symbol attains its maximum $(1-\lambda_2)^{-2}$ at $\theta=0$ only, since $\lambda_2\in(0,1)$,
and for $\mathbf w\neq\mathbf 0$ the trigonometric polynomial $\sum_t w_te^{it\theta}$ vanishes at
finitely many $\theta$.  The integrand is therefore strictly below
$(1-\lambda_2)^{-2}|\sum_t w_te^{it\theta}|^2$ on a set of positive measure, and Parseval's identity
gives
\[
\mathbf w^\top\mathbf C\,\mathbf w
\ <\
\frac{1}{(1-\lambda_2)^{2}}\cdot\frac{1}{2\pi}\int_{-\pi}^{\pi}\Big|\sum_{t=1}^{T}w_te^{it\theta}\Big|^2 d\theta
=
\frac{\|\mathbf w\|_2^2}{(1-\lambda_2)^{2}} .
\]
This proves $\|\mathbf C\|_2<(1-\lambda_2)^{-2}$.  Weyl's monotonicity theorem then gives
$\nu^{\mathrm{st}}_{(j)}\le\|\mathbf C\|_2\,\nu^\ast_{(j)}<\nu^\ast_{(j)}/(1-\lambda_2)^2$ for every
$j$, because $\nu^\ast_{(j)}>0$, and the common-coordinate coupling of step (iv) gives the dominance.
\end{proof}

\subsubsection{\texorpdfstring{Proof of Theorem~\ref{thm:full_network_fosd}}{Proof of Theorem 4}}
\label{app:proof:thm:full_network_fosd}
\begin{proof}
The nonnegative exposures make the covariance matrix of the dynamic output path smaller, in the Loewner order, than the multiple of the identity that is the covariance matrix of the static path. We write both paths as linear maps of the innovation history, bound the dynamic covariance under each initialization, pass the bound through the differencing operator, and couple the two weighted chi-square laws.

Write $a_{k,j}:=\boldsymbol\gamma^\top\mathbf W^{k}\mathbf e_j\ge0$ for the exposure of the aggregate to
firm $j$'s innovation $k$ dates after it arrives, and
$B_j:=\sum_{k\ge0}a_{k,j}=\bigl(\boldsymbol\gamma^\top(\mathbf I-\mathbf W)^{-1}\bigr)_j$, so that
$\|\mathbf B\|_2^2=\|\boldsymbol\gamma^\top(\mathbf I-\mathbf W)^{-1}\|_2^2$.  Let
$\boldsymbol{\epsilon}^{(j)}:=(\epsilon_{j,1},\ldots,\epsilon_{j,T})^\top$ collect firm $j$'s innovations
over the window.  Under Assumption~\ref{assu:innovations} the vectors
$\boldsymbol{\epsilon}^{(1)},\ldots,\boldsymbol{\epsilon}^{(n)}$ are independent
$\mathcal N(\mathbf 0,\sigma^2\mathbf I_T)$, so the static path
$\mathbf y^\ast:=(y_1^\ast,\ldots,y_T^\ast)^\top=\sum_jB_j\boldsymbol{\epsilon}^{(j)}$ has covariance
matrix $\sigma^2\|\mathbf B\|_2^2\mathbf I_T$.  Primitivity gives a date $k_0$ with $\mathbf W^{k}>0$
entrywise for all $k\ge k_0$, hence $a_{k,j}>0$ for every $j$ and every $k\ge k_0$.

We first bound the dynamic covariance from a settled configuration.  With
$\boldsymbol{\epsilon}_t=\mathbf 0$ for $t\le0$ the dynamic path is
$\mathbf y:=(y_1,\ldots,y_T)^\top=\sum_j\mathbf K_j\boldsymbol{\epsilon}^{(j)}$ with
$\mathbf K_j:=\sum_{k=0}^{T-1}a_{k,j}\mathbf L^{k}$ and $\mathbf L$ the lower shift on $\mathbb R^{T}$,
which satisfies $\|\mathbf L^{k}\|_2\le1$.  Independence across firms gives
$\Cov(\mathbf y)=\sigma^2\sum_j\mathbf K_j\mathbf K_j^\top$, and the triangle inequality gives
$\|\mathbf K_j\|_2\le\sum_{k<T}a_{k,j}$, so
\[
\Cov(\mathbf y)\;\preceq\;\sigma^2 S_T\,\mathbf I_T,
\qquad
S_T:=\sum_j\Big(\sum_{k<T}a_{k,j}\Big)^2<\|\mathbf B\|_2^2,
\]
the strict inequality because $a_{k,j}>0$ for $k\ge\max\{k_0,T\}$.

We next bound the dynamic covariance from the stationary law.  The stationary path
$y_t=\sum_{k\ge0}\mathbf a_k^\top\boldsymbol{\epsilon}_{t-k}$, with
$\mathbf a_k:=(a_{k,1},\ldots,a_{k,n})^\top$, is the sum over $j$ of independent moving averages with
coefficients $\{a_{k,j}\}_{k\ge0}$, so for $\mathbf z\in\mathbb R^{T}$ its spectral representation
gives
\[
\mathbf z^\top\Cov(\mathbf y)\mathbf z
=
\frac{\sigma^2}{2\pi}\int_{-\pi}^{\pi}\Big|\sum_{t=1}^{T}z_te^{it\theta}\Big|^2
\sum_j\Big|\sum_{k\ge0}a_{k,j}e^{-ik\theta}\Big|^2d\theta .
\]
For $\theta\notin2\pi\mathbb Z$ the inner modulus is strictly below $B_j$, because the terms with
$k\ge k_0$ are all positive and the unit complex numbers $e^{-ik\theta}$, $k\ge k_0$, are not all
equal.  For $\mathbf z\neq\mathbf 0$ the trigonometric polynomial $\sum_tz_te^{it\theta}$ vanishes at
finitely many $\theta$, so the integrand is strictly below
$\|\mathbf B\|_2^2\,|\sum_tz_te^{it\theta}|^2$ on a set of positive measure, and Parseval's identity
gives $\mathbf z^\top\Cov(\mathbf y)\mathbf z<\sigma^2\|\mathbf B\|_2^2\|\mathbf z\|_2^2$.  Since the
unit sphere of $\mathbb R^{T}$ is compact, $\Cov(\mathbf y)\preceq\sigma^2 S^{\mathrm{st}}\mathbf I_T$
for a constant $S^{\mathrm{st}}<\|\mathbf B\|_2^2$.

We now pass the two bounds through the differencing operator.  Under either initialization,
$\Cov(\mathbf y)\preceq\sigma^2 S\,\mathbf I_T$ with $S<\|\mathbf B\|_2^2$, so
\[
\Cov(\mathbf D_T\mathbf y)
=
\mathbf D_T\Cov(\mathbf y)\mathbf D_T^\top
\;\preceq\;
\sigma^2 S\,\mathbf D_T\mathbf D_T^\top
\;=\;
\frac{S}{\|\mathbf B\|_2^2}\,\Cov(\mathbf D_T\mathbf y^\ast).
\]
Weyl's monotonicity theorem orders the eigenvalues,
\[
\nu_{(j)}\bigl(\Cov(\mathbf D_T\mathbf y)\bigr)
\ \le\ \frac{S}{\|\mathbf B\|_2^2}\,\nu_{(j)}\bigl(\Cov(\mathbf D_T\mathbf y^\ast)\bigr)
\ <\ \nu_{(j)}\bigl(\Cov(\mathbf D_T\mathbf y^\ast)\bigr),
\qquad j=1,\ldots,T-1 .
\]
The last inequality holds because $\mathbf D_T$ has full row rank, so every static eigenvalue
$\sigma^2\|\mathbf B\|_2^2\nu^\ast_{(j)}$ is positive.

We finally couple the two laws.  Both $(T-1)\hat\phi=\|\mathbf D_T\mathbf y\|_2^2$ and
$(T-1)\hat\phi^\ast=\|\mathbf D_T\mathbf y^\ast\|_2^2$ are squared norms of centered Gaussian vectors.
The argument of Lemma~\ref{lem:phi_quadratic_forms} therefore represents each as a weighted
sum of $T-1$ independent $\chi^2(1)$ variables with weights the eigenvalues of the corresponding
covariance matrix.  The common-coordinate coupling of step (iv) in the proof of
Theorem~\ref{thm:phi_finite_ordering}, applied to the two ordered weight profiles, gives two random
variables with the laws of $\hat\phi^\ast$ and $\hat\phi$ that are ordered almost surely, hence
$\hat\phi^\ast\succeq_{\mathrm{FOSD}}\hat\phi$.
\end{proof}

\subsection{\texorpdfstring{Proofs for Section~\ref{sec:spectral_gap_mediates_fat_tails}}{Proofs for Section 5}}

\subsubsection{\texorpdfstring{Proof of Proposition~\ref{prop:fat_tails_condition}}{Proof of Proposition 3}}
\label{app:proof:prop:fat_tails_condition}
\begin{proof}
Part (i) follows by differentiating the logarithms of $\phi=2\sigma^2b^2/(1+\lambda_2)$ and
$\phi^\ast=2\sigma^2b^2/(1-\lambda_2)^2$, and the sign of the static elasticity follows from $(b^2)'<0$ and $\lambda_2'<0$. For (ii),
$\phi_2/\phi_1=(b_2^2/b_1^2)(1+\lambda_{2,1})/(1+\lambda_{2,2})$ with $1+\lambda_{2,1}\ge1$ and
$1+\lambda_{2,2}<1+\varsigma$.
\end{proof}

\subsection{\texorpdfstring{Proofs for Appendix~\ref{app:reversal}}{Proofs for Appendix B}}

\subsubsection{\texorpdfstring{Proof of Proposition~\ref{prop:reversal_frequency}}{Proof of Proposition 4}}
\label{app:proof:prop:reversal_frequency}
\begin{proof}
Fix $\lambda_2\in(0,1)$.  By \eqref{eq:Dy_dynamic_lambda2} and \eqref{eq:Dy_static_lambda2} the pair
$(\Delta y_t,\Delta y_t^\ast)$ is a linear function of the jointly Gaussian innovations
$\{\eta_\tau\}_{\tau\le t}$, hence centered bivariate Gaussian.  In units of $b^2\sigma^2$, the
variances are, by \eqref{eq:var_dynamic_increment} and \eqref{eq:var_static_increment},
\[
V:=\Var(\Delta y_t)=\frac{2}{1+\lambda_2},
\qquad
V^\ast:=\Var(\Delta y_t^\ast)=\frac{2}{(1-\lambda_2)^2}.
\]
For the covariance, the stationary state satisfies $\E[s_{t-1}\eta_{t-1}]=\sigma^2$ and
$\E[s_{t-1}\eta_{t}]=0$, so
\[
C:=\Cov(\Delta y_t,\Delta y_t^\ast)
=\frac{1}{1-\lambda_2}\,
\E\Bigl[\bigl((\lambda_2-1)s_{t-1}+\eta_t\bigr)(\eta_t-\eta_{t-1})\Bigr]
=\frac{(1-\lambda_2)+1}{1-\lambda_2}
=\frac{2-\lambda_2}{1-\lambda_2},
\]
again in units of $b^2\sigma^2$.  Set $D:=\Delta y_t-\Delta y_t^\ast$ and
$S:=\Delta y_t+\Delta y_t^\ast$, so that the reversal event is
\[
\bigl\{|\Delta y_t|>|\Delta y_t^\ast|\bigr\}
=
\bigl\{(\Delta y_t-\Delta y_t^\ast)(\Delta y_t+\Delta y_t^\ast)>0\bigr\}
=
\{DS>0\}.
\]
The pair $(D,S)$ is centered bivariate Gaussian with
$\Var(D)=V+V^\ast-2C$, $\Var(S)=V+V^\ast+2C$, and $\Cov(D,S)=V-V^\ast$.  Direct computation gives
\[
V+V^\ast-2C=\frac{2\lambda_2^{2}(3-\lambda_2)}{(1+\lambda_2)(1-\lambda_2)^2},
\qquad
V+V^\ast+2C=\frac{2\bigl(\lambda_2^{3}-\lambda_2^{2}-2\lambda_2+4\bigr)}{(1+\lambda_2)(1-\lambda_2)^2},
\qquad
V-V^\ast=\frac{-2\lambda_2(3-\lambda_2)}{(1+\lambda_2)(1-\lambda_2)^2},
\]
so both variances are strictly positive on $(0,1)$ and the correlation of $D$ and $S$ is
\[
\rho(\lambda_2)
=
\frac{V-V^\ast}{\sqrt{(V+V^\ast)^2-4C^2}}
=
-\sqrt{\frac{3-\lambda_2}{\lambda_2^{3}-\lambda_2^{2}-2\lambda_2+4}}.
\]
Sheppard's formula for the quadrant probability of a centered bivariate Gaussian pair gives
$\Pr(DS>0)=\tfrac12+\tfrac1\pi\arcsin\rho$, and substituting $\rho(\lambda_2)$ with $\arcsin(-x)=-\arcsin x$ gives \eqref{eq:reversal_frequency}.

We now establish monotonicity and the limits.  Write
$g(\lambda):=(\lambda^{3}-\lambda^{2}-2\lambda+4)/(3-\lambda)$, so that $\rho^2=1/g$.
Differentiating,
\[
g'(\lambda)
=
\frac{-2(\lambda-1)(\lambda^{2}-4\lambda-1)}{(3-\lambda)^2},
\]
and on $(0,1)$ both $\lambda-1$ and $\lambda^{2}-4\lambda-1$ are negative, the latter because its
roots are $2\pm\sqrt5$.  Hence $g'<0$, so $g$ is strictly decreasing, $|\rho|$ is strictly
increasing, and $p$ is strictly decreasing on $(0,1)$.  At the endpoints, $g(0)=4/3$ gives
$|\rho|\to\sqrt3/2$ and $p\to\tfrac12-\tfrac1\pi\arcsin(\sqrt3/2)=\tfrac12-\tfrac13=\tfrac16$, while
$g(1)=1$ gives $|\rho|\to1$ and $p\to0$.

The factor $b^2\sigma^2$ cancels from $\rho$ and nothing depends on $n$, and the granular substitution reproduces the computation verbatim because $g_t=\mathbf m^{\ast\top}\boldsymbol{\epsilon}_t$ is Gaussian under Assumption~\ref{assu:innovations}.
\end{proof}

\subsubsection{\texorpdfstring{Proof of Corollary~\ref{coro:reversal_magnitude}}{Proof of Corollary 2}}
\label{app:proof:coro:reversal_magnitude}
\begin{proof}
The first inequality holds because $|\Delta y_t^\ast|\ge0$ makes
$\{|\Delta y_t|>|\Delta y_t^\ast|+\varepsilon\}$ a subset of $\{|\Delta y_t|>\varepsilon\}$.  The
second is Chebyshev's inequality with $\Var(\Delta y_t)=2b_n^2\sigma^2/(1+\lambda_2)$ from
\eqref{eq:var_dynamic_increment}, and it uses no Gaussianity.  The frequency statement is
Proposition~\ref{prop:reversal_frequency}, whose formula is free of $b_n$.
\end{proof}

\subsubsection{\texorpdfstring{Proof of Lemma~\ref{lem:mean_invariance}}{Proof of Lemma 5}}
\label{app:proof:lem:mean_invariance}
\begin{proof}
Fix $\mu\in\R$ and write $\boldsymbol{\epsilon}_t=\mu\mathbf 1+\tilde{\boldsymbol{\epsilon}}_t$ with
$\tilde{\boldsymbol{\epsilon}}_t$ centered.  By linearity of the stationary superposition,
\[
y_t(\mu)
=
\boldsymbol\gamma^\top\sum_{\tau\ge0}\mathbf W^{\tau}\bigl(\mu\mathbf 1+\tilde{\boldsymbol{\epsilon}}_{t-\tau}\bigr)
=
\mu\sum_{\tau\ge0}\varsigma^{\tau}
+y_t(0)
=
\frac{\mu}{1-\varsigma}+y_t(0).
\]
The middle equality uses the constant row sums $\mathbf W^{\tau}\mathbf 1=\varsigma^{\tau}\mathbf 1$
and $\boldsymbol\gamma^\top\mathbf 1=1$.  The static mapping shifts by the same constant,
$y_t^\ast(\mu)=\boldsymbol\gamma^\top(\mathbf I-\mathbf W)^{-1}(\mu\mathbf 1+\tilde{\boldsymbol{\epsilon}}_t)
=\mu/(1-\varsigma)+y_t^\ast(0)$.  This proves (i), and differencing removes the constants, proving
(ii).  For (iii), biorthogonality gives $\mathbf u_2^\top\mathbf v_1=0$ with $\mathbf v_1=\mathbf 1$,
so $\tilde{\mathbf u}_2^\top\mathbf 1=0$ and $\E[\eta_t]=\mu\,\tilde{\mathbf u}_2^\top\mathbf 1=0$.
Under the settled initialization of \eqref{eq:s_recursion} the
identity in (ii) holds up to the deterministic path $\mu\varsigma^{t-1}$ in the granular channel,
which vanishes geometrically and which the discussion after Definition~\ref{def:aggregate_output} treats.
\end{proof}

\subsection{\texorpdfstring{Proofs for Appendix~\ref{app:levels_vs_increments}}{Proofs for Appendix C}}

\subsubsection{\texorpdfstring{Proof of Lemma~\ref{lem:levels_increments_translation}}{Proof of Lemma 6}}
\label{app:proof:lem:levels_increments_translation}
\begin{proof}
Under \eqref{eq:app_levels_setup} and the stationary law, $s_t=\sum_{\tau\ge0}\lambda_2^{\,\tau}\eta_{t-\tau}$, so $\Var(s_t)=\sigma^2/(1-\lambda_2^2)$ and, since $\eta_t$ is uncorrelated with $s_{t-1}$, $\Cov(s_t,s_{t-1})=\lambda_2\Var(s_t)$. Hence $\Var(y_t)=b^2\sigma^2/(1-\lambda_2^2)$ and
\[
\Var(\Delta y_t)=2\Var(y_t)-2\Cov(y_t,y_{t-1})=2(1-\lambda_2)\Var(y_t)=\frac{2b^2\sigma^2}{1+\lambda_2}.
\]
In the static benchmark $\Var(y_t^\ast)=b^2\sigma^2/(1-\lambda_2)^2$ and, since $\{y_t^\ast\}$ is i.i.d.\ across $t$, $\Var(\Delta y_t^\ast)=2\Var(y_t^\ast)$. Taking ratios gives \eqref{eq:app_levels_vs_incr_ratio}.
\end{proof}

\subsection{\texorpdfstring{Proofs for Appendix~\ref{app:fat_tails_degree_approx}}{Proofs for Appendix D}}

\subsubsection{\texorpdfstring{Proof of Proposition~\ref{prop:fat_tail_moments_and_proxy}}{Proof of Proposition 5}}
\label{app:proof:prop:fat_tail_moments_and_proxy}
\begin{proof}
The moments are the integrals
\[
\E[d]=\int_1^{d_{\max}} x f(x)\,dx
=
\frac{\alpha}{\alpha-1}\,\frac{1-d_{\max}^{1-\alpha}}{1-d_{\max}^{-\alpha}},
\qquad
\E[d^2]=\int_1^{d_{\max}} x^2 f(x)\,dx
=
\frac{\alpha}{\alpha-2}\,\frac{1-d_{\max}^{2-\alpha}}{1-d_{\max}^{-\alpha}}
\]
for $\alpha\neq2$, and $\Var(d)=\E[d^2]-\E[d]^2$.  Imposing $d_{\max}=n^{1/\alpha}$ gives
$d_{\max}^{-\alpha}=n^{-1}$, $d_{\max}^{1-\alpha}=n^{(1-\alpha)/\alpha}$, and
$d_{\max}^{2-\alpha}=n^{(2-\alpha)/\alpha}$, which is \eqref{eq:degree_mean_var_alpha_n_compact_app}
with the factors \eqref{eq:psi_def_app}.  At $\alpha=2$ the second integral is
$C\int_1^{d_{\max}}x^{-1}\,dx=C\log d_{\max}=\log n/(1-n^{-1})$, and the first is the $\alpha=2$ case of
the display.  The loading follows from \eqref{eq:utilde_vtilde_alpha_app} with $\bar d_n$,
$\overline{d^2}_n$, and $V_n$ replaced by $\E[d]$, $\E[d^2]$, and $\Var(d)$, and
$\boldsymbol\gamma^\top\mathbf 1=1$.
\end{proof}

%==========================================================
\section{Finite-horizon reversals: exact frequency, vanishing magnitude, and one-sided innovations}
\label{app:reversal}

Theorem~\ref{thm:phi_finite_ordering} orders the finite-horizon distributions of static and dynamic aggregate volatility, and the ordering is distributional rather than pathwise. The dynamic economy can generate a larger swing than the fully processed benchmark built from the same innovation history, so that at some dates
\begin{equation}
|\Delta y_t|>|\Delta y_t^\ast|.
\label{eq:overshoot_event}
\end{equation}
We call \eqref{eq:overshoot_event} a finite-horizon reversal of the dynamic output increment. Throughout this appendix we work with the reallocation channel of Section~\ref{sec:ell_to_eigenmodes} under its stationary law and, as in Section~\ref{sec:eigenmodes}, drop the channel superscript.

A reversal reflects interference across vintages. In the dynamic economy new innovations arrive while past innovations are still being processed, so the inherited transient state carries unfinished imbalances forward, and the new shock can either offset or reinforce what is already in motion. The static benchmark evaluates the same innovations only after full processing through the resolvent, and so abstracts from the transient composition of intermediate flows at the date the shock arrives. Consider a date at which the propagation of an earlier shock places a large firm in the middle of an unusually high transient inflow of intermediates, and suppose that a new productivity shock raises that firm's productivity at the same date. Dynamic output can jump, because the firm amplifies the innovation while it operates with an elevated intermediate-input flow, whereas the static benchmark responds to the same innovation only through the firm's fully processed intermediate share.

By the increment representations \eqref{eq:Dy_dynamic_lambda2} and \eqref{eq:Dy_static_lambda2}, the reversal event \eqref{eq:overshoot_event} is
\begin{equation}
\bigl|(\lambda_2-1)s_{t-1}+\eta_t\bigr|
>
\frac{1}{1-\lambda_2}\,|\eta_t-\eta_{t-1}|.
\label{eq:overshoot_core_f}
\end{equation}
The inherited state enters the dynamic increment through the mean-reversion coefficient $\lambda_2-1$ and the new innovation at unit gain, while the difference of two adjacent innovations enters the static increment through the resolvent gain. A reversal occurs when the static difference $\eta_t-\eta_{t-1}$ is small relative to the dynamic combination, whether because the two adjacent innovations nearly coincide or because the new innovation reinforces the inherited state. How often this happens depends on the network only through $\lambda_2$, and we now compute the frequency exactly.

\begin{proposi}[The exact frequency of finite-horizon reversals]
\label{prop:reversal_frequency}
Maintain Assumption~\ref{assu:innovations} and the stationary law of the reallocation channel, and fix $\lambda_2\in(0,1)$. Then
\begin{equation}
\Pr\bigl(|\Delta y_t|>|\Delta y_t^\ast|\bigr)
=
p(\lambda_2)
:=
\frac{1}{2}-\frac{1}{\pi}
\arcsin\!\sqrt{\frac{3-\lambda_2}{\lambda_2^{3}-\lambda_2^{2}-2\lambda_2+4}},
\label{eq:reversal_frequency}
\end{equation}
and $p$ is strictly decreasing on $(0,1)$, with $p(\lambda_2)\to1/6$ as $\lambda_2\downarrow0$ and $p(\lambda_2)\to0$ as $\lambda_2\uparrow1$.
\end{proposi}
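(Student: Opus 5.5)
The plan is to reduce the reversal event to a sign event for a pair of jointly Gaussian variables and then use the orthant formula. By \eqref{eq:Dy_dynamic_lambda2} and \eqref{eq:Dy_static_lambda2}, both $\Delta y_t$ and $\Delta y_t^\ast$ are linear in the i.i.d.\ Gaussian projected innovations $\{\eta_\tau\}_{\tau\le t}$. Under the stationary law the pair $(\Delta y_t,\Delta y_t^\ast)$ is therefore centered bivariate normal. The factor $b^2\sigma^2$ scales every second moment and drops out of any correlation, so I will work in units of $b^2\sigma^2$ throughout.

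First I collect the three second moments. Theorem~\ref{thm:attenuation} gives $V=2/(1+\lambda_2)$ and $V^\ast=2/(1-\lambda_2)^2$. For the covariance I use that the stationary state $s_{t-1}$ is independent of $\eta_t$, while $\E[s_{t-1}\eta_{t-1}]=\sigma^2$. This gives
\[
C=\frac{1}{1-\lambda_2}\bigl((1-\lambda_2)+1\bigr)=\frac{2-\lambda_2}{1-\lambda_2}.
\]
Next I rewrite $\{|\Delta y_t|>|\Delta y_t^\ast|\}$ as $\{DS>0\}$, where $D=\Delta y_t-\Delta y_t^\ast$ and $S=\Delta y_t+\Delta y_t^\ast$. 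Here $(D,S)$ is centered Gaussian with
\[
\Var D=V+V^\ast-2C,\qquad \Var S=V+V^\ast+2C,\qquad \Cov(D,S)=V-V^\ast.
\]
Sheppard's formula then gives $\Pr(DS>0)=\tfrac12+\tfrac1\pi\arcsin\rho$, with $\rho=(V-V^\ast)/\sqrt{(V+V^\ast)^2-4C^2}$.

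The main obstacle is the algebra that brings $\rho$ to the stated closed form. My approach is to put $V$, $V^\ast$ and $C$ over the common denominator $(1+\lambda_2)(1-\lambda_2)^2$ and factor each numerator. I expect $V+V^\ast-2C\propto\lambda_2^2(3-\lambda_2)$, $V-V^\ast\propto-\lambda_2(3-\lambda_2)$, and $V+V^\ast+2C\propto\lambda_2^3-\lambda_2^2-2\lambda_2+4$. The common factors $\lambda_2(3-\lambda_2)$ should then cancel, leaving $\rho=-\sqrt{(3-\lambda_2)/(\lambda_2^3-\lambda_2^2-2\lambda_2+4)}$. I also need both variances to be positive on $(0,1)$, which I will check directly from these factored numerators. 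Using $\arcsin(-x)=-\arcsin x$ then yields \eqref{eq:reversal_frequency}.

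For the monotonicity claim, set $g(\lambda)=(\lambda^3-\lambda^2-2\lambda+4)/(3-\lambda)$, so that $\rho^2=1/g$. I will differentiate $g$ and factor the numerator of $g'$ as a multiple of $(\lambda-1)(\lambda^2-4\lambda-1)$. On $(0,1)$ the factor $\lambda^2-4\lambda-1$ is negative, since its roots are $2\pm\sqrt5$, and so is $\lambda-1$; with the right sign on the multiple this makes $g'<0$. Hence $|\rho|$ increases and $p$ strictly decreases. For the limits, $g(0)=4/3$ gives $|\rho|=\sqrt3/2$ and $p=\tfrac12-\tfrac13=\tfrac16$, while $g(1)=1$ gives $|\rho|=1$ and $p=0$.
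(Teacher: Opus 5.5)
Your proposal is correct and matches the paper's proof essentially step for step. Both use the same moments $V$, $V^\ast$ and $C=(2-\lambda_2)/(1-\lambda_2)$, the rewriting of the reversal event as $\{DS>0\}$ with Sheppard's formula, the same factored numerators over $(1+\lambda_2)(1-\lambda_2)^2$ (all three of your anticipated factorizations check out), and the monotonicity argument via $g'(\lambda)=-2(\lambda-1)(\lambda^2-4\lambda-1)/(3-\lambda)^2$ with the limits $g(0)=4/3$ and $g(1)=1$.
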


\noindent\emph{Proof in Appendix~\ref{app:proof:prop:reversal_frequency}.}

The frequency depends on $\lambda_2$ alone and is free of the loading $b$, the innovation scale $\sigma$, and the number of firms $n$. In the fast-mixing limit reversals mark one date in six. Over the calibrated range $\lambda_2\in[0.56,0.80]$ the frequency falls from about $0.108$ to $0.057$, between one date in nine and one in eighteen, and in the slow-mixing limit reversals disappear. Under the granular substitution $(b,\lambda_2,\eta_t)\mapsto(1,\lambda_1,g_t)$ the same formula gives the reversal frequency of the granular channel at $\lambda_1=\varsigma$. The frequency does not measure the size of a reversal, and the economy's scale enters through that size.

\begin{coro}[Vanishing reversal magnitudes in large economies]
\label{coro:reversal_magnitude}
Maintain the setting of Proposition~\ref{prop:reversal_frequency} and let $b=b_n\to0$ as $n\to\infty$. Then for every $\varepsilon>0$,
\[
\Pr\bigl(|\Delta y_t|>|\Delta y_t^\ast|+\varepsilon\bigr)
\ \le\
\Pr\bigl(|\Delta y_t|>\varepsilon\bigr)
\ \le\
\frac{2b_n^2\sigma^2}{(1+\lambda_2)\,\varepsilon^2}
\ \longrightarrow\ 0,
\]
while $\Pr(|\Delta y_t|>|\Delta y_t^\ast|)=p(\lambda_2)$ at every $n$.
\end{coro}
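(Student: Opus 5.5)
The corollary follows from two elementary inequalities together with Proposition~\ref{prop:reversal_frequency}, and no Gaussian computation beyond that proposition is needed. I will prove the chain from left to right.

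The first inequality is an event inclusion. Since $|\Delta y_t^\ast|\ge0$, the event $\{|\Delta y_t|>|\Delta y_t^\ast|+\varepsilon\}$ is contained in $\{|\Delta y_t|>\varepsilon\}$. Monotonicity of probability then gives the first bound, path by path, with no distributional input.

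The second inequality is Chebyshev's inequality applied to the centered increment $\Delta y_t$. Under the stationary law, Theorem~\ref{thm:attenuation} (equation \eqref{eq:var_dynamic_increment}) gives $\Var(\Delta y_t)=2b_n^2\sigma^2/(1+\lambda_2)$. That computation used only independence and the common variance, so this step needs no Gaussianity. I also need $\E[\Delta y_t]=0$, which holds because the $\eta_t$ are centered; under a one-sided mean, Lemma~\ref{lem:mean_invariance} still gives $\E[\eta_t]=0$. Hence
\[
\Pr(|\Delta y_t|>\varepsilon)\le\frac{2b_n^2\sigma^2}{(1+\lambda_2)\varepsilon^2}.
\]

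For the limit, hold $\lambda_2$, $\sigma$ and $\varepsilon$ fixed. Then the bound is a constant times $b_n^2$, and it tends to zero because $b_n\to0$. If $\lambda_2$ were allowed to vary with $n$, the factor $1/(1+\lambda_2)\le1$ would still make the bound uniform, so the limit survives.

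The final claim, $\Pr(|\Delta y_t|>|\Delta y_t^\ast|)=p(\lambda_2)$ for every $n$, is Proposition~\ref{prop:reversal_frequency}. Its formula \eqref{eq:reversal_frequency} does not involve $b$, because the factor $b^2\sigma^2$ cancels from the correlation $\rho$. It therefore holds at each $n$ with $b=b_n$, provided $b_n\neq0$, which Assumption~\ref{assu:misalignment} supplies.

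The only real point of care is this independence from $b$. When $b_n\neq0$, the pair $(\Delta y_t,\Delta y_t^\ast)$ scales jointly by $b_n$, so the reversal event, which is homogeneous of degree zero, is unchanged. When $b_n=0$, both increments vanish and the strict event is empty, so the claim has to be read under Assumption~\ref{assu:misalignment}. I expect no substantive obstacle; the proof is bookkeeping over the earlier results.
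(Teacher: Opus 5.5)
Your proof is correct and follows the paper's own argument. You get the first bound from the event inclusion implied by $|\Delta y_t^\ast|\ge0$, the second from Chebyshev with the stationary variance $2b_n^2\sigma^2/(1+\lambda_2)$ of Theorem~\ref{thm:attenuation} (no Gaussianity needed), and the frequency claim from Proposition~\ref{prop:reversal_frequency}, whose formula is free of $b_n$; your remark that the frequency statement requires $b_n\neq0$ (the event is empty at $b_n=0$) is a legitimate refinement that the paper leaves implicit under Assumption~\ref{assu:misalignment}.
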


\noindent\emph{Proof in Appendix~\ref{app:proof:coro:reversal_magnitude}.}

The loading vanishes along a sequence of economies whenever the aggregation weights are delocalized, as under the degree proxy \eqref{eq:b_alpha_explicit_fattails}. So the size of the economy leaves the frequency of reversals untouched and shrinks their magnitude. Reversals occur at the same rate in an economy of ten firms as in one of ten million firms, but in the large economy they are too small to see.

\smallskip
We close the appendix with the role of the innovation mean. The paper's benchmark centers the innovations at zero, and we now show that nothing built from increments depends on that centering.

\begin{lemma}[Invariance of increments to innovation means]
\label{lem:mean_invariance}
Replace the zero mean of Assumption~\ref{assu:innovations} by $\E[\epsilon_{i,t}]=\mu$ for an arbitrary common mean $\mu\in\R$, keeping independence across dates and firms and the variance $\sigma^2$, with no distributional form assumed. Write $y_t(\mu)$ and $y_t^\ast(\mu)$ for the dynamic and static aggregates so generated, and $y_t(0)$, $y_t^\ast(0)$ for the processes driven by the centered innovations $\boldsymbol{\epsilon}_t-\mu\mathbf 1$. Then, in the stationary regime,
\begin{itemize}
\item[(i)] $y_t(\mu)=y_t(0)+\mu/(1-\varsigma)$ and $y_t^\ast(\mu)=y_t^\ast(0)+\mu/(1-\varsigma)$ for every $t$.
\item[(ii)] $\Delta y_t(\mu)=\Delta y_t(0)$ and $\Delta y_t^\ast(\mu)=\Delta y_t^\ast(0)$ pathwise.
\item[(iii)] $\E[\eta_t]=\mu\,\tilde{\mathbf u}_2^\top\mathbf 1=0$.
\end{itemize}
\end{lemma}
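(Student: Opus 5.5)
The plan is to exploit linearity of both aggregates in the innovation stream, together with the constant row sums of $\mathbf W$. First decompose each innovation vector as $\boldsymbol{\epsilon}_t=\mu\mathbf 1+\tilde{\boldsymbol{\epsilon}}_t$, where $\tilde{\boldsymbol{\epsilon}}_t:=\boldsymbol{\epsilon}_t-\mu\mathbf 1$ is centered, independent across dates and firms, and has variance $\sigma^2$. No distributional form is needed, because every step below is an algebraic identity holding pathwise.

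For (i), substitute the decomposition into the stationary superposition $y_t=\boldsymbol\gamma^\top\sum_{\tau\ge0}\mathbf W^\tau\boldsymbol{\epsilon}_{t-\tau}$ and split it into two sums.
\begin{itemize}
\item The centered part is $y_t(0)$ by definition.
\item The constant part is $\mu\sum_{\tau\ge0}\boldsymbol\gamma^\top\mathbf W^\tau\mathbf 1$. Since $\mathbf W^\tau\mathbf 1=\varsigma^\tau\mathbf 1$ by \eqref{eq:W_spectral} and $\boldsymbol\gamma^\top\mathbf 1=1$ by admissibility \eqref{eq:admissible_weights}, this equals $\mu\sum_\tau\varsigma^\tau=\mu/(1-\varsigma)$.
\end{itemize}
Splitting the series is legitimate because both sums converge absolutely: $\rho(\mathbf W)=\varsigma<1$, and the layer weights are exact by Lemma~\ref{lem:neumann_convergence_L}. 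For the static aggregate, apply the Neumann series \eqref{eq:neumann_series} to $\mathbf 1$. This gives $(\mathbf I-\mathbf W)^{-1}\mathbf 1=\mathbf 1/(1-\varsigma)$, so $y_t^\ast(\mu)=y_t^\ast(0)+\mu/(1-\varsigma)$ as well.

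Part (ii) then follows immediately. The shift in (i) is the same deterministic constant at every date, so it cancels when we difference, and $\Delta y_t(\mu)=\Delta y_t(0)$ and $\Delta y_t^\ast(\mu)=\Delta y_t^\ast(0)$ hold path by path.

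For (iii), use biorthogonality, $\mathbf u_2^\top\mathbf v_1=0$, with $\mathbf v_1=\mathbf 1$. Under the rescaling \eqref{eq:u2_normalization} this gives $\tilde{\mathbf u}_2^\top\mathbf 1=0$. Hence $\E[\eta_t]=\tilde{\mathbf u}_2^\top\E[\boldsymbol{\epsilon}_t]=\mu\,\tilde{\mathbf u}_2^\top\mathbf 1=0$.

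I expect the only delicate point to be the distinction between the stationary regime and a settled start. Under $s_0=0$ the granular channel carries the deterministic transient $\mu\sum_{\tau<t}\varsigma^\tau$ rather than the limit $\mu/(1-\varsigma)$. Its increment $\mu\varsigma^{t-1}$ is nonzero and decays geometrically. I will state the result for the stationary law, as the lemma does, and note that the discussion following Definition~\ref{def:aggregate_output} already accounts for this vanishing term.
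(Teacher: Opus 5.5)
Your proposal is correct and follows the paper's proof essentially step for step. You use the decomposition $\boldsymbol{\epsilon}_t=\mu\mathbf 1+\tilde{\boldsymbol{\epsilon}}_t$ with $\mathbf W^{\tau}\mathbf 1=\varsigma^{\tau}\mathbf 1$ and $\boldsymbol\gamma^\top\mathbf 1=1$ for (i), differencing for (ii), and biorthogonality $\tilde{\mathbf u}_2^\top\mathbf 1=0$ for (iii), and you add the same caveat about the deterministic path $\mu\varsigma^{t-1}$ under the settled start. Your added justification for splitting the stationary series is the one point the paper leaves implicit; pathwise it holds almost surely because $\sum_{\tau}\varsigma^{\tau}\|\tilde{\boldsymbol{\epsilon}}_{t-\tau}\|_\infty$ has finite expectation.
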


\noindent\emph{Proof in Appendix~\ref{app:proof:lem:mean_invariance}.}

Part (ii) says that every volatility, tail-risk, and reversal object built from increments coincides with that of the centered system. Part (iii) says that the projected innovation is centered even when every $\epsilon_{i,t}$ has the same sign, because biorthogonality with $\mathbf v_1=\mathbf 1$ forces $\tilde{\mathbf u}_2^\top\mathbf 1=0$. Every second-moment result of the paper therefore holds verbatim for innovations supported on the positive or on the negative half-line. The variance bound of Lemma~\ref{lem:neumann_bounds_L}, the attenuation factor of Theorem~\ref{thm:attenuation}, Corollary~\ref{coro:granular_perron_rate}, Proposition~\ref{prop:singular_aggregator}, and the magnitude bound of Corollary~\ref{coro:reversal_magnitude} use only the second moments of the increments, which part (ii) leaves unchanged. The results that compare probabilities, namely the tail ordering of Lemma~\ref{lem:neumann_bounds_L}, Proposition~\ref{prop:tail_amp}, the finite-window laws of Section~\ref{sec:eigenmodes}, and the reversal frequency of Proposition~\ref{prop:reversal_frequency}, use the Gaussian shape of the increments, and finite variance alone orders neither tails nor reversals. An innovation law with an atom, for instance, makes the static difference $\eta_t-\eta_{t-1}$ vanish with positive probability while the dynamic increment, whose inherited state has no atoms, does not. The reversal probability then acquires a floor that can exceed $p(\lambda_2)$, and the tail ordering can fail at small thresholds for the same reason.

Aggregation restores Gaussianity when the transient direction is delocalized. The projected innovation $\eta_t=\tilde{\mathbf u}_2^\top\boldsymbol{\epsilon}_t$ then sums many independent one-sided shocks with mixed-sign weights, mixed because $\tilde{\mathbf u}_2^\top\mathbf 1=0$. The stationary increment $b\bigl((\lambda_2-1)s_{t-1}+\eta_t\bigr)$ is then a linear combination of the independent primitive innovations with square-summable weights. If $\max_i|\tilde u_{2,i}|\to0$ along a sequence of economies, the Lindeberg condition holds for these weights, and the Lindeberg--Feller theorem drives the joint law of $(\Delta y_t,\Delta y_t^\ast)$ to the bivariate Gaussian, hence the tail ordering to that of Section~\ref{subsec:omega_dist_dom} and the reversal frequency to $p(\lambda_2)$. The frequency departs from $p(\lambda_2)$ when the mode-level innovation is itself one-sided, which is the case when the transient direction is localized on few firms. Corollary~\ref{coro:reversal_magnitude} stands in every case, because its magnitude bound uses only Chebyshev's inequality, so reversals of any fixed size vanish in large economies whatever the sign structure of the innovations.

%====================================================
\section{Levels versus increments aggregate volatility}
\label{app:levels_vs_increments}
%====================================================

The paper measures aggregate volatility with increments of log output, which is the object of much of the empirical literature and the one that isolates the interference mechanism, since differencing subtracts the components inherited from earlier vintages. The theoretical literature on production networks often works instead with the variance of log output levels. This appendix shows that the static--dynamic comparison survives the change of definition and records what the change alters.

Recall from Section~\ref{sec:ell_to_eigenmodes} the reallocation channel and its static counterpart, whose levels are, dropping the channel superscript as in Section~\ref{sec:eigenmodes},
\begin{equation}
y_t^\ast = \frac{b}{1-\lambda_2}\,\eta_t,
\qquad
y_t = b\,s_t,
\qquad
s_t=\lambda_2 s_{t-1}+\eta_t,
\label{eq:app_levels_setup}
\end{equation}
where $\{\eta_t\}$ is centered with $\Var(\eta_t)=\sigma^2$. The static level is the level that a single vintage attains under full processing, and the resolvent gain $(1-\lambda_2)^{-1}$ enlarges it as $\lambda_2$ rises. The dynamic level carries a stock of partially processed vintages forward. Levels measure that stock directly, while increments measure how much the stock changes from one date to the next, and the inherited component that differencing subtracts grows with persistence.

\begin{lemma}[Levels--increments translation]
\label{lem:levels_increments_translation}
Maintain \eqref{eq:app_levels_setup} under the stationary law. Then
\begin{equation}
\frac{\Var(y_t)}{\Var(y_t^\ast)}
=
\frac{1-\lambda_2}{1+\lambda_2},
\qquad
\frac{\Var(\Delta y_t)}{\Var(\Delta y_t^\ast)}
=
\frac{(1-\lambda_2)^2}{1+\lambda_2}.
\label{eq:app_levels_vs_incr_ratio}
\end{equation}
\end{lemma}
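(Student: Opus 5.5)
We compute the four stationary second moments directly from the setup \eqref{eq:app_levels_setup} and take ratios. The only inputs are that $\{\eta_t\}$ is i.i.d. (independent across dates, centered, variance $\sigma^2$) and that $\lambda_2\in(0,1)$. So the argument needs no Gaussianity and no network structure beyond the scalar reduction.

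\textbf{Dynamic levels.} Under the stationary law we write $s_t=\sum_{\tau\ge0}\lambda_2^{\,\tau}\eta_{t-\tau}$, which converges in $L^2$ because $\lambda_2<1$. Independence then gives
\[
\Var(s_t)=\sigma^2\sum_{\tau\ge0}\lambda_2^{2\tau}=\frac{\sigma^2}{1-\lambda_2^2},
\qquad\text{so}\qquad
\Var(y_t)=\frac{b^2\sigma^2}{1-\lambda_2^2}.
\]

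\textbf{Static levels.} We have $y_t^\ast=b\eta_t/(1-\lambda_2)$, so $\Var(y_t^\ast)=b^2\sigma^2/(1-\lambda_2)^2$. Dividing the two level variances, and using $1-\lambda_2^2=(1-\lambda_2)(1+\lambda_2)$, gives the first ratio $(1-\lambda_2)/(1+\lambda_2)$.

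\textbf{Increments.} For the dynamic side we need the lag-one autocovariance. Since $\eta_t$ is independent of the past state $s_{t-1}$, the recursion $s_t=\lambda_2 s_{t-1}+\eta_t$ gives $\Cov(s_t,s_{t-1})=\lambda_2\Var(s_{t-1})$. Stationarity gives $\Var(s_{t-1})=\Var(s_t)$. Hence
\[
\Var(\Delta y_t)=2\Var(y_t)-2\Cov(y_t,y_{t-1})=2(1-\lambda_2)\Var(y_t)=\frac{2b^2\sigma^2}{1+\lambda_2},
\]
which agrees with \eqref{eq:var_dynamic_increment} of Theorem~\ref{thm:attenuation}. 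On the static side, $\{y_t^\ast\}$ is i.i.d., so $\Var(\Delta y_t^\ast)=2\Var(y_t^\ast)$. The increment ratio is then $(1-\lambda_2)^2/(1+\lambda_2)=\mathcal R(\lambda_2)$. Alternatively, one can simply cite Theorem~\ref{thm:attenuation} for this second identity.

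There is no real obstacle. The one point needing care is the use of the stationary (two-sided) representation rather than the settled start $s_0=0$. With the settled start the moments would depend on $t$, and the ratios hold only in the limit $t\to\infty$, which is how the lemma is stated.
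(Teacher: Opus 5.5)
Your proposal is correct and follows the paper's proof essentially step for step. Both use the stationary moving average to get $\Var(s_t)=\sigma^2/(1-\lambda_2^2)$, then the lag-one autocovariance $\lambda_2\Var(s_t)$ to get $\Var(\Delta y_t)=2(1-\lambda_2)\Var(y_t)$, and the i.i.d.\ static series for $\Var(\Delta y_t^\ast)=2\Var(y_t^\ast)$, after which both ratios are immediate.
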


\noindent\emph{Proof in Appendix~\ref{app:proof:lem:levels_increments_translation}.}

Overlap attenuates the static benchmark under both definitions, and the increment ratio, which is the attenuation factor $\mathcal R(\lambda_2)$ of Theorem~\ref{thm:attenuation}, is $(1-\lambda_2)$ times the level ratio. A level-based definition therefore changes no conclusion of the paper that rests on the gap between static and dynamic volatility. What it changes is the comparative static of the dynamic object with respect to persistence. In levels the dynamic variance $b^2\sigma^2/(1-\lambda_2^2)$ rises with $\lambda_2$, because overlap keeps old vintages in the state longer and the stock of unfinished propagation grows. In increments the dynamic variance $2b^2\sigma^2/(1+\lambda_2)$ falls with $\lambda_2$, because higher persistence makes the inherited component larger relative to the one-step update, and differencing subtracts the inherited component. The contrast is largest as $\lambda_2\uparrow1$ in the scalar model \eqref{eq:app_levels_setup}, where the static variance diverges at the rate $(1-\lambda_2)^{-2}$ in both levels and increments, the dynamic level variance diverges at the slower rate $(1-\lambda_2)^{-1}$, and the dynamic increment variance stays bounded and tends to $b^2\sigma^2$. Within the production-network model $\lambda_2=\varsigma\lambda_2(\mathbf A)<\varsigma$, so that limit is reached only jointly with $\varsigma\to1$.

%====================================================
\section{Power-law tail exponent, eigenvector, and eigenvalue}
\label{app:fat_tails_degree_approx}
%====================================================

This appendix collects the auxiliary approximations used in Section~\ref{sec:spectral_gap_mediates_fat_tails} to make the role of degree heterogeneity explicit. The exact results of Sections~\ref{sec:volatility_leontief} and \ref{sec:eigenmodes} summarize the cross-sectional structure of shocks through the left and right eigenvectors of the dominant transient mode, which determine the loading $b$ and the innovation $\eta_t$, and degree heterogeneity enters there only through the shape it gives the eigenvectors.

\smallskip
Section~\ref{sec:spectral_gap_mediates_fat_tails} treats the tail exponent $\alpha$ as an explicit primitive, and $\alpha$ can then move aggregate volatility by shifting cross-sectional exposure and by shifting persistence through $\lambda_2=\lambda_2(\alpha)$. We first state the degree-based proxy for the dominant transient direction as an explicit assumption, give the moment formulas under the cutoff normalization used in the paper, and record where the proxy fails. We then collect the theoretical links between spectral gaps, bottlenecks, and mixing that suggest why fatter tails go with slower mixing.

%----------------------------------------------------
\subsection{A degree-based proxy under truncated power laws}
%----------------------------------------------------

Section~\ref{sec:spectral_gap_mediates_fat_tails} needs the dominant transient direction as a
function of the tail exponent.  No eigenvector equation delivers that direction from a degree
distribution, so we impose a proxy as an explicit assumption, separate it from the exact moment calculation, and record where the proxy can fail.

\begin{assu}[Degree proxy for the dominant transient direction]
\label{assu:degree_proxy}
Let $\mathbf d=(d_1,\ldots,d_n)^\top$ be a degree vector with realized mean
$\bar d_n:=n^{-1}\sum_i d_i$, realized second moment $\overline{d^2}_n:=n^{-1}\sum_i d_i^2$, and
realized variance $V_n:=\overline{d^2}_n-\bar d_n^{\,2}>0$.  We take the rescaled dominant transient
eigenvectors of \eqref{eq:u2_normalization} to be
\begin{equation}
\tilde{\mathbf u}_2
=
\frac{\mathbf d-\bar d_n\mathbf 1}{\sqrt{n\,V_n}},
\qquad
\tilde{\mathbf v}_2
=
\frac{\mathbf d-\bigl(\overline{d^2}_n/\bar d_n\bigr)\mathbf 1}{\sqrt{n\,V_n}}.
\label{eq:utilde_vtilde_alpha_app}
\end{equation}
\end{assu}

\noindent
The two vectors satisfy the normalizations of \eqref{eq:u2_normalization} exactly.  The identity
$\|\tilde{\mathbf u}_2\|_2^2=\sum_i(d_i-\bar d_n)^2/(nV_n)=1$ holds by the definition of $V_n$, and
expanding the inner product gives
$\sum_i(d_i-\bar d_n)\bigl(d_i-\overline{d^2}_n/\bar d_n\bigr)=n\overline{d^2}_n-n\overline{d^2}_n-n\bar d_n^{\,2}+n\overline{d^2}_n=nV_n$,
so $\tilde{\mathbf u}_2^\top\tilde{\mathbf v}_2=1$.

The two vectors also satisfy the two constraints
that biorthogonality places on any dominant transient pair when the stationary balance vector is
proportional to the degree vector.  The left constraint $\mathbf u_2^\top\mathbf v_1=\mathbf u_2^\top\mathbf 1=0$
holds because $\tilde{\mathbf u}_2$ is centered at the realized mean, and the right constraint
$\mathbf u_1^\top\mathbf v_2=\mathbf m^{\ast\top}\mathbf v_2=0$ holds because $\tilde{\mathbf v}_2$ is centered at
the size-weighted mean $\overline{d^2}_n/\bar d_n$, since $\mathbf d^\top\tilde{\mathbf v}_2\propto
n\overline{d^2}_n-n\overline{d^2}_n=0$.  Under locally even weights, $a_{ji}\approx1/d_i$ on the links
into buyer $i$, and matched in- and out-degrees, the stationary balance vector is the degree vector,
$\mathbf A\mathbf d=\mathbf d$ and $\mathbf m^\ast=\mathbf d/(n\bar d_n)$.  The right constraint is then the one Proposition~\ref{prop:singular_aggregator} requires, namely that the loading $b=\boldsymbol\gamma^\top\tilde{\mathbf v}_2$ vanishes at size weights.  Centering both vectors at $\bar d_n$ would satisfy the left constraint and
violate the right one, giving $\mathbf m^{\ast\top}\mathbf v_2\propto V_n\neq0$ and a nonzero loading at size weights.

The two constraints are necessary conditions on the pair and nothing more. An eigenvector equation would fix the direction within the transient subspace, whereas the two centerings leave open which transient mode the centered degree vector is closest to.  The discussion after
Proposition~\ref{prop:fat_tail_moments_and_proxy} gives a family of networks satisfying both regularities on which the centered degree vector lies in modes faster than the dominant one.  The assumption is therefore a
modelling choice for the comparative statics of Section~\ref{sec:spectral_gap_mediates_fat_tails},
and none of the exact channel results of Sections~\ref{sec:volatility_leontief}
and~\ref{sec:eigenmodes} depends on it.

\begin{proposi}[Moments of a truncated power law at the natural cutoff]
\label{prop:fat_tail_moments_and_proxy}
Let degrees be drawn from the continuous Pareto density $f(x)=Cx^{-(\alpha+1)}$ on $[1,d_{\max}]$
with $\alpha>1$ and $C=\alpha/(1-d_{\max}^{-\alpha})$, and let the cutoff be $d_{\max}=n^{1/\alpha}$, so
that $d_{\max}^{-\alpha}=n^{-1}$.  For $\alpha\neq2$ the first two moments are
\begin{equation}
\E[d]\ =\ \frac{\alpha}{\alpha-1}\,\psi_n^{(1)}(\alpha),
\qquad
\E[d^2]\ =\ \frac{\alpha}{\alpha-2}\,\psi_n^{(2)}(\alpha),
\qquad
\Var(d)\ =\ \E[d^2]-\E[d]^2,
\label{eq:degree_mean_var_alpha_n_compact_app}
\end{equation}
with truncation factors
\begin{equation}
\psi_n^{(k)}(\alpha):=\frac{1-n^{(k-\alpha)/\alpha}}{1-n^{-1}},
\qquad k=1,2.
\label{eq:psi_def_app}
\end{equation}
At $\alpha=2$ the moments are $\E[d]=2(1-n^{-1/2})/(1-n^{-1})$ and $\E[d^2]=\log n/(1-n^{-1})$.  Under
Assumption~\ref{assu:degree_proxy} with these population moments in place of the realized moments,
the loading of an admissible weight vector $\boldsymbol\gamma$ is
\begin{equation}
b(\alpha)=\boldsymbol\gamma^\top\tilde{\mathbf v}_2(\alpha)
\ =\
\frac{\E_{\boldsymbol\gamma}[d]-\E[d^2]/\E[d]}{\sqrt{n\,\Var(d)}},
\qquad
\E_{\boldsymbol\gamma}[d]:=\boldsymbol\gamma^\top\mathbf d .
\label{eq:b_alpha_population_app}
\end{equation}
\end{proposi}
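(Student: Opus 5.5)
The plan is a direct computation in three steps: the normalizing constant, the two moment integrals, and the substitution of these population moments into the degree proxy.

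\emph{Normalizing constant.} The density $f(x)=Cx^{-(\alpha+1)}$ integrates to one on $[1,d_{\max}]$ when
\[
C\int_1^{d_{\max}}x^{-(\alpha+1)}\,dx=\frac{C}{\alpha}\bigl(1-d_{\max}^{-\alpha}\bigr)=1,
\]
which gives $C=\alpha/(1-d_{\max}^{-\alpha})$, as stated.

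\emph{Moment integrals.} For $k=1,2$ and $\alpha\neq k$,
\[
\E[d^k]=C\int_1^{d_{\max}}x^{k-\alpha-1}\,dx=\frac{C}{\alpha-k}\bigl(1-d_{\max}^{k-\alpha}\bigr).
\]
Substituting $C$ gives
\[
\E[d^k]=\frac{\alpha}{\alpha-k}\cdot\frac{1-d_{\max}^{k-\alpha}}{1-d_{\max}^{-\alpha}}.
\]
Under the cutoff $d_{\max}=n^{1/\alpha}$ we have $d_{\max}^{-\alpha}=n^{-1}$ and $d_{\max}^{k-\alpha}=n^{(k-\alpha)/\alpha}$. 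The second factor is then exactly $\psi_n^{(k)}(\alpha)$, which yields \eqref{eq:degree_mean_var_alpha_n_compact_app}. The case $k=1$ is always covered, since $\alpha>1$. The variance is then simply the definition $\Var(d)=\E[d^2]-\E[d]^2$.

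\emph{The case $\alpha=2$.} The first moment follows from the general formula: $2(1-n^{-1/2})/(1-n^{-1})$. The second moment is a logarithmic integral,
\[
C\int_1^{d_{\max}}x^{-1}\,dx=C\log d_{\max}=\frac{2}{1-n^{-1}}\cdot\frac{\log n}{2}=\frac{\log n}{1-n^{-1}}.
\]

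\emph{The loading.} Take the proxy $\tilde{\mathbf v}_2$ of \eqref{eq:utilde_vtilde_alpha_app} and replace $\bar d_n$, $\overline{d^2}_n$ and $V_n$ by $\E[d]$, $\E[d^2]$ and $\Var(d)$. Then
\[
\tilde{\mathbf v}_2=\frac{\mathbf d-(\E[d^2]/\E[d])\,\mathbf 1}{\sqrt{n\Var(d)}}.
\]
Apply $\boldsymbol\gamma^\top$ and use admissibility, $\boldsymbol\gamma^\top\mathbf 1=1$. The numerator becomes $\boldsymbol\gamma^\top\mathbf d-\E[d^2]/\E[d]=\E_{\boldsymbol\gamma}[d]-\E[d^2]/\E[d]$, which is \eqref{eq:b_alpha_population_app}.

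There is no real obstacle. The only points needing care are checking that $\psi_n^{(k)}$ correctly absorbs the normalizing constant, treating the $\alpha=2$ logarithmic case separately, and stating that the loading formula depends on substituting population for realized moments, which is an assumption rather than an identity. One might also note that $\Var(d)>0$ for $d_{\max}>1$, so the denominator of the loading is well defined.
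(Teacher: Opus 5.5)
Your proposal is correct and follows the paper's proof essentially step for step. Like the paper, you evaluate the two moment integrals, substitute $d_{\max}=n^{1/\alpha}$ to obtain the truncation factors, treat the second moment at $\alpha=2$ as a logarithmic integral, and read off the loading by putting the population moments into the proxy $\tilde{\mathbf v}_2$ and using $\boldsymbol\gamma^\top\mathbf 1=1$; your extra check of the normalizing constant $C$ is harmless, since the paper takes it as given.
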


\noindent\emph{Proof in Appendix~\ref{app:proof:prop:fat_tail_moments_and_proxy}.}

\smallskip
The proxy can fail because the centered degree vector need not be the dominant transient mode, and because the realized moments need not concentrate. On the first count, both regularities behind Assumption~\ref{assu:degree_proxy} can hold while the centered degree vector lies in faster modes. Let $\mathbf B$ be the adjacency matrix of a path of $n\ge3$ nodes with a self-loop at every node, so
that $\mathbf d=(2,3,\ldots,3,2)^\top$, and let $\mathbf A=\mathbf B\,\mathrm{diag}(\mathbf d)^{-1}$, which
is primitive and has matched in- and out-degrees and locally even shares.  The matrix $\mathbf A$ is
similar to the symmetric tridiagonal matrix
$\mathrm{diag}(\mathbf d)^{-1/2}\mathbf B\,\mathrm{diag}(\mathbf d)^{-1/2}$, whose off-diagonal entries
are positive.  By the oscillation theory of Jacobi matrices its eigenvalues are therefore real and
simple, and its $k$-th eigenvector, in decreasing order of eigenvalue, changes sign exactly $k-1$
times along the path.  Both $\mathbf B$ and $\mathbf d$ are invariant under reversal of the path, so every
eigenvector is either symmetric or antisymmetric under reversal, and the dominant transient
eigenvectors, which change sign once, are antisymmetric.  The degree vector is symmetric under
reversal, so both proxies in \eqref{eq:utilde_vtilde_alpha_app} are orthogonal to the dominant
transient pair and lie in the span of the reversal-symmetric transient modes, every one of which is
faster than the dominant one.  At uniform weights the true dominant loading is zero, while the proxy
gives $b=-\sqrt{V_n/n}/\bar d_n$, which is not zero.

On the second count, the realized moments in \eqref{eq:utilde_vtilde_alpha_app} can be replaced by the population moments of
Proposition~\ref{prop:fat_tail_moments_and_proxy} only when the empirical second moment
concentrates, and at the natural cutoff it does so only for $\alpha>2$.  For independent draws with
$1<\alpha<2$, $\E[d^2]\sim\frac{\alpha}{2-\alpha}n^{(2-\alpha)/\alpha}$ and
$\E[d^4]\sim\frac{\alpha}{4-\alpha}n^{(4-\alpha)/\alpha}$, so
\[
\frac{\Var\bigl(n^{-1}\sum_{i=1}^n d_i^2\bigr)}{\E[d^2]^2}
\ \longrightarrow\
\frac{(2-\alpha)^2}{\alpha(4-\alpha)}>0 ,
\]
so the largest degrees dominate the second moment, which does not concentrate.  For $\alpha>2$ the
same ratio tends to zero.  The comparative statics of
Section~\ref{sec:spectral_gap_mediates_fat_tails} therefore use the population moments as a
description of the realized moments that is exact in the limit for $\alpha>2$ and approximate below
it, while every normalization identity in \eqref{eq:utilde_vtilde_alpha_app} holds for the realized
moments at any $n$.

%----------------------------------------------------
\subsection{Relation between dominant transient eigenvalue and power-law tail exponent}
%----------------------------------------------------

When $\alpha$ is allowed to influence persistence, the comparative statics of Section~\ref{sec:spectral_gap_mediates_fat_tails} depend not only on how $\alpha$ reshapes exposure but also on how $\alpha$ reshapes the overlap spectrum through $\lambda_2=\lambda_2(\alpha)$. The theoretical connections between spectral gaps, bottlenecks, and mixing rates, together with results on large random graphs, suggest a direction for this mapping. Greater degree inequality can tighten bottlenecks, reduce the spectral gap, and therefore push $|\lambda_2|$ upward, which is slower mixing. We summarize those connections here.

\smallskip
The convergence of network propagation dynamics to long-run behavior is governed by the spectral
gap, so a larger $|\lambda_2|$ corresponds to slower mixing \citep{LevinPeresWilmer2009}.  Slow mixing is usually understood through bottlenecks. Conductance bounds and Cheeger-type inequalities relate low-conductance cuts to a small spectral gap, and hence to $|\lambda_2|$ close
to one \citep{JerrumSinclair1988,LawlerSokal1988}.  More generally, multicommodity-flow and
canonical-paths arguments show that congestion along a small set of `connector' nodes or edges can
control the worst-case mixing rate and thereby determine the slowest mode \citep{Sinclair1992}. When degree distributions become more unequal, propagation can become increasingly concentrated
around hubs, so that communication between large regions of the graph relies on a relatively small
set of connectors.  Concentration on hubs makes low-conductance bottlenecks more likely, shrinking the spectral gap
and pushing $|\lambda_2|$ upward.  For concrete results where such obstructions control mixing on
large random graphs, see \citet{BenjaminiKozmaWormald2006} and \citet{FountoulakisReed2007}.

%====================================================
\section{Transient spectrum of the reconstructed networks}
\label{app:abm_eigenvalue}
%====================================================

The closed forms of Sections~\ref{sec:volatility_leontief} to~\ref{sec:spectral_gap_mediates_fat_tails} assume that the dominant transient eigenvalue is real, positive, and simple (Assumption~\ref{assu:dominant_transient}), so that $\mathcal R(\lambda_2)$ is a number between zero and one. Directed production networks have circuits $\mathbf A$ that are not symmetric, so their transient eigenvalues need not be real and may be negative. This appendix records the realized spectrum of the reconstructed circuits of Section~\ref{sec:abm}, which fixes the calibrated range of Section~\ref{subsec:boe_vol}, and checks that the size-weighted volatility ratio is unmoved by that spectrum, as Proposition~\ref{prop:singular_aggregator} requires.

\smallskip
For each of $100$ reconstructed US networks of about $10^4$ firms we form the monetary circuit $\mathbf A$ and record its subdominant eigenvalue $\lambda_2(\mathbf A)$, the second by modulus, the Perron mode being real and equal to one. The corresponding transient rate for output is $\varsigma\lambda_2(\mathbf A)$, which is what enters $\mathcal R(\lambda_2)$.\footnote{The eigenvalue is read off the explicit spectrum of the propagation matrix via sparse Arnoldi iterations, which is feasible on the $\approx 10^4$-firm reconstructions but not on the $10^6$--$10^7$-firm operators, which we do not diagonalize. The scaling result of Section~\ref{sec:abm}, that the dynamic-to-static ratio is essentially flat in $n$, lets the robustness recorded here extrapolate to the large economy.}

Figure~\ref{fig:eigenvalue} summarizes the result. The left panel plots $\lambda_2(\mathbf A)$ in the complex plane, where the points sit overwhelmingly on the positive real axis, clustered around $0.7$--$1.0$, with only a handful off it. The right panel histograms $\mathrm{Re}\,\lambda_2(\mathbf A)$, which is positive for almost all networks. Across the $100$ networks, $\lambda_2(\mathbf A)$ is complex in only $5$ and has negative real part in only $7$, so the configuration assumed for the closed forms holds in almost every reconstruction. For the $86$ networks in the cluster, the implied output transient rate at $\varsigma=0.8$ lies in $[0.56,0.80)$, which is the range used for the channel rates in Section~\ref{subsec:boe_vol}. The remaining $14$ have a smaller positive or a negative subdominant eigenvalue. Primitivity of $\mathbf A$ excludes $\lambda_2(\mathbf A)=1$, so the rate is strictly below $\lambda_1=\varsigma$ as Assumption~\ref{assu:dominant_transient} requires.

\begin{figure}[H]
\centering
\includegraphics[width=\textwidth]{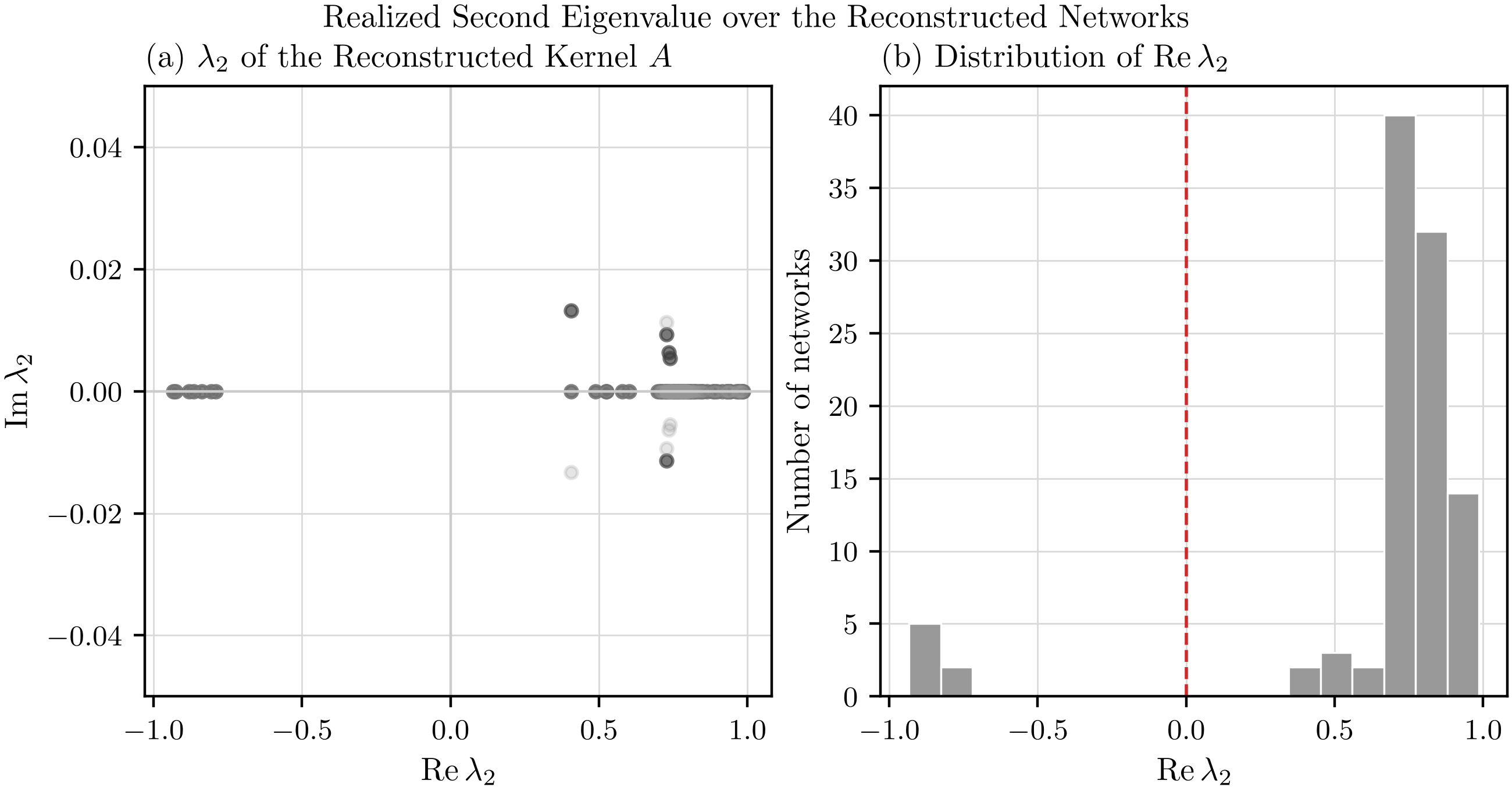}
\caption{Realized subdominant eigenvalue $\lambda_2(\mathbf A)$ of the monetary circuit across $100$
reconstructed US networks ($\approx 10^4$ firms).  Left: $\lambda_2(\mathbf A)$ in the complex plane (the panel labels $\mathbf A$ the reconstructed kernel), sitting largely on the positive real axis and clustered near $0.7$--$1.0$.  Right: histogram of
$\mathrm{Re}\,\lambda_2(\mathbf A)$, positive for almost all networks ($5$ of $100$ are complex, $7$
have negative real part).  The transient rate for output is $\varsigma$ times these values.}
\label{fig:eigenvalue}
\end{figure}

Figure~\ref{fig:agg_vol_robust} reports the corresponding volatilities, dynamic against static aggregate volatility, one point per network, colored by whether $\lambda_2(\mathbf A)$ is real or complex. Every point lies well below the $45^\circ$ line $\sqrt{\phi}=\sqrt{\phi^\ast}$, with the dynamic-to-static volatility ratio at most $0.166$ and median $0.148$, whether the eigenvalue is real or complex. The ratio is unchanged by the sign and phase of $\lambda_2$ because of the weights, since under size weighting Proposition~\ref{prop:singular_aggregator} removes every transient mode, real or complex, so the ratio cannot depend on where $\lambda_2$ sits in the unit disk. The ordering therefore survives complex and negative subdominant eigenvalues.

\begin{figure}[H]
\centering
\includegraphics[width=0.7\textwidth]{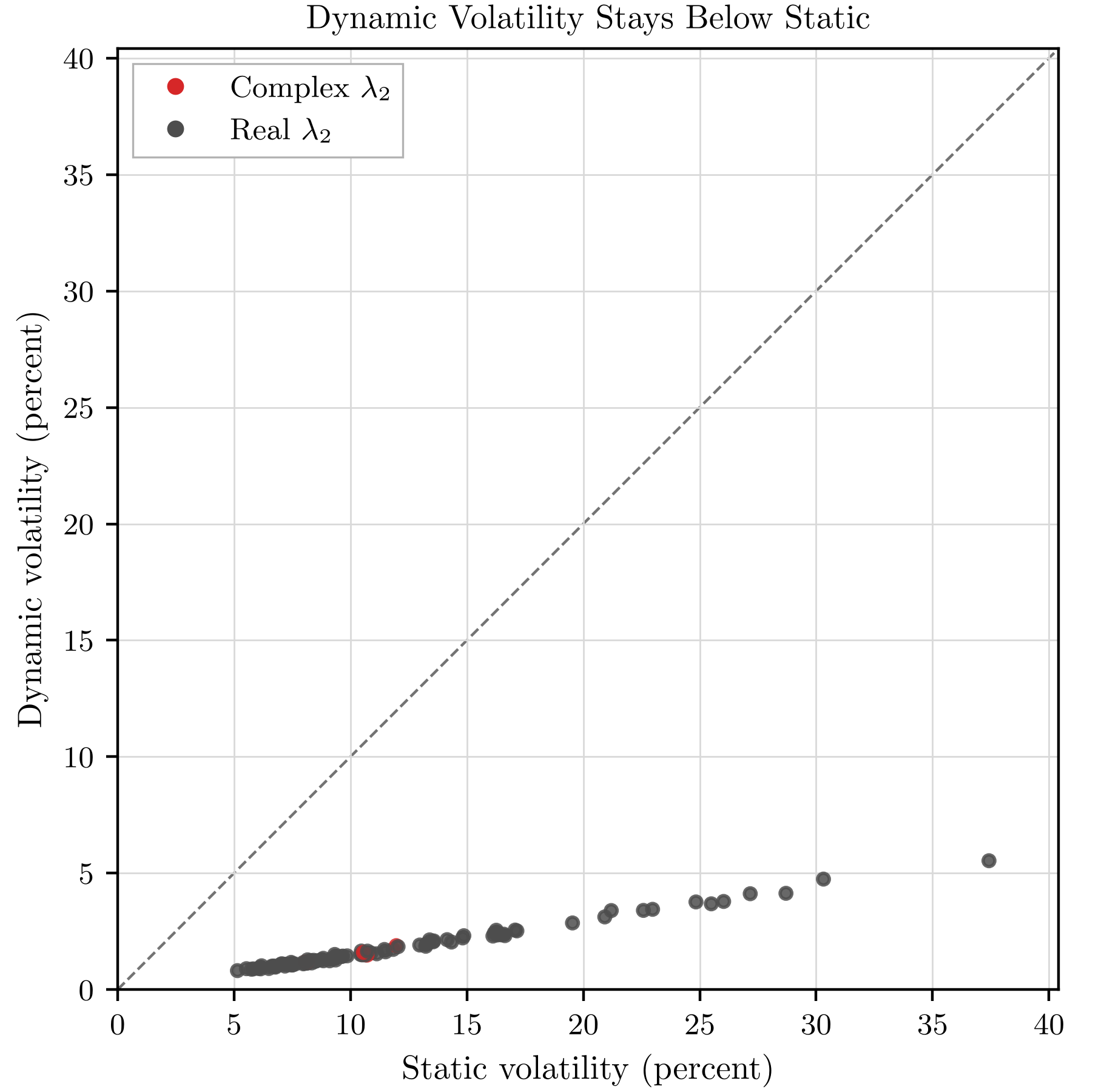}
\caption{Dynamic versus static aggregate volatility, one point per reconstructed US network
($\approx 10^4$ firms, $100$ networks), colored by whether the realized $\lambda_2(\mathbf A)$ is real
or complex.  All points lie below the diagonal $\sqrt{\phi}=\sqrt{\phi^\ast}$ (ratio at most $0.166$, median
$0.148$).}
\label{fig:agg_vol_robust}
\end{figure}

The robustness is specific to the fixed-operator environment of this paper. When the propagation operator is itself endogenous to prices, as under CES technologies where the network is remade each period, the sign and phase of $\lambda_2$ can govern whether overlapping responses cancel or reinforce, and the static--dynamic comparison is correspondingly more delicate.

%====================================================
\section{Notation}
\label{app:notation}
%====================================================

This appendix collects the symbols used throughout the paper.  As a general convention, scalars are
lowercase ($x$), vectors bold lowercase ($\mathbf x$), and matrices bold uppercase ($\mathbf X$).
Here $X^t$ denotes a power while $X_t$ and $X^{[t]}$ denote indexing.  We condition on the production
network throughout and suppress this conditioning in the notation.

\medskip
\noindent\textbf{Indices, sets, and horizons.}\par\smallskip
\begin{tabularx}{\linewidth}{@{}l@{\quad}>{\raggedright\arraybackslash}X@{}}
$N$ & set of firms, $N=\{1,\ldots,n\}$, with $n$ the number of firms \\
$i,j$ & firm indices (buyer $i$, supplier $j$) \\
$t$ & calendar date \\
$\tau$ & age of a vintage, the number of dates since a shock arrived \\
$\ell,\ \mathcal L$ & network layer (processing depth); $\mathcal L$ is the per-date depth of the $\mathcal L$-economy \\
$r$ & eigenmode index ($r=1$ Perron, $r=2$ dominant transient) \\
$T$ & sample window length for realized (finite-horizon) statistics \\
\end{tabularx}

\medskip
\noindent\textbf{Primitives and the economic environment.}\par\smallskip
\begin{tabularx}{\linewidth}{@{}l@{\quad}>{\raggedright\arraybackslash}X@{}}
$\mathbf A$ & input-share matrix, $\mathbf A=(a_{ji})$, column-stochastic; $a_{ji}\ge0$ is the share of firm $i$'s spending on input $j$, with $\sum_j a_{ji}=1$ \\
$\varsigma$ & returns to scale, common across firms, $\varsigma\in(0,1)$; the input elasticities sum to $\varsigma$ \\
$\boldsymbol\gamma$ & aggregation weights, an admissible weight vector, $\boldsymbol\gamma\ge0$, $\boldsymbol\gamma^\top\mathbf 1=1$; $\boldsymbol\gamma=\mathbf m^\ast$ are the size weights \\
$\mathbf m_t,\ \mathbf m^\ast$ & nominal balances and their stationary distribution, $\mathbf A\mathbf m^\ast=\mathbf m^\ast$, $\mathbf 1^\top\mathbf m^\ast=1$; $\|\mathbf m^\ast\|_2^2$ is the size Herfindahl \\
$\mathcal D_{i,t}$ & nominal demand faced by firm $i$ at date $t$, $\mathcal D_{i,t}=\sum_j a_{ij}m_{j,t}$ \\
$\boldsymbol\epsilon_t$ & log-productivity innovation at date $t$, $\boldsymbol\epsilon_t=(\epsilon_{i,t})$; i.i.d.\ across $t$ with $\boldsymbol\epsilon_t\sim\mathcal N(\mathbf 0,\sigma^2\mathbf I)$ (Assumption~\ref{assu:innovations}) \\
$\sigma^2$ & per-firm innovation variance \\
$z_{i,t}$ & Hicks-neutral productivity of firm $i$, $z_{i,t}=e^{\epsilon_{i,t}}$ \\
$q_{i,t},\ x_{ji,t}$ & firm $i$'s output and its use of intermediate input $j$ \\

$\mathbf p_t$ & price vector, $p_{i,t}=\mathcal D_{i,t}/q_{i,t}$ \\

$\mathbf b_0$ & deterministic constant vector $\mathbf b_0(\mathbf A,\varsigma)$ in the log-output recursion \\
\end{tabularx}

\medskip
\noindent\textbf{Propagation operator and spectrum.}\par\smallskip
\begin{tabularx}{\linewidth}{@{}l@{\quad}>{\raggedright\arraybackslash}X@{}}
$\mathbf W$ & production-network propagation operator, $\mathbf W=\varsigma\mathbf A^\top$, with $\rho(\mathbf W)=\lambda_1=\varsigma<1$ \\

$(\mathbf I-\mathbf W)^{-1}$ & Leontief inverse (resolvent), $\sum_{\ell\ge0}\mathbf W^\ell$ \\
$\lambda_r,\mathbf v_r,\mathbf u_r$ & eigenvalue with right and left eigenvectors, normalized $\mathbf u_r^\top\mathbf v_s=\delta_{rs}$ (biorthogonality) \\
$\lambda_1$ & Perron eigenvalue, $\lambda_1=\varsigma$, with $\mathbf v_1=\mathbf 1$ and $\mathbf u_1=\mathbf m^\ast$ \\
$c_r,\ \xi_{r,t},\ s_{r,t}$ & aggregate loading, modal innovation, and modal state of mode $r$: $c_r=\boldsymbol\gamma^\top\mathbf v_r$ (with $c_1=1$ for every admissible weight vector), $\xi_{r,t}=\mathbf u_r^\top\boldsymbol\epsilon_t$, $s_{r,t}=\lambda_r s_{r,t-1}+\xi_{r,t}$ \\
$\mathbf M_r$ & coefficient vector of mode $r$ in Theorem~\ref{thm:aggregator_geometry}, $\mathbf M_r=c_r\mathbf u_r$ \\
$\lambda_2(\mathbf A)$ & subdominant eigenvalue of the monetary circuit; $\lambda_2=\varsigma\lambda_2(\mathbf A)$ \\
$\lambda_2$ & dominant transient eigenvalue, $\max\{|\lambda_r|:r\ge2\}=\varsigma\lambda_2(\mathbf A)$; real, positive, simple (Assumption~\ref{assu:dominant_transient}) \\
$\lambda_3$ & next modulus, $\max\{|\lambda_r|:r\ge3\}$; $\lambda_3<\lambda_2$ by Assumption~\ref{assu:dominant_transient} \\
$\mathbf P_r$ & spectral projector of mode $r$, $\mathbf P_r=\mathbf v_r\mathbf u_r^\top$; $\mathbf P_r\mathbf P_s=\delta_{rs}\mathbf P_r$ \\
$\tilde{\mathbf u}_2,\ \tilde{\mathbf v}_2$ & rescaled dominant-transient eigenvectors, $\tilde{\mathbf u}_2=\mathbf u_2/\|\mathbf u_2\|$, $\tilde{\mathbf v}_2=\|\mathbf u_2\|\mathbf v_2$ \\
\end{tabularx}

\medskip
\noindent\textbf{Aggregate output and reduced-form state.}\par\smallskip
\begin{tabularx}{\linewidth}{@{}l@{\quad}>{\raggedright\arraybackslash}X@{}}
$y_t$ & dynamic aggregate output (shock-driven component of $\boldsymbol\gamma$-weighted log output) \\
$y_t^\ast$ & static (fully processed) aggregate output, $y_t^\ast=\boldsymbol\gamma^\top(\mathbf I-\mathbf W)^{-1}\boldsymbol\epsilon_t$ \\
$\mathbf S_{\mathcal L}$ & block operator of the first $\mathcal L$ layers, $\mathbf S_{\mathcal L}=\sum_{\ell<\mathcal L}\mathbf W^\ell$, with $\mathbf S_1=\mathbf I$ and $\sum_{\tau\ge0}\mathbf W^{\tau\mathcal L}\mathbf S_{\mathcal L}=(\mathbf I-\mathbf W)^{-1}$ \\
$y^\ast_{\mathcal L}$ & depth-$\mathcal L$ truncation of the fully processed mapping, $\boldsymbol\gamma^\top\mathbf S_{\mathcal L}\boldsymbol\epsilon$ \\
$y_t^{[\mathcal L]}$ & aggregate output in the $\mathcal L$-economy ($\mathcal L$ layers processed per date); a vintage of age $\tau$ contributes $\boldsymbol\gamma^\top\mathbf W^{\tau\mathcal L}\mathbf S_{\mathcal L}\boldsymbol\epsilon$ \\
$Y_t$ & aggregate quantity index, $\log Y_t=\sum_i\gamma_i\log q_{i,t}$ up to a constant \\
$\Delta y_t$ & one-step increment (growth rate), $y_t-y_{t-1}$; likewise $\Delta y_t^\ast,\ \Delta y_t^{[\mathcal L]}$ \\
$g_t$ & granular (Perron) innovation, $g_t=\mathbf m^{\ast\top}\boldsymbol\epsilon_t$; variance $\sigma_g^2=\sigma^2\|\mathbf m^\ast\|_2^2$ \\
$\eta_t$ & dominant-mode (reallocation) innovation, $\eta_t=\tilde{\mathbf u}_2^\top\boldsymbol\epsilon_t$; $\Var(\eta_t)=\sigma^2$ \\
$b$ & aggregate loading (visibility) of the dominant transient mode, $b=\boldsymbol\gamma^\top\tilde{\mathbf v}_2$ \\
$s_t$ & dominant-mode state, $s_t=\lambda_2 s_{t-1}+\eta_t$, $s_0=0$; $y_t^{\mathrm{re}}= b\,s_t$ \\
$y_t^{\mathrm{gr}},\ y_t^{\mathrm{re}}$ & granular channel and reallocation channel, the $r=1$ term and the rescaled $r=2$ term of \eqref{eq:y_perron_transient} \\
$\boldsymbol\eta,\ \mathbf s,\ \mathbf y,\ \mathbf y^\ast$ & innovations, transient state, and dynamic and static output over a window of length $T$, as vectors in $\R^T$ \\
$\mathbf K$ & overlap operator, $[\mathbf K]_{tq}=\lambda_2^{\,t-q}\mathbf 1_{\{q\le t\}}$; $\mathbf s=\mathbf K\boldsymbol\eta$ \\
$\mathbf D_T$ & first-difference matrix, of size $(T-1)\times T$ \\
$\mathbf M^\ast$ & static weighting matrix, $\mathbf M^\ast=\mathbf D_T\mathbf D_T^\top$; eigenvalues $\{\nu_j^\ast\}$ \\
$\mathbf M$ & dynamic weighting matrix, $\mathbf M=\mathbf D_T\mathbf K\mathbf K^\top\mathbf D_T^\top$; eigenvalues $\{\nu_j\}$ \\
$\nu_{(j)},\ \nu^\ast_{(j)}$ & the eigenvalues of $\mathbf M$ and $\mathbf M^\ast$ in descending order \\
$Z_j$ & independent standard normal variables in the weighted chi-square representations \\
\end{tabularx}

\medskip
\noindent\textbf{Volatility and tail-risk objects.}\par\smallskip
\begin{tabularx}{\linewidth}{@{}l@{\quad}>{\raggedright\arraybackslash}X@{}}
$\phi^\ast,\ \phi$ & population static and dynamic increment variances, $\lim_t\Var(\Delta y_t^\ast)$ and $\lim_t\Var(\Delta y_t)$; likewise $\phi^{[\mathcal L]}$ in the $\mathcal L$-economy \\
$\hat\phi^\ast,\ \hat\phi$ & realized (finite-$T$) sample volatilities \\
$\omega_c^{[\mathcal L]}$ & lower-tail probability $\Pr(\Delta y_t^{[\mathcal L]}<-c)$ in the $\mathcal L$-economy \\
$\omega_c^\ast,\ \omega_c$ & population lower-tail probabilities $\Pr(\Delta y_t^\ast<-c)$, $\Pr(\Delta y_t<-c)$ at threshold $c>0$ \\
$\hat\omega_c^\ast,\ \hat\omega_c$ & realized lower-tail frequencies over the window \\
$\mathcal R(\cdot)$ & attenuation factor $\mathcal R(\lambda)=(1-\lambda)^2/(1+\lambda)$ (fraction of static increment variance surviving overlap); reallocation $\mathcal R(\lambda_2)$, granular $\mathcal R(\lambda_1)=(1-\varsigma)^2/(1+\varsigma)$ \\
$\kappa$ & timing wedge of the reallocation channel, $\kappa=\phi/\phi^\ast\in(0,1)$; equals $\mathcal R(\lambda_2)$ under the dominant-transient reduction \\
$x$ & standardized threshold in Proposition~\ref{prop:tail_amp}, $x=c/\sqrt{\phi}$ \\
$p(\lambda_2)$ & reversal frequency $\Pr(|\Delta y_t|>|\Delta y_t^\ast|)$ of the channel (Proposition~\ref{prop:reversal_frequency}) \\
$\Phi$ & standard normal cumulative distribution function \\
\end{tabularx}

\medskip
\noindent\textbf{Degree distribution and power-law tail (Section~\ref{sec:spectral_gap_mediates_fat_tails}, Appendix~\ref{app:fat_tails_degree_approx}).}\par\smallskip
\begin{tabularx}{\linewidth}{@{}l@{\quad}>{\raggedright\arraybackslash}X@{}}
$\alpha$ & power-law tail exponent of the degree distribution (smaller $\alpha$ means fatter tails) \\
$\mathbf d$ & degree vector, $\mathbf d=(d_i)$; $\E[d],\ \Var(d)$ its mean and variance \\
$d_{\max}$ & degree cutoff, $d_{\max}\approx n^{1/\alpha}$ \\
$\psi_n^{(k)}(\alpha)$ & truncation factor, $\psi_n^{(k)}(\alpha)=(1-n^{(k-\alpha)/\alpha})/(1-n^{-1})$, $k=1,2$ \\
$b(\alpha),\ \lambda_2(\alpha)$ & loading and dominant transient eigenvalue as functions of the tail exponent \\
$\E_{\boldsymbol\gamma}[d],\ \Delta(\alpha)$ & mean degree under the observer's weights, $\boldsymbol\gamma^\top\mathbf d$, and the misalignment gap $\Delta(\alpha)=\E_{\boldsymbol\gamma}[d]-\E[d^2]/\E[d]$ \\
$\theta$ & tilt of the weight family $\gamma_i\propto d_i^{\,\theta}$ ($\theta=1$ gives size weights under the degree proxy) \\
$e(\alpha,\theta)$ & growth exponent of the misalignment gap in $n$ for $\alpha>2$ and $\theta>\alpha-1$ \\
\end{tabularx}

\medskip
\noindent\textbf{General mathematical notation.}\par\smallskip
\begin{tabularx}{\linewidth}{@{}l@{\quad}>{\raggedright\arraybackslash}X@{}}
$\E,\ \Var,\ \Cov$ & expectation, variance, covariance \\
$\mathbf 1$ & all-ones vector in $\R^n$ \\
$\rho(\cdot)$ & spectral radius \\
$\|\cdot\|_1,\ \|\cdot\|_2$ & $\ell_1$ and $\ell_2$ (Euclidean) norms; for a matrix, $\|\cdot\|_2$ is the operator norm \\
$\mathbf L$ & lower shift on $\R^T$, $[\mathbf L]_{tq}=\mathbf 1\{t=q+1\}$ \\
$\preceq$ & Loewner order on symmetric matrices \\
$\overset{d}{=}$ & equality in distribution \\
$\succeq_{\mathrm{FOSD}}$ & first-order stochastic dominance \\
\end{tabularx}

\end{document}